\documentclass[11pt]{article}

\usepackage{amsmath,amsthm,amsfonts,amscd,latexsym,amssymb,amsbsy,dsfont,mathrsfs,color}
\usepackage[small, centerlast, it]{caption}
\usepackage{epsfig}
\usepackage[titles]{tocloft}
\usepackage[latin1]{inputenc}
\usepackage{verbatim} 
\usepackage[active]{srcltx} 
\usepackage{fancyhdr}
\usepackage{caption}
\usepackage{enumerate}
\usepackage{color}
\usepackage{hyperref}

\definecolor{NiColor}{RGB}{77,77,255}
\definecolor{NiColoRed}{RGB}{255,77,77}
\definecolor{NiCitation}{RGB}{77,255,77}


\oddsidemargin 0cm      
\evensidemargin 0cm     
\headsep 20pt            
\textheight 21.5cm        
\textwidth 16cm         
\def\sp{\hskip -5pt}
\def\spa{\hskip -3pt}

\newsymbol\rest 1316    
\def\emptyset{\varnothing} 
\def\b1{{1\!\!1}}

\def\cB{\mathscr{B}}

\def\cD{{\ca D}}
\def\cF{{\ca F}}

\def\cL{\mathscr{L}}

\def\cN{{\ca N}}

\def\cS{\mathscr{S}}
\def\cE{\mathscr{E}}

\def\sK{{\mathsf K}}

\def\sH{{\mathsf H}}

\def\bC{{\mathbb C}}           

\def\bI{{\mathbb I}}

\def\bN{{\mathbb N}}

\def\bR{{\mathbb R}}

\def\gA{{\mathfrak A}}       
\def\gB{{\mathfrak B}}

\def\beq{\begin{eqnarray}}
\def\eeq{\end{eqnarray}}

\newcommand{\ca}[1]{{\cal #1}}         


\usepackage{sectsty}
\sectionfont{\normalsize}
\subsectionfont{\normalsize\normalfont\itshape}
\setlength{\cftbeforesecskip}{0.3mm}



\newtheoremstyle{TheoremStyle}
{3pt}
{3pt}
{\slshape}
{}
{\bf}
{:}
{.5em}
{}

\theoremstyle{TheoremStyle}
\newtheorem{theorem}{Theorem}
\newtheorem{corollary}[theorem]{Corollary}
\newtheorem{proposition}[theorem]{Proposition}
\newtheorem{lemma}[theorem]{Lemma}
\newtheorem{definition}[theorem]{Definition}
\newtheorem{remark}[theorem]{Remark}
\newtheorem{example}[theorem]{Example} 

%


\begin{document}


\par
\bigskip
\large
\noindent
{\bf The notion of observable and the moment problem for $*$-algebras and their GNS representations}
\bigskip
\par
\rm
\normalsize


\noindent  {\bf Nicol\`o Drago$^{a}$}, {\bf Valter Moretti$^{b}$}\\
\par

\noindent 
  Department of  Mathematics, University of Trento, and INFN-TIFPA \\
 via Sommarive 14, I-38123  Povo (Trento), Italy.\\
$^a$nicolo.drago@unitn.it,  $^b$valter.moretti@unitn.it\\

 \normalsize

\par

\rm\normalsize

\noindent {\small January, 29, 2020}

\rm\normalsize


\par

\begin{abstract}
We address some usually overlooked issues concerning the  use of $*$-algebras in quantum theory and their  physical interpretation.
If $\mathfrak{A}$ is a  $*$-algebra describing a quantum system and $\omega: \mathfrak{A}\to\mathbb{C}$ a state, 
we focus  in particular on  the interpretation of $\omega(a)$ as expectation value for an algebraic observable  $a=a^*\in\mathfrak{A}$, studying 
the problem of finding a probability measure reproducing the moments $\{\omega(a^n)\}_{n\in\mathbb{N}}$.
	This problem enjoys a close relation with the selfadjointness of the (in general only symmetric) operator $\pi_\omega(a)$ 
in the GNS representation of  $\omega$ and thus  it has important consequences for the interpretation
 of $a$ as an  observable.
	We provide physical examples (also from QFT) where the moment problem for $\{\omega(a^n)\}_{n\in\mathbb{N}}$ does 
not admit a unique solution.
	To reduce this ambiguity, we consider the moment problem for the sequences $\{\omega_b(a^n)\}_{n\in\mathbb{N}}$, 
being $b\in\mathfrak{A}$ and $\omega_b(\cdot):=\omega(b^*\cdot b)$.
	Letting $\mu_{\omega_b}^{(a)}$ be a solution of the moment problem for the sequence $\{\omega_b(a^n)\}_{n\in\mathbb{N}}$,
 we introduce a consistency relation on the family $\{\mu_{\omega_{b}}^{(a)}\}_{b\in\mathfrak{A}}$.
	We prove a 1-1 correspondence between consistent families $\{\mu_{\omega_{b}}^{(a)}\}_{b\in\mathfrak{A}}$ and positive 
operator-valued measures (POVM) associated with the symmetric operator $\pi_\omega(a)$. 
	In particular there exists a unique consistent family of $\{\mu_{\omega_{b}}^{(a)}\}_{b\in\mathfrak{A}}$ if and only
 if $\pi_\omega(a)$ is maximally symmetric.
	This result suggests that a better physical understanding of the notion of  observable for general $*$-algebras 
should be based on POVMs rather than projection-valued measure (PVM).
\end{abstract}
\tableofcontents

\section{Introduction}\label{Sec: Introduction}

Physical observables in {\em quantum theory} -- namely the physical properties of the system under investigation -- can be conveniently described in the algebraic framework by a suitable $C^*$-algebra $\mathfrak{A}$. (The standard Hilbert space formulation is included   referring to the von Neumann algebra $\mathfrak{A}=\mathfrak{B}(\mathsf{H})$.) There is however 
a technically less rigid formulation where  the algebra $\mathfrak{A}$ is  no longer  a $C^*$-algebra but it is a less tamed  {\em $*$-algebra}. This second approach seems to be strictly necessary in {\em quantum field theory} (QFT) both in constructive \cite{Streater-Wightman-00} and perturbative \cite{Haag} approaches, also in curved spacetime and especially in presence of self interaction.
The $C^*$-algebraic setting is not at disposal there, since  almost all relevant operators are no longer bounded in any reasonable $C^*$- norm (see however \cite{Buchholz-Fredenhagen-19} as a recent new viewpoint on the subject). In particular,  locally-covariant  renormalization procedures are performed within the $*$-algebraic formulation (see \cite{book} for a recent review).

When dealing with unital $*$-algebras  for describing quantum systems, the notion of (quantum) {\em observable} has a more delicate status than in the $C^*$-algebra formulation  which, in our view,  has not yet received sufficient attention in the literature.

First of all, contrarily to the standard requirements of quantum theory, it is generally false that every Hermitian  element $a=a^* \in \gA$ -- to which we shall refer to as an {\em algebraic} observable -- is represented by an (essentially) selfadjoint operator $\pi_\omega(a)$ -- henceforth called {\em quantum} observable -- in a given GNS representation induced by an algebraic state $\omega :\gA \to \bC$.
In general, $\pi_\omega(a)$ results to be symmetric admitting many or none selfadjoint extensions -- \textit{cf.} example \ref{Ex: algebraically self-adjoint observables which are not self-adjoint as operators}.

This problem is actually  entangled  with  another  issue concerning  the standard physical interpretation of $\omega(a)$ as {\em expectation value} of the algebraic observable $a$ in the state $\omega$.
This interpretation  is usually  considered folklore without a critical discussion as it should instead deserve.
To be operationally effective, that interpretation would need a {\em probability distribution} $\mu_{\omega}^{(a)}$ over $\bR$ -- which may be provided by the spectral measure of $\overline{\pi_\omega(a)}$ if the latter is selfadjoint, \textit{i.e.} when the {\em algebraic} observable $a$ defines a {\em quantum} observable in the GNS representation of an algebraic state.

Independently of  the existence of a spectral measure, the problem of finding $\mu_\omega^{(a)}$ can be tackled in the framework of the  more general {\em Hamburger moment problem}, looking for a probability measure whose moments coincide to the known values $\{\omega(a^n)\}_{n\in\mathbb{N}}$.
However, in the general case, there are many such measures for a given pair $(a,\omega)$, independently of the fact that  $\pi_\omega(a)$ admits one, many or none selfadjoint extensions -- \textit{cf.} example \ref{Ex: uniqueness of moment problem and self-adjointness of the operator are not related}.
Therefore a discussion on the possible physical meaning of these measures seems to be  necessary. 

We stress that all these problems are proper of the $*$-algebra approach whereas they are almost automatically solved when the structure is enriched to a $C^*$-algebra.
For instance the above-mentioned spectral measure always exists when $\gA$ is a $C^*$-algebra,   since $\pi_\omega(a)$ is always selfadjoint if $a=a^*$ in that case.
Stated differently, the notion of algebraic and quantum observables always agree  in the $C^*$-algebraic setting.
However as already said, $*$-algebras are more useful in real applications  especially in QFT and, in that sense, they are  more close to physics than $C^*$-algebras.

The goal of this work is to analyze the above mentioned issues also producing some examples and counterexampled arising from elementary formalism of {\em quantum mechanics} (QM) and QFT.
Even if  these problems cannot be avoided, we  present some partially positive results.

The technical  objects of our discussion will consist of a $*$-algebra $\gA$, an Hermitian element $a$, the class of deformed states 
$\gA \ni c \mapsto \omega_b(c) := \omega(b^* c b)/\omega(b^*b)$, with $b\in \gA$, constructed out of a given initial state $\omega$.
We also consider families of measures $\{\mu_{\omega_{b}}^{(a)}\}_{b\in\mathfrak{A}}$ over $\bR$ such that $\mu_{\omega_{b}}^{(a)}$ solve the corresponding Hamburger moment problem for the moments $\{\omega_b(a^n)\}_{n\in\mathbb{N}}$.

As a first result, we establish that if the measures $\mu_{\omega_b}^{(a)}$ are uniquely determined for every $b$ and a fixed pair $(a, \omega)$, then $\overline{\pi_\omega(a)}$ and every $\overline{\pi_{\omega_b}(a)}$ are selfadjoint.
This provides a sufficient condition for which an algebraic observable defines quantum observables referring to a certain class of algebraic states.
The converse statement is however false -- \textit{cf.} example \ref{Ex: uniqueness of moment problem and self-adjointness of the operator are not related} -- as it follows from elementary models in QM and QFT.

As a second, more elaborated, result we prove that for fixed $a^*=a\in \gA$ and $\omega$, when assuming some physically natural coherence constraints on the measures $\mu_{\omega_b}^{(a)}$ varying $b\in \gA$, the admitted  families of constrained  measures $\{\mu^{(a)}_{\omega_b}\}_{b\in \gA}$ are  one-to-one with all possible positive operator-valued measures (POVMs) associated to the symmetric operator $\pi_\omega(a)$ through  Naimark's decomposition procedure for symmetric operators.
We shall refer to each of these POVM as a {\em generalized} observable.
Therefore, $\overline{\pi_\omega(a)}$ is maximally symmetric if and only if  there is only one such measure for every fixed $b$ -- that is, there is a unique generalized observable associated with the given algebraic observable through the GNS representation of $\omega$.
These unique measures are those induced by the unique POVM of $\overline{\pi_\omega(a)}$, which is a projection-valued measure,  if the operator is selfadjoint.
\paragraph{Structure of this work.} 
This paper is organized as follows.
Section \ref{Sec: Issues in the interpretation of a as an observable} discuss in detail the issues arising in the interpretation of algebraic  observables  $a=a^*\in\mathfrak{A}$ of a $*$-algebra $\mathfrak{A}$ as quantum observables in their GNS representations.
Section \ref{Sec: Selfadjointness of pi(a) and uniqueness of moment problems} contains a first result concerning essential selfadjointness of 
$\pi_\omega(a)$ when the measures solving the moment problem for every
deformed state are unique.
Section \ref{Sec: The notion of POVM and its relation with symmetric operators} contains a recap of the basic theory of POVMs and the theory of generalized selfadjoint extensions of symmetric operators -- some complements appear also in  appendix \ref{Appendix: on POVM}.
Section \ref{Sec: Generalized observable pi(a) and expectation-value interpretation of omega(a)} is the core of the work where are established the main theorems arising from  the two issues discussed in the introduction.
Section \ref{Sec: Conclusions and open problems} offers a summary of the results established in the paper and some open issue.
Appendix \ref{Appendix: on POVM} includes some complements about reducing subspaces, generalized selfadjoint extensions of symmetric operators and present  the proofs of some technical propositions.

\paragraph{Notation and conventions.}
We adopt throughout the paper  the standard definition of {\bf complex measure} \cite{Rudin87}: a map
 $\mu :\Sigma \to \bC$ which is  {\em unconditionally $\sigma$-additive}  over the $\sigma$-algebra $\Sigma$.  With this definition 
the {\em total variation} $||\mu|| :=|\mu|(\Sigma)$ turns out to be  finite. \\
We adopt standard  notation and definitions and, barring the symbol of the adjoint operator and that of scalar product, they are the same as in \cite{Moretti2017}. In particular,
an operator in  a Hilbert space $\sH$ is indicated  by $A: D(A) \to \sH$, or simply $A$, where the domain  $D(A)$ is always supposed to be a linear subspace of $\sH$.
The scalar product $\langle x|y\rangle$ of a Hilbert space is supposed to be antilinear in the {\em left} entry.
$\gB(\sH)$  denotes the $C^*$-algebra of bounded operators $A$  in Hilbert space $\sH$ with $D(A)=\sH$. $\cL(\sH)$ indicates the  lattice of orthogonal projectors over the Hilbert space $\sH$. The adjoint of an operator $A$ in a Hilbert space is always denoted by $A^\dagger$, while the symbol $a^*$ indicates the adjoint of an element $a$ of a $*$-algebra. A representation of unital $*$-algebras is supposed to preserve the identity.

The closure of a closable operator $A: D(A) \to \sH$ is indicated by $\overline{A}$.
If $A:D(A) \to \sH$ and $B: D(B) \to \sH$ we assume the usual convention concerning standard domains, in particular:
(i) $D(BA) := \{\psi \in D(A) \:|\: A\psi \in D(B)\}$;
(ii) $D(A+B):= D(A)\cap D(B)$;
(iii) $D(aA):=D(A)$ for $a\in \bC \setminus \{0\}$ and $D(0A):= \sH$.
The symbol $A \subset B$ permits the case $A=B$. If $A$ and $B$ are operators $A \subset B$ means that $D(A) \subset D(B)$ and $B\sp\rest_{D(A)}=A$.
An operator $A$ in  a Hilbert space $\sH$ is said to be  {\bf Hermitian} if $\langle Ax|y\rangle = \langle x|Ay\rangle$ for every $x,y \in D(A)$. A Hermitian operator $A$  is {\bf symmetric} if $D(A)$ is dense in $\sH$ (equivalently, $A \subset A^\dagger$). A symmetric operator is {\bf selfadjoint} if $A=A^\dagger$, {\bf essentially selfadjoint}
if it admits a unique selfadjoint extension (equivalently, if $\overline{A}$ is selfadjoint), in this case $\overline{A}$ is the unique selfadjoint extension of $A$.
A {\bf conjugation} in the Hilbert space $\sH$ is an {\em antilinear} isometric map $C:\sH \to \sH$
such that $CC = I$, where $I$ always denotes the {\bf identity operator} $I :\sH \ni x \mapsto x \in \sH$.

\section{Issues in the interpretation of $a=a^*\in\mathfrak{A}$ as an observable}
\label{Sec: Issues in the interpretation of a as an observable}

This section has the twofold goal of presenting the problems discussed within this work and introducing part of the mathematical machinery used in the rest of the paper.
For the  generally used notation and conventions not directly explained in the text\footnote{When a  new term is introduced and defined, its name  appears in {\bf boldface} style.}, see Section \ref{Sec: Introduction}.

\subsection{$*$-algebras, states, GNS construction}\label{Sec: *-algebras, states, GNS construction}
There are at least two approaches to quantum theories. The most known is the Hilbert-space formulation where the physical observables of a quantum system are represented by {\bf quantum observables}, namely  (essentially) selfadjoint operators in the Hilbert space of the system. There,  (normal) states are given by positive trace-class unit-trace operators and pure states are unit vectors up to phases.

The second approach instead relies upon the structure of  unital {\bf $*$-algebra} $\gA$, i.e.,  an associative complex algebra equipped with an identity element $\bI$ and an antilinear  involution $^*$.
This  is the most elementary mathematical machinery to describe and handle the set of observables of a quantum system in the algebraic formalism.
{\bf  Algebraic observables} are here the  elements  $a \in \gA$  which are {\bf Hermitian} $a=a^*$. 
This approach is in particular suitable when dealing with the algebra of {\em quantum fields}  (see, e.g., \cite{KM,Wald,Kay-Wald-91,Haag,Streater-Wightman-00}).

\begin{example} \label{exQFT} $\null$\\
{\em  {\bf (1)} If $(M,g)$ is a given {\em globally hyperbolic spacetime} \cite{book},  
a complex unital $*$-algebra $\gA(M,g)$ is associated to a free {\em Klein-Gordon} real scalar field $\Phi$. It is
 uniquely defined \cite{KM} by requiring the elements of $\gA(M,g)$ are finite   complex linear combinations of the  
identity $\bI$ and finite products of elements $\Phi(f)$, called  (abstract) {\bf  quantum field operators}, where $f \in C_c^\infty(M;\bR)$
 are real-valued {\em smearing functions}
and the following requirements are true for $a,b \in \bR$, $f,h \in C_c^\infty(M;\bR)$.
\begin{itemize}
\item[(i)] ($\bR$-linearity) $\Phi(af+bh) = a\Phi(f)+b\Phi(h)$;

\item[(ii)] (Hermiticity) $\Phi(f)^* = \Phi(f)$;

\item[(iii)](field equations) $\Phi(Pf)=0$;

\item[(iv)](bosonic commutation relations) $[\Phi(f), \Phi(g)] = iE(f,g) \bI$.
\end{itemize}
Above $P: C^\infty(M) \to C^\infty(M)$ is the {\em Klein-Gordon operator} for a given squared mass $m^2\geq 0$, and referred to the metric $g$  and  $E \in \cD'(M\times M)$ is
 the {\em causal propagator} of $P$.  It turns out that $\Phi(f)= \Phi(f')$ if and only if $E(f-f')=0$ -- where $(Ef)(x):= E(x,f)$ \cite{Kay-Wald-91, Wald, KM} -- which, in turn, means 
$f-f' = Pg$ for some $g\in C_c^\infty(M)$.\\
Field operators $\Phi(f)$ are in particular {\em algebraic observables}.
When introducing a self-interaction -- different from the classical gravitational field, which is already encompassed in the 
formalism -- the $*$-algebra $\gA(M,g)$ has to be suitably enlarged in order to implement perturbative
 treatements of the dynamics and {\em renormalization}  procedures \cite{book}. \\
{\bf (2)}  Still assuming that $(M,g)$ is a (time-oriented) globally hyperbolic spacetime,  an alternative but 
equivalent construction, especially exploited  in Minkowski spacetime, is the {\bf symplectic formulation}, where the
quantum fields are viewed as formal field operators $\Phi[\varphi]$ smeared against real smooth {\em  solutions of the KG equation}  $\varphi$ with compact Cauchy data on every spacelike Cauchy surface $\Sigma$ of $(M,g)$.
The real linear space of these solutions will be denoted by $Sol[M,g]$.
The {\em algebraic observables} $\Phi[\varphi]$ are generators of a complex unital $*$-algebra $\gA[M,g]$ and are supposed to satisfy, for  $a,b \in \bR$ and $\varphi,\tilde{\varphi} \in Sol[M,g]$,
\begin{itemize}
\item[(i)] ($\bR$-linearity) $\Phi[a\varphi+b\tilde{\varphi} ] = a\Phi[\varphi]+b\Phi[\tilde{\varphi} ]$;
\item[(ii)] (Hermiticity) $\Phi[\varphi]^* = \Phi[\varphi]$;
\item[(iii)](bosonic commutation relations)  $[\Phi[\varphi], \Phi[\tilde{\varphi} ] ]= -i \sigma(\varphi, \tilde{\varphi}) \bI$,
\end{itemize}
where we have introduced the {\em symplectic form} on $Sol[M,g]$ \beq\sigma(\varphi, \tilde{\varphi}) :=\int_\Sigma (\varphi \nabla_n \tilde{\varphi}   - \tilde{\varphi}  \nabla_n \varphi) d\mu_\Sigma\:,\label{symp}\eeq
where  $\Sigma$ is any spacelike Cauchy surface of $(M,g)$, $\nabla_n$ denotes the derivative along  the future-oriented orthogonal direction with respect to $\Sigma$, and $\mu_\Sigma$ is the natural measure induced by $g$ on $\Sigma$.
The integral does not depend on the choice of $\Sigma$.
It turns out that $E : C_c^\infty(M) \to Sol[M,g]$ is surjective and, from that abd the other prperties of $E$, it is not difficult  to prove that \cite{Kay-Wald-91,Wald} there is a unique unital $*$-algebra isomorphism $\alpha : \gA(M,g) \to \gA[M,g]$ defined by
\begin{equation}\label{idsigmaPhi} \alpha(\Phi(h))= \Phi[Eh]\:, \quad \mbox{for every $h \in C_c^\infty(M)$.} \end{equation}
In this sense the formulation relying on $\gA(M,g)$ and that referring to $\gA[M,g]$ are equivalent.\\
If $(M,g)$ is the $(d+1)$-dimensional Minkowski spacetime, the symplectic formulation can be reformulated by enlarging  $Sol[M,g]$ to the space of
 real smooth solutions of KG equation with Cauchy data which are real $\bR^d$-Schwartz functions on every $x^0=$ constant $3$-space,  where $(x^0, x^1, \ldots, x^n) \in \bR^{d+1}$ deniting  any system of  Minkowskian coordinates  on $M$.}\hfill  $\blacksquare$
\end{example}

%

As is well known, an (algebraic) {\bf state} on a unital $*$-algebra $\gA$ is a linear map $\phi : \gA \to \bC$ which is {\bf positive} ($\phi(a^*a) \geq 0$ for $a \in \gA$),  and {\bf normalized} ($\phi(\bI) =1$).
Per definition a \textbf{non-normalized state} is a linear, positive functional $\omega\colon\mathfrak{A}\to\mathbb{C}$ such that $\omega(\mathbb{I})\neq 0$.
Notice that $\omega(\bI) = \omega(\bI\bI)= \omega(\bI^*\bI)>0$ therefore 
a non-normalized state defines a unique state  $\widehat{\omega}(a) := \omega(\mathbb{I})^{-1}\omega(a)$ 
for $a \in \gA$.
A state is said to be {\bf pure} if it is an extremal element of the convex body of states.
For an algebraic observable $a=a^*\in \gA$, the physical interpretation of $\phi(a)$  is the {\em expectation value} of  
$a$ in the state $\phi$. 
 
 In the following, we shall address the discussion about the possible interpretation of an algebraic observable $a\in \gA$ as a {\em quantum observable} in some  Hilbert space formulation. Later,  we  shall discuss the interpretation of $\phi(a)$ as an expectation value with respect to some probability measure and the interplay with the former issue. To this end, we list a few fundamental technical notions and results we shall exploit througout the work.

The basic link between the algebraic formalism and  the Hilbert space formulation of quantum theories 
 is provided by a celebrated construction developed by Gelfand, Naimark, and Segal and known as  {\em GNS construction} (see, e.g., \cite{KM,Moretti2019}). 
It   is valid for every algebraic state  over a unital $*$-algebra  and trivially extends  to non-normalized states.
The construction admits a more sophisticated topological version for $C^*$-algebras -- \textit{cf.} section \ref{Sec: Issue A: interpretation of pi(a) as observable}.

\begin{theorem}[GNS construction]\label{GNS}
Let  $\mathfrak{A}$ be a unital $*$-algebra and  $\omega : \gA \to \bC$ a (non-normalized) algebraic state. There exists a quadruple
 $(\mathsf{H}_\omega,  {\cal D}_\omega,\pi_\omega,\psi_\omega)$ 
called  {\bf GNS quadruple} of $(\gA, \omega)$,
where 
\begin{itemize}

\item[(1)] $\mathsf{H}_\omega$ is a Hilbert space whose scalar product is denoted by $\langle \: | \:\rangle$,

\item[(2)]  ${\cal D}_\omega\subset \mathsf{H}_\omega$ is a  dense subspace, 

\item[(3)]  $\pi_\omega: \mathfrak{A} \ni a \mapsto \pi_\omega(a)$ -- with $ \pi_\omega(a): {\cal D}_\omega \to {\cal D}_\omega$ -- is a unital algebra representation 
with the property that $\pi_\omega(a^*)\subset  \pi_\omega(a)^\dagger$, where $^\dagger$ denotes the adjoint in $\mathsf{H}_\omega$,

\item[(4)] $\psi_\omega \in \sH_\omega$ is a vector such that  

(i) ${\cal D}_\omega = \pi_\omega(\gA) \psi_\omega$,

(ii) $\omega(a)= \langle \psi_\omega| \pi_\omega(a) \psi_\omega\rangle$ for every $a\in \mathfrak{A}$,  in particular  $||\psi_\omega||^2 = \omega(\mathbb{I})$.
\end{itemize}
\noindent If $(\mathsf{H},  {\cal D},\pi,\psi)$ satisfies (1)-(4), then there is a surjective
 isometric map $U: \sH_\omega \to \sH$ such that $U({\cal D}_\omega)= {\cal D}$, $\pi(a) = U\pi_\omega(a) U^{-1}$ for $a\in \gA$, and $\psi = U\psi_\omega$.
\end{theorem}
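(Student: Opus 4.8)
The plan is to run the standard GNS quotient construction and then settle uniqueness by transporting the cyclic vectors. First I would equip $\gA$ with the sesquilinear form $\langle a,b\rangle_0 := \omega(a^*b)$. Positivity of $\omega$ gives $\langle a,a\rangle_0\geq 0$, while expanding $\omega((a+b)^*(a+b))\geq 0$ and $\omega((a+ib)^*(a+ib))\geq 0$, together with the fact that $\omega(a^*a)$ and $\omega(b^*b)$ are real, yields the Hermitian symmetry $\omega(b^*a)=\overline{\omega(a^*b)}$. Hence $\langle\cdot,\cdot\rangle_0$ is a genuine, possibly degenerate, pre-inner product, and the Cauchy--Schwarz inequality $|\omega(a^*b)|^2\leq\omega(a^*a)\,\omega(b^*b)$ follows by the usual argument. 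I would then introduce the Gelfand ideal $\cN_\omega:=\{a\in\gA\:|\:\omega(a^*a)=0\}$: Cauchy--Schwarz identifies it with $\{a\in\gA\:|\:\omega(b^*a)=0\text{ for all }b\in\gA\}$, so it is a linear subspace, and applying Cauchy--Schwarz with $b:=c^*ca$ gives $\omega((ca)^*(ca))=0$, so $\cN_\omega$ is a left ideal of $\gA$.

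Next I would set $\cD_\omega:=\gA/\cN_\omega$, which carries the well-defined positive-definite inner product $\langle[a]\:|\:[b]\rangle:=\omega(a^*b)$, and let $\sH_\omega$ be its completion, so that (1) and (2) hold by construction. The representation is $\pi_\omega(a)[b]:=[ab]$, well-defined precisely because $\cN_\omega$ is a left ideal, and manifestly linear, multiplicative, unital, and mapping $\cD_\omega$ into $\cD_\omega$ -- no boundedness is claimed or needed. Property (3) is the one-line computation $\langle[b]\:|\:\pi_\omega(a)[c]\rangle=\omega(b^*ac)=\langle\pi_\omega(a^*)[b]\:|\:[c]\rangle$ valid for all $[b],[c]\in\cD_\omega$, which says exactly that $\pi_\omega(a^*)\subset\pi_\omega(a)^\dagger$. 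For (4) I would take $\psi_\omega:=[\bI]$: then $\pi_\omega(\gA)\psi_\omega=\{[a]\:|\:a\in\gA\}=\cD_\omega$, and $\langle\psi_\omega\:|\:\pi_\omega(a)\psi_\omega\rangle=\omega(\bI^*a)=\omega(a)$, so in particular $\|\psi_\omega\|^2=\omega(\bI)$. Nothing in this step uses $\omega(\bI)=1$, so the non-normalized case is covered verbatim.

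For the uniqueness statement, given another quadruple $(\sH,\cD,\pi,\psi)$ satisfying (1)--(4), I would define $U$ on $\cD_\omega$ by $U\,\pi_\omega(a)\psi_\omega:=\pi(a)\psi$. Using property (3) for both quadruples, one computes $\langle\pi_\omega(a)\psi_\omega\:|\:\pi_\omega(b)\psi_\omega\rangle=\omega(a^*b)=\langle\pi(a)\psi\:|\:\pi(b)\psi\rangle$; since this shows $\|\pi_\omega(a)\psi_\omega-\pi_\omega(b)\psi_\omega\|$ equals $\|\pi(a)\psi-\pi(b)\psi\|$, the map $U$ is both well-defined and isometric from $\cD_\omega$ onto $\cD$. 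Density of $\cD_\omega$ in $\sH_\omega$ and of $\cD$ in $\sH$ then lets $U$ extend uniquely to a surjective isometry $\sH_\omega\to\sH$ with $U(\cD_\omega)=\cD$; moreover $\psi=\pi(\bI)\psi=U\psi_\omega$, and on the dense domain $\cD$ one checks $U\pi_\omega(a)U^{-1}\pi(b)\psi=U\pi_\omega(ab)\psi_\omega=\pi(ab)\psi=\pi(a)\pi(b)\psi$, giving $\pi(a)=U\pi_\omega(a)U^{-1}$.

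The only genuinely delicate points are the two purely algebraic facts at the start -- Hermitian symmetry of $\omega$ on products, and the proof via Cauchy--Schwarz that $\cN_\omega$ is a left ideal -- together with the bookkeeping forced by $\pi_\omega(a)$ being in general an unbounded operator: every manipulation must be kept inside $\cD_\omega$, and boundedness must be replaced by an appeal to property (3) whenever an operator is moved across the inner product. Everything else is routine.
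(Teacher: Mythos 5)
Your proof is correct and follows exactly the standard quotient construction that the paper itself records (namely $\cD_\omega=\gA/G_{(\gA,\omega)}$, $\pi_\omega(a)[b]=[ab]$, $\psi_\omega=[\bI]$) while deferring details to the references. The delicate points you flag -- Hermitian symmetry of $(a,b)\mapsto\omega(a^*b)$, the left-ideal property via Cauchy--Schwarz, and the domain bookkeeping in the uniqueness argument -- are all handled correctly.
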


   If $\gA$ is a unital $*$-algebra and $\omega$ a (non-normalized) state on it,  the set \beq \label{Gelfandideal}G_{(\mathfrak{A},\omega)} := \{a \in \mathfrak{A} \:|\: \omega(a^*a) =0\}\eeq  is a {\bf left-ideal} of  $\mathfrak{A}$ (a linear subspace such that $ba \in G_{(\mathfrak{A},\omega)}$ if $a \in G_{(\mathfrak{A},\omega)}$ and $b \in \mathfrak{A}$) as elementary consequence of Cauchy-Schwarz inequality 
and positivity of $\omega$.  $G_{(\mathfrak{A},\omega)}$ is called  {\bf Gelfand ideal}. 
For later convenience we recall that
the proof of the  GNS theorem 
leads to \beq\mathfrak{A}/G_{(\mathfrak{A},\omega)}= \mathcal{D}_\omega\:, \quad \pi_\omega(a)[b]= [ab] 
\quad \mbox{if  $a \in \gA\:,  [b] \in \mathfrak{A}/G_{(\mathfrak{A},\omega)}$}\:, \quad \mbox{and $\psi_\omega = [\bI]$.} \eeq
In particular, $\ker(\pi_\omega) \subset G_{(\mathfrak{A},\omega)}$ and 
  $\pi_\omega$ is faithful when $G_{(\mathfrak{A},\omega)} = \{0\}$\footnote{The converse does not hold,  since $\ker(\pi_\omega)$
 is a two-sided $*$-ideal and thus  $\ker(\pi_\omega) \subsetneq G_{(\mathfrak{A},\omega)}$ in the general case.}.  
%

\subsection{Issue A: interpretation of  $\pi_\omega(a)$  as a quantum observable}\label{Sec: Issue A: interpretation of pi(a) as observable}

When the unital $^*$-algebra $\mathfrak{A}$ is a {\bf unital $C^*$-algebra}  (i.e.,   a Banach space with respect to a norm satisfying $||ab|| \leq ||a||||b||$ and $||a^*a||=||a||^2$), then the GNS representation   $\pi_\omega(a)$ continuously extends to a $*$-algebra representation of $\gA$ to  $\gB(\sH_\omega)$ (see, e.g., \cite{Moretti2017}).
The extended representation denoted by the same symbol $\pi_\omega$ satisfies $\|\pi_\omega(a)\| \leq \|a\|$ if $a\in \gA$, where  equality holds for all $a$ if 
and only if $\pi_\omega$ is injective.
In particular, for $C^*$-algebras,  it holds $\pi_\omega(a^*)= \pi_\omega(a)^\dagger$ if $a \in \mathfrak{A}$.
Therefore, Hermitian elements of $\gA$  are  always  represented by (bounded) selfadjoint operators independently of $\omega$,  as it happens in the standard Hilbert space formulation of quantum theories.  Here, the two notions of observable always agree.

If $\mathfrak{A}$ is only a $*$-algebra, the picture becomes  more complex. 
Every operator $\pi_\omega(a)$ has the  common dense  invariant domain  ${\cal D}_\omega$ by definition and  it is  closable, since its adjoint operator 
$\pi_\omega(a)^\dagger$  extends $\pi_\omega(a^*)$ which has again the dense domain $ {\cal D}_\omega$. 
As a consequence,  $\pi_\omega(a)$ is a {\em symmetric operator} provided that  $a=a^*$.
If $\pi_\omega(a)$ is {\em essentially selfadjoint} for {\em every} $\omega$, then no issue pops out  and we can conclude that the {\em algebraic} observable $a$ has a definite meaning as a {\em quantum} observable also in the Hilbert space formalism.\footnote{
	A sufficient condition (by no means necessary!) assuring essential selfadjointeness of $\pi_\omega(a)$ for a fixed Hermitian element $a\in \gA$ and  {\em every} non-normalized state $\omega$ is that  \cite{Schmudgen90} there exist $b_\pm \in \gA$ such that $(a\pm i \bI)b_\pm = \bI$ (equivalently $b'_\pm (a\pm i \bI) = \bI$, where $b'_\pm =b_\mp^*$).
	This is because the written condition trivially implies that $\operatorname{Ran}(\pi_\omega(a) \pm iI) \supset \cD_\omega$ and thus $\operatorname{Ran}(\pi_\omega(a) \pm iI)$ is dense, so that the symmetric operator $\pi_\omega(a)$ is essentially selfadjoint (see, e.g. \cite[Thm. 5.18]{Moretti2017}).}. 
However,  we cannot {\em a priori} exclude the possibility that for some  (possibly unnormalized) state $\omega$ and some $a=a^* \in \gA$, $\pi_\omega(a)$  may admit  {\em different} selfadjoint extensions.
Or, worse,  that $\pi_\omega(a)$ admits  {\em no} selfadjoint extensions at all (its deficiency indices are different).
In these situations, an algebraic observable $a$ may fail to be  interpretable as a quantum observable, and therefore we are lead to the question whether $a$ should be though as a physical observablein any sense.
As shown by the following examples, there are concrete cases where $\pi_\omega(a)$ has either none or more than one self-adjoint extension.

\begin{example}\label{Ex: algebraically self-adjoint observables which are not self-adjoint as operators} $\null$\\
{\em {\bf (1)}   Let us define 
 the space $\cS$ of complex-valued smooth functions with domain  $[0,1]$ which vanish at $0$ and $1$ 
with all of their derivatives.
Consider the unital $*$-algebra $\gA$ of  differential operators acting on the function of the invariant space $\cS$, made of all finite complex linear combinations of finite compositions in arbitrary order of (i) the operator $P:= -i\frac{d}{dx}$ 
representing the {\em momentum} algebraic observable
for a particle confined in the box $[0,1]$, (ii)   {\em smoothed position} algebraic observables represented by  multiplicative operators $f\cdot$ induced by real-valued functions  $f \in \cS$, and (iii) the constantly $1$ function again acting multiplicatively and also defining the unit of the algebra.
The involution is $A^* := 
A^\dagger\spa\rest_{\cS}$ ($^\dagger$ being  the adjoint in $L^2([0,1], dx)\supset \cS$) so that $P^*=P$. 
We  stress that we are here considering $\gA$ as an abstract algebra (i.e., up to isomorphisms of unital $*$-algebras)  independently of the concrete realization in terms of operators we described above. Now consider the non-normalized state $\omega :\gA \to \bC$ defined as, where $dx$ is the Lebesgue measure on $\bR$,
$$\omega(A) := \int_0^1 \psi(x) (A\psi)(x) dx\quad A \in \gA\:.$$
Above, $\psi  \in \cS$ is a fixed non-negative function vanishing {\em only} at $0$ and $1$. The GNS structure is easy to be constructed taking advantage of the uniqueness part of  GNS theorem,
$$\sH_\omega = L^2([0,1], dx)\:, \quad \pi_\omega(A) =A\spa\rest_{\cD_\omega}\:, \quad \psi_\omega := \psi\:, $$
and $\cD_\omega$ is a suitable subspace of $\cS$ which however includes $C^\infty_c(0,1)$,  the space of smooth complex maps $f: [0,1] \to \bC$ supported in $(0,1)$, so that $\cD_\omega$ is dense in $L^2([0,1], dx)$ as  is due.  The deficiency spaces $N_\pm:=\ker(\pi_\omega(P)^*\mp i)=\operatorname{Ran}(\pi_\omega(P)\pm i)^\perp$ of $\pi_\omega(P)=P\spa\rest_{\cD_\omega}$
 are
\begin{align}
	N_\pm
	:=\left\{ g_\pm  \in L^2([0,1],dx)\:\left|\: \int_0^1 \overline{g_\pm} ( f'\mp f)dx =0\right. \quad \forall f \in \cD_\omega\right\} \nonumber \\
	=\left\{ g_\pm  \in L^2([0,1],dx)\:\left|\: \int_0^1 \overline{g_\pm(x) e^{\pm x}} \left( f(x) e^{\mp x}\right)'\right.dx =0
	\quad \forall f \in \cD_\omega\right\}\label{EED}
\end{align}
If $\cD_\omega$ were replaced by $C^\infty_c(0,1)$ in (\ref{EED}), \cite[Lemma 5.30]{Moretti2017} would 
imply $g_\pm(x)= c e^{\mp x}$ for $c\in \bC$.
However these functions would also satisfy (\ref{EED}) if $f \in \cS$, as one immediately proves per direct inspection.
Since $C^\infty_c(0,1) \subset \cD_\omega \subset \cS$, we conclude that 
$N_\pm =\mbox{span}\{e^{\mp x}\}$. Therefore the symmetric operator $\pi_\omega(P)$ is {\em not} essentially selfadjoint on its GNS domain $\cD_\omega$, but it admits  a one-parameter class of different selfadjoint extensions according to {\em von Neumann's extension theorem} (see, e.g. \cite[Thm. 5.37]{Moretti2017}).
Stated differently, $P$ is an algebraic observable which does not admit a  quantum-observable interpretation in the considered GNS Hilbert space.\\
{\bf (2)}
Let us define the space $\cE$ of complex-valued 
smooth functions with domain  $[0,+\infty)$ which vanish at $0$ with all of their derivatives and tend to $0$ with all of their derivatives for $x\to +\infty$ faster than every negative power of $x$.
Consider the unital $*$-algebra $\gB$ of  differential operators acting on the functions of the invariant space $\cE$, made of all finite complex linear combinations of finite compositions in arbitrary order of (i) the operator $P:= -i\frac{d}{dx}$ which again we would like to interpret  as the algebraic momentum observable
for a particle confined to stay in the half line, (ii)  smoothed position algebraic observables represented by multiplicative operators $f\cdot$ induced by real-valued  functions  $f \in \cE$, and (iii) the constantly $1$ function again acting multiplicatively and also defining the unit of the algebra. The involution is $A^* := 
A^\dagger\spa\rest_{\cE}$ ($^\dagger$ being  the adjoint in $L^2([0,+\infty), dx)\supset \cE$) so that $P^*=P$. 
As before,  we are here considering $\gB$ as an abstract algebra (i.e., up to isomorphisms of unital $*$-algebras)  independently of the concrete realization
we presented  above. Next consider the non-normalized state $\phi :\gB \to \bC$ defined as
\beq \phi(A) := \int_0^{+\infty} \chi(x) (A\chi)(x) dx\quad A \in \gB\:.\label{statephi}\eeq
Above, $\chi  \in \cE$ is a fixed non-negative function vanishing {\em only} at $0$. Uniqueness part of the GNS theorem proves that
the GNS structure is 
$$\sH_\phi = L^2([0,+\infty), dx)\:, \quad \pi_\phi(A) =A\spa\rest_{\cD_\phi }\:, \quad \psi_\phi  := \chi\:, $$
and $\cD_\phi$ is a suitable subspace of $\cE$ which however includes $C^\infty_c(0,+\infty)$,  the space of smooth complex maps $f: [0,+\infty) \to \bC$ whose  supports 
are included in $(0,+\infty)$.
 Hence $\cD_\phi$ is dense in $L^2([0,+\infty), dx)$ as is due.  The deficiency spaces $N_\pm$ of $\pi_\phi(P)=P\spa\rest_{\cD_\phi}$ can be computed easily
\begin{align}
	N_\pm &
	:=\left\{ g_\pm  \in L^2([0,+ \infty),dx)\:\left|\: \int_0^{+\infty} \overline{g_\pm} ( f'\mp f)dx =0\right.
	\quad \forall f \in \cD_\phi\right\}\nonumber \\
 	&= \left\{ g_\pm  \in L^2([0,+\infty),dx)\:\left|\: \int_0^{+\infty} \overline{g_\pm(x) e^{\pm x}} \left( f(x) e^{\mp x}\right)'\right.dx =0
 	\quad \forall f \in \cD_\phi\right\} \label{DDa}
 \end{align}
If $\cD_\phi$ were replaced by $C^\infty_c(0,+\infty)$ in (\ref{DDa}), \cite[Lemma 5.30]{Moretti2017} would 
imply $g_\pm(x)= c e^{\mp x}$ for $c\in \bC$.
Evidently $ce^x$ cannot be accepted as an element of $L^2([0,+\infty),dx)$ unless $c=0$, so that, since $C^\infty_c(0,+\infty)\subset \cD_\phi$,
we conclude that $N_-=\{0\}.$
The functions $ce^{-x}$ would fulfill (\ref{DDa}) even if $\cD_\phi$ were replaced by $\cE$, as one immediately proves per direct inspection.
Since $C^\infty_c(0,+\infty) \subset \cD_\phi \subset \cE$, we conclude that 
$N_+ =\mbox{span}\{e^{- x}\}$ whereas $N_- = \{0\}$.
Therefore the symmetric operator $\pi_\phi(P)$ is {\em not} essentially selfadjoint on its GNS domain $\cD_\phi$ and it does {\em not}  admit  selfadjoint extensions.
More strongly, as $N_+=\{0\}$ but $N_-\neq \{0\}$, we have that $\overline{\pi_\phi(P)}$ is {\em maximally symmetric}: it does not admit proper symmetric extensions \cite[Thm. 3, p.97]{Akniezer-Glazman-93}.}
Once again, the algebraic observable $P$  does not admit a  quantum-observable interpretation in the considered GNS Hilbert space.
\hfill $\blacksquare$\\
\end{example}

In summary,  it seems that, dealing with $*$-algebras which are not $C^*$-algebras, there is not a perfect match between the notion of algebraic observable (Hermitian element of $\gA$) and that of quantum observable (selfadjoint operator in the (GNS) Hilbert space formulation).
In particular, Hermitian elements of $*$-algebras are usually represented by merely symmetric operators in the GNS representations with many or none selfadjoint extensions.
This issue is relevant when dealing with the problem of measurement, see in particular  \cite{Fewster-Verch-2019,Fewster} where the problem  of measurement is discussed for the algebra of quantum fields  in a locally covariant framework.

\subsection{Issue B: interpretation of $\omega(a)$ as expectation value and the moment problem}
Let us pass to discuss  the interpretation of $\omega(a)$ as {\em expectation value} for $a=a^*\in \gA$,  where $\omega$ is a (possibly non-normalized) state\footnote{In the non-normalized case, the meaning  of expectation value would be actually  reserved to $\widehat{\omega}(a)=\omega(\bI)^{-1}{\omega}(a)$, though, for shortness,  we improperly also call $\omega(a)$ expectation value  in the rest of the work.}.
To rigorously accept this folk physical interpretation, we should assume that  the pair $(a,\omega)$ admits a physically meaningful uniquely associated  positive $\sigma$-additive  measure  $\mu_{\omega}^{(a)}$ over $\bR$ such that 
\beq \omega(a) = \int_\bR \lambda d\mu_{\omega}^{(a)}(\lambda)\:.\label{MC}\eeq
It is natural to also suppose  that $\mu$ is defined on the Borel $\sigma$-algebra $\cB(\bR)$, since this is the case for measures arising from the spectral  theory as it is standard in quantum theories.
Identity (\ref{MC}) is far from being  able  to determine $\mu_{\omega}^{(a)}$.
However, the structure of $*$-algebra permits us  to define real polynomials of algebraic observables and  $\omega$ does assign values  to all those algebraic observables. 

	In particular, for $a=a^*\in\mathfrak{A}$ and $n\in\mathbb{N}$, it is natural to interpret $a^n$ as the algebraic observable whose values are $\lambda^n$ if $\lambda$ is a value attained by $a$.
	For this reason we shall strengthen (\ref{MC}) by requiring that
	\beq \omega(a^n) = \int_\bR \lambda^n d\mu_{\omega}^{(a)}(\lambda)\quad \mbox{for every $n \in \bN$}\:. \label{MP}\eeq

In this way, $\omega(a^n)$ is interpreted as the $n$-th {\em moment} of the unknown measure $\mu_\omega^{(a)}$.
Finding a finite  {\em positive Radon measure}  over $\bR$ when its moments are fixed is a quite  famous problem named {\em Hamburger moment problem},  extensively treated in the pure mathematical literature (see \cite{Schmudgen17} for a modern textbook on the subject). 

\begin{remark}\label{remarkRB}
{\em A {\bf positive Radon measure} is a positive $\sigma$-additive measure defined on the Borel sets of a Hausdorff locally-compact space (here $\mathbb R$ equipped with the Euclidean topology) 
which is both outer and inner regular and assigns a finite value to every compact set.
All measures considered above  are necessarily finite  because $\omega(a^0) = \omega(\mathbb{I})$ exists in $[0,+\infty)$ by hypothesis.
 In $\mathbb  R^n$,  all  finite positive $\sigma$-additive  Borel measures  
are automatically Radon  in view of  \cite[Thm. 2.18]{Rudin87}.  Therefore, ``positive Radon measure''  can be equivalently replaced by ``positive $\sigma$-additive  Borel measure'' in the rest of the discussion related to the moment problem.}\hfill $\blacksquare$\\
\end{remark}

\noindent At this juncture,   for a given Hermitian  $a \in \gA$ and a given non-normalized state $\omega : \gA \to \bC$ ,  we should  tackle  two problems if we want to insist with the standard interpretation of $\omega(a)$ as expectation value.  
\begin{itemize}
\item[(M1)] Does a positive $\sigma$-additive  Borel measure  $\mu^{(a)}_\omega$ over $\bR$ satisfying  (\ref{MP}) exist? 
\item[(M2)] Is it unique?
\end{itemize}

Issues A and B are related in several ways. Here is a first example of that interplay arising when facing (M1) and (M2).  If $\pi_\omega(a)$ is essentially selfadjoint, a measure as in (M1) directly arises form the GNS construction. It is simply constructed out of the projection-valued measure (PVM) $P^{(\overline{\pi_\omega(a)})} : \cB(\bR) \to \gB(\sH_\omega)$ of the selfadjoint operator $\overline{\pi_\omega(a)}$
over $\sH_\omega$:
\beq\label{muPVM}
\mu_\omega^{(a)}(E) := \langle \psi_\omega| P^{(\overline{\pi_\omega(a)})}(E) \psi_\omega \rangle\:, \quad E\in \cB(\bR)\:. 
\eeq
This opportunity is always present if $\gA$ is a unital $C^*$-algebra, since $\pi_\omega(a) \in \gB(\sH_\omega)$ is selfadjoint in that case.  Concerning (M2), it is possible to prove that the measure defined in (\ref{muPVM}) is also {\em unique} when $\gA$ is a $C^*$-algebra.

\begin{proposition}\label{P1} Let $\gA$ be a unital $*$-algebra, $a=a^* \in \gA$, and $\omega : \gA \to \bC$ is a non-normalized state. The following facts hold.
\begin{itemize}
\item[{\bf (a)}] If $\pi_\omega(a)$ is essentially selfadjoint,   then there exists
 a (necessarily finite) positive $\sigma$-additive  Borel measure $\mu^{(a)}_\omega : \cB(\bR) \to [0,+\infty)$
satisfying (\ref{MP}).  

\item[{\bf (b)}] If furthermore $\gA$ is a $C^*$-algebra, then the measure $\mu^{(a)}_\omega$ is unique.
\end{itemize}
\end{proposition}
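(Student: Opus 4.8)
For part (a) the plan is to take as $\mu_\omega^{(a)}$ exactly the measure written in \eqref{muPVM} and to check that its moments are the prescribed ones. Since $\pi_\omega(a)$ is essentially selfadjoint, $A:=\overline{\pi_\omega(a)}$ is selfadjoint and carries a PVM $P^{(A)}:\cB(\bR)\to\cL(\sH_\omega)$; one sets $\mu_\omega^{(a)}(E):=\langle\psi_\omega|P^{(A)}(E)\psi_\omega\rangle=\|P^{(A)}(E)\psi_\omega\|^2$. Strong $\sigma$-additivity of a PVM makes this a positive $\sigma$-additive Borel measure, finite with total mass $\|\psi_\omega\|^2=\omega(\bI)$. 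The only substantive point is \eqref{MP}, and here I would exploit the GNS structure of Theorem \ref{GNS}: $\psi_\omega\in\cD_\omega$ and $\pi_\omega(a)(\cD_\omega)\subseteq\cD_\omega$, so $\pi_\omega(a)^k\psi_\omega\in\cD_\omega\subseteq D(A)$ for every $k$; since $A$ restricts to $\pi_\omega(a)$ on $\cD_\omega$, an induction on $n$ gives $\psi_\omega\in D(A^n)$ with $A^n\psi_\omega=\pi_\omega(a)^n\psi_\omega=\pi_\omega(a^n)\psi_\omega$ for all $n$. In particular $\int_\bR\lambda^{2n}\,d\mu_\omega^{(a)}=\|A^n\psi_\omega\|^2<\infty$, whence all moments are absolutely convergent (Cauchy--Schwarz against the finite measure $\mu_\omega^{(a)}$), and the spectral theorem for $A$ yields $\int_\bR\lambda^n\,d\mu_\omega^{(a)}(\lambda)=\langle\psi_\omega|A^n\psi_\omega\rangle=\langle\psi_\omega|\pi_\omega(a^n)\psi_\omega\rangle=\omega(a^n)$, which is \eqref{MP}.

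For part (b), suppose in addition that $\gA$ is a $C^*$-algebra, so $a$ is bounded, and let $\mu$ be \emph{any} positive $\sigma$-additive Borel measure obeying \eqref{MP} (automatically finite by the $n=0$ case). The idea is to show that such a $\mu$ is necessarily concentrated on a fixed compact interval, after which determinacy is classical. The a priori bound comes from the $C^*$-inequality $b^*b\le\|b\|^2\bI$ together with positivity of $\omega$: with $b=a^n$ one gets $\int_\bR\lambda^{2n}\,d\mu=\omega(a^{2n})=\omega\big((a^n)^*a^n\big)\le\|a\|^{2n}\,\omega(\bI)$. By Chebyshev's inequality, for any $R>\|a\|$, $\mu(\{|\lambda|>R\})\le R^{-2n}\int_\bR\lambda^{2n}\,d\mu\le\omega(\bI)\,(\|a\|/R)^{2n}\to 0$ as $n\to\infty$; hence $\mu$ is supported in $[-\|a\|,\|a\|]$. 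On a compact interval, Stone--Weierstrass makes the polynomials uniformly dense in the continuous functions, so the moments $\{\omega(a^n)\}_{n\in\bN}$ determine $\int f\,d\mu$ for every $f\in C([-\|a\|,\|a\|])$ and therefore, by Riesz representation, determine $\mu$; thus $\mu=\mu_\omega^{(a)}$.

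I expect the only step requiring care to be the passage, in (b), from ``the even moments grow at most like $\|a\|^{2n}\omega(\bI)$'' to ``$\mu$ has compact support'': this is a Carleman-type determinacy phenomenon, and the Chebyshev estimate above is the clean self-contained way to obtain it (one could instead cite the Carleman criterion for the Hamburger moment problem). Everything else is routine bookkeeping with the spectral theorem and the GNS identities; in particular the induction showing $\psi_\omega\in D(A^n)$ uses only that $\cD_\omega$ is an invariant core-domain on which $A$ agrees with $\pi_\omega(a)$.
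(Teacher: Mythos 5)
Your proof is correct. Part (a) is essentially the paper's argument: both take $\mu_\omega^{(a)}$ to be the spectral measure $\langle\psi_\omega|P^{(\overline{\pi_\omega(a)})}(\cdot)\psi_\omega\rangle$ and verify \eqref{MP} via the domain chain $\psi_\omega\in\cD_\omega=D(\pi_\omega(a^n))\subset D(\pi_\omega(a)^n)\subset D([\overline{\pi_\omega(a)}]^n)$ together with the spectral theorem; your explicit induction and the Cauchy--Schwarz remark about odd moments are just more detailed bookkeeping of the same step.

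Part (b) is where you diverge. The paper observes that $|\omega(a^n)|\le\omega(\bI)\|a\|^n$ and then simply invokes Carleman's condition \cite[Corollary 4.10]{Schmudgen17} for determinacy of the Hamburger problem (noting, via Remark \ref{remarkRB}, that finite Borel measures on $\bR$ are Radon). You instead prove determinacy from scratch: the $C^*$-inequality $\omega(a^{2n})\le\|a\|^{2n}\omega(\bI)$ plus Chebyshev forces any solution measure to be supported in $[-\|a\|,\|a\|]$, after which Stone--Weierstrass and the uniqueness part of the Riesz representation theorem identify the measure. Both arguments are valid. Yours is more elementary and self-contained (no appeal to moment-problem machinery) and yields the extra information that every solution is compactly supported in $[-\|a\|,\|a\|]$ --- which is the natural statement given that $\sigma(\pi_\omega(a))\subset[-\|a\|,\|a\|]$ in the $C^*$ case; the paper's citation of Carleman is shorter and fits the framework used repeatedly elsewhere in the text (e.g.\ in the proof of Theorem \ref{T1} and in Example \ref{Ex: uniqueness of moment problem and self-adjointness of the operator are not related}). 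One small point worth making explicit in your write-up: since the two candidate measures could a priori differ, you should note that the Chebyshev bound applies to \emph{each} of them separately, so both live on the same compact interval before you compare their integrals against continuous functions --- you do say this implicitly, and it is fine.
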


\begin{proof} (a) The measure  (\ref{muPVM}) 
is a finite  positive $\sigma$-additive  Borel measure over $\bR$ due to standard properties of spectral measures, it also 
satisfies  (\ref{MP}). Indeed, since $D([\overline{\pi_\omega(a)}]^n) \supset D(\pi_\omega(a)^n) \supset D(\pi_\omega(a^n)) = \cD_\omega \ni \psi_\omega$, from spectral theory (see. e.g. \cite{Moretti2017}) we have
$$\int_\bR \lambda^n d\mu^{(a)}_\omega =  \langle \psi_\omega| [\overline{\pi_\omega(a)}]^n \psi_\omega\rangle =  \langle \psi_\omega| \pi_\omega(a^n) \psi_\omega\rangle = \omega(a^n)\:.$$
(b) Let us suppose that $\gA$ is also a $C^*$-algebra. Since $|\omega(a^n)|\leq \omega(\bI) \|a\|^n$,  {\em Carleman's condition} \cite[Corollary 4.10]{Schmudgen17} assures that there exists at most one positive Radon measure satisfying (\ref{MP}). Observe that 
$\mu^{(a)}_\omega$ is a positive Radon measure in view of Remark \ref{remarkRB}.
\end{proof}
There are cases of unital $*$-algebras $\gA$ and (possibly unnormalized) states $\omega : \gA \to \bC$  such that  Hermitian elements $a\in \gA$ exist whose associated GNS operator $\pi_\omega(a)$ is not essentially selfadjoint -- see Example \ref{Ex: algebraically self-adjoint observables which are not self-adjoint as operators}. In this situation 
Proposition \ref{P1}  cannot be directly exploited. If $\pi_\omega(a)$ admits selfadjoint extensions (it is sufficient that it commutes with a {\em conjugation}) each of these selfadjoint extensions induces a measure $\mu_\omega^{(a)}$ as above.
However,  measures satisfying (M1) for the pair $(a,\omega)$ do exist,  and they are not necessarily unique, even if $\pi_\omega(a)$ does {\em not} admit any selfadjoint extension -- \textit{cf.} Examples \ref{Ex: algebraically self-adjoint observables which are not self-adjoint as operators}-\ref{Ex: uniqueness of moment problem and self-adjointness of the operator are not related} -- making the situation even more intricate.

The following proposition can be proved noticing that positivity of $\omega$ and its linearity implies that the set of candidate moments $m_n := \omega(a^n)$ satisfies the hypotheses of \cite[Thm. X.4]{RS2}.
However, we intend to provide a direct construction (which is nothing but the proof of the quoted theorem written with a different language).

\begin{proposition}\label{P2}
Let $\gA$ be a unital $*$-algebra, $a=a^* \in \gA$, and $\omega : \gA \to \bC$  a non-normalized state. Then there exists a  positive $\sigma$-additive  Borel measure $\mu^{(a)}_\omega : \cB(\bR) \to [0,+\infty)$
satisfying (\ref{MP}).  
\end{proposition}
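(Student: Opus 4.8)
The plan is to produce $\mu^{(a)}_\omega$ as the spectral measure of a selfadjoint operator applied to (a copy of) $\psi_\omega$, exactly as in Proposition~\ref{P1}(a), but after embedding $\sH_\omega$ into a larger Hilbert space where a selfadjoint extension is available. Write $A:=\pi_\omega(a)$ on $\cD_\omega\subset\sH_\omega$. By Theorem~\ref{GNS}, $A$ is symmetric, $\cD_\omega$ is dense and $A$-invariant, $\psi_\omega\in\cD_\omega$, and for all $n\in\bN$ one has $A^n\psi_\omega=\pi_\omega(a^n)\psi_\omega\in\cD_\omega$ and $\omega(a^n)=\langle\psi_\omega|A^n\psi_\omega\rangle$. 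If $A$ had a selfadjoint extension $\hat A$ acting on $\sH_\omega$ itself, then $\psi_\omega\in D(A^n)\subset D(\hat A^n)$ with $\hat A^n\psi_\omega=A^n\psi_\omega$, and $\mu^{(a)}_\omega(E):=\langle\psi_\omega|P^{(\hat A)}(E)\psi_\omega\rangle$ would already work. The obstruction is that, by Example~\ref{Ex: algebraically self-adjoint observables which are not self-adjoint as operators}(2), $A$ may have unequal deficiency indices and hence no selfadjoint extension within $\sH_\omega$ at all.

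To get around this I would use the doubling trick. Let $\overline{A}$ be the (closed, symmetric) closure of $A$, set $\hat\sH:=\sH_\omega\oplus\sH_\omega$ and consider $\tilde A:=\overline{A}\oplus(-\overline{A})$ on $D(\overline{A})\oplus D(\overline{A})$. A direct computation shows that $\tilde A$ is closed and symmetric and that, if $(n_+,n_-)$ are the deficiency indices of $\overline{A}$, those of $\tilde A$ are $(n_++n_-,\,n_++n_-)$ (since $-\overline{A}$ has deficiency indices $(n_-,n_+)$ and deficiency indices add under direct sums); thus $\tilde A$ has equal deficiency indices and, by von Neumann's extension theorem (see, e.g., \cite[Thm. 5.37]{Moretti2017}), admits a selfadjoint extension $\hat A$ on $\hat\sH$. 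Under the isometric embedding $\iota:\sH_\omega\ni\xi\mapsto(\xi,0)\in\hat\sH$, the operator $A$ becomes a restriction of $\tilde A$, hence of $\hat A$; moreover $\iota(\cD_\omega)$ is $\tilde A$-invariant with $\tilde A\iota\phi=\iota A\phi$ for $\phi\in\cD_\omega$, so $\iota(\psi_\omega)\in D(\hat A^n)$ and $\hat A^n\iota(\psi_\omega)=\iota(A^n\psi_\omega)$ for every $n$.

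I would then define $\mu^{(a)}_\omega(E):=\langle\iota(\psi_\omega)|P^{(\hat A)}(E)\iota(\psi_\omega)\rangle=\|P^{(\hat A)}(E)(\psi_\omega,0)\|^2$ for $E\in\cB(\bR)$, with $P^{(\hat A)}$ the PVM of $\hat A$. By the standard properties of spectral measures this is a positive $\sigma$-additive Borel measure with finite total mass $\|\psi_\omega\|^2=\omega(\bI)$. Since $\iota(\psi_\omega)\in D(\hat A^n)$ we get $\int_\bR\lambda^{2n}\,d\mu^{(a)}_\omega<\infty$, so all polynomial moments converge absolutely (Cauchy--Schwarz together with finiteness of $\mu^{(a)}_\omega$), and the functional calculus gives
\[ \int_\bR\lambda^n\,d\mu^{(a)}_\omega(\lambda)=\langle\iota(\psi_\omega)|\hat A^n\iota(\psi_\omega)\rangle=\langle\psi_\omega|A^n\psi_\omega\rangle=\omega(a^n)\:, \]
which is precisely (\ref{MP}).

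The real content, as stressed above, is the passage to the larger Hilbert space: one cannot stay inside $\sH_\omega$, and the key point is the elementary observation that doubling $\overline{A}$ to $\overline{A}\oplus(-\overline{A})$ balances the deficiency indices, after which von Neumann applies; the remaining steps are routine manipulations with invariant domains and the spectral theorem. For comparison, the same result follows from \cite[Thm. X.4]{RS2} once one checks that the candidate moments form a positive semidefinite Hankel matrix, i.e. $\sum_{j,k}\overline{c_j}c_k\,\omega(a^{j+k})=\omega\!\big(p(a)^*p(a)\big)\ge0$ for every complex polynomial $p(x)=\sum_k c_k x^k$ (here $a=a^*$ is used); the construction above is nothing but the operator-theoretic proof of that classical theorem rewritten in the GNS language.
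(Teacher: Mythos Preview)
Your proof is correct, but the route differs from the paper's. The paper does not enlarge $\sH_\omega$; instead it \emph{restricts} to the closed subspace $\sH_\omega^{(a)}$ generated by $\{\pi_\omega(a)^n\psi_\omega\}_{n\in\bN}$, observes that on this cyclic subspace the symmetric operator $\pi_\omega(a)\!\rest_{\cD_\omega^{(a)}}$ commutes with the natural conjugation $C:\pi_\omega(p(a))\psi_\omega\mapsto\pi_\omega(\overline{p}(a))\psi_\omega$ (complex-conjugate the coefficients of the polynomial), and then invokes von Neumann's criterion to obtain a selfadjoint extension inside $\sH_\omega^{(a)}$. Your doubling trick $\overline{A}\oplus(-\overline{A})$ is a perfectly legitimate alternative: it is a general device that works for \emph{any} symmetric operator regardless of its origin, whereas the conjugation argument exploits the specific algebraic structure (Hermiticity $a=a^*$ and the polynomial cyclic subspace). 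The paper's route has the minor advantage of producing the measure from a selfadjoint operator living in a subspace of $\sH_\omega$ itself, which is slightly more in keeping with the later discussion of how the measures relate to $\pi_\omega(a)$; your route has the advantage of being a one-line reduction to von Neumann that requires no check that $C$ is well defined and isometric. Both are standard operator-theoretic proofs of Hamburger's theorem, as you correctly note.
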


\begin{proof}
	Define a subspace $\cD_\omega^{(a)}$ of $\cD_\omega$ as follows 
$\cD_\omega^{(a)} := \{ \pi_\omega(p(a))\psi_\omega \:|\: p: \bR \to \bC \:\: \mbox{polynomial}\}$ and define the closed subspace $\sH_\omega^{(a)}$  of $\sH_\omega$ as the closure of  $\cD_\omega^{(a)}$. By construction,  $\pi_\omega(a)$
leaves $\cD_\omega^{(a)}$ invariant and $\pi_\omega(a)\spa\rest_{\cD_\omega^{(a)}}$ is symmetric in $\sH_\omega^{(a)}$. Finally,  $\pi_\omega\spa\rest_{\cD_\omega^{(a)}}$ commutes with the {\em conjugation} $C : \sH_\omega^{(a)} \to \sH_\omega^{(a)}$ obtained as the unique 
continuous extension of the antilinear  isometric involutive map $\pi_\omega(p(a))\psi_\omega \mapsto \pi_\omega(p(a))^\dagger\psi_\omega$ over $\cD_\omega^{(a)}$ (use the fact that $a=a^*$). In view of the von Neumann criterion (see, e.g. \cite[Thm. 5.43]{Moretti2017}),  $\pi_\omega\spa\rest_{\cD_\omega^{(a)}}$ admits selfadjoint extensions $\widehat{a}_\omega$ on $\sH_\omega^{(a)}$. The same argument exploited to prove (a) in Proposition \ref{P1} restricted to $\sH_\omega^{(a)}$  concludes the proof because
\beq\label{muPVM2}
\mu_\omega^{(a)}(E) := \langle \psi_\omega| P^{(\widehat{a}_\omega)}(E) \psi_\omega \rangle\:, \quad E\in \cB(\bR)
\eeq
satisfies all requirements for every such selfadjoint extension $\widehat{a}_\omega$.
\end{proof}

In turn, the proof of the Proposition \ref{P2} raises another issue. Are all measures $\mu_{\omega}^{(a)}$ associated with a pair $(a,\omega)$ spectrally constructed from selfadjoint extensions over $\sH_\omega^{(a)}$ of $\pi_\omega(a)\sp\rest_{\cD_\omega^{(a)}}$ when this operator admits such extensions?
The answer is once again negative as it can be grasped  from the detailed discussion  about the moment problem in the operatorial approach appearing  in Ch.6 of \cite{Schmudgen17}. All  measures satisfying (M1)  are in fact spectrally  obtained by {\em enlarging} the Hilbert space $\sH_\omega^{(a)}$ {\em without reference to the original  common GNS Hilbert space} $\sH_\omega$.
 
This fact eventually  suggests that, in principle,  there could be a plethora of measures   associated with $(a,\omega)$ as solutions of the moment problem
 with dubious physical meaning, because they are vaguely related with the underpinning physical theory described by $\gA$ and $\omega$.
Our feeling is that   focusing on the {\em whole} class of the measures satisfying (M1) for a given pair  $(a,\omega)$ is probably a wrong approach to tackle the problem of the  interpretation of $\omega(a)$ as  expectation value.  Further physical meaningful information has to be added 
 in order to reduce the number of elements 
of the family of measures. 

There are some relevant papers in the literature on related topics, in particular  \cite{MDV1,MDV2} where a {\em non commutative} version of the moment 
problem (both existence and uniqueness) is 
addressed, carefully analyzed and solved for $*$-algebras, and finally also applied to the $*$-algebra of quantum fields. There,  the unknown is the non-normalized state on the $*$-algebra.
Conversely, the state is {\em a priori} known here and we focus on the standard moment problem referred to a measure associated to the state for fixed Hermitian element of the algebra.
Another important difference is that in \cite{MDV1,MDV2} natural separating $C^*$-seminorms on $\gA$ are exploited.
Here we instead stick to the minimal  purely algebraic structure of  $\gA$.  

\section{Selfadjointness of $\overline{\pi_\omega(a)}$ and uniqueness of moment problems}\label{Sec: Selfadjointness of pi(a) and uniqueness of moment problems}
In the discussion developed in section \ref{Sec: Issues in the interpretation of a as an observable}, when presenting the issues A and B, we completely overlooked the physically meaningful fact that other elements $b\in \gA$ than $a$ exist.
These elements can be used to generate new non-normalized states $\omega_b$ out of $\omega$ defined by $\omega_b(a) := \omega(b^*ab)$: $\omega_b$ is viewed as a deformation of $\omega$.
When we are given the triple 
$\gA, a, \omega$ (with $a=a^*$)  we also know the formal expectation values $\omega_b(a)$. We expect that these deformed  states and the associated measures $\mu_{\omega_b}^{(a)}$ solving the moment problem with respect to $\omega_b$ should enter the game. Theorem \ref{T1} below shows that it is the case.

\subsection{Deformations of a non-normalized state}

\begin{definition}\label{defpert} 
	 {\em If $\omega$ is a non-normalized state (or a state) over $\mathfrak{A}$, we will denote by  $\omega_b$ the non-normalized state, called $b$-{\bf deformation} of $\omega$ \begin{equation}
		\omega_b(a):=\omega(b^*ab) \quad \forall a \in \mathfrak{A}\:, \label{defomegab}
	\end{equation}
	where $b\in\mathfrak{A}$. 
	The limit case of the zero functional $\omega_b$ obtained from $b$ with $\omega(b^*b)=0$ is included, and we call that $\omega_b$ {\bf singular deformation}.}\hfill $\blacksquare$\\
\end{definition}
\noindent From the final uniqueness part of Theorem \ref{GNS}, the  GNS 
structure of a non-singular deformation $\omega_b$ is evidently
\begin{gather}\label{GNSb}
	(\mathsf{H}_{\omega_b}, {\cal D}_{\omega_b},\pi_{\omega_b},\psi_{\omega_b})=
	\left(\overline{\pi_\omega(\mathfrak{A})\psi_{\omega_b}}, \pi_\omega(\mathfrak{A})\psi_{\omega_b}, \pi_\omega\spa\rest_{{\cal D}_{\omega_b}},\pi_\omega(b)\psi_\omega\right)\,.
\end{gather}
If $\omega_b$ is singular we {\em define},
\beq (\mathsf{H}_{\omega_b}, {\cal D}_{\omega_b},\pi_{\omega_b},\psi_{\omega_b}) :=
	\left(\{0\}, \{0\}, 0, 0\right)\label{GNSb2}\:.\eeq

\begin{remark}\label{remarklabels} {\em Observe that $\omega_b= \omega_{b'}$ if  $[b]=[b']$ referring to  the Gelfand-ideal quotient. Therefore the non-normalized states $\omega_b$ would be  better labelled by the vectors in 
$\mathcal{D}_\omega = \mathfrak{A}/G_{(\mathfrak{A},\omega)}$.}\hfill $\blacksquare$
\end{remark}

\subsection{Selfadjointness of $\overline{\pi_\omega(a)}$ and uniqueness of moment problems for deformed  non-normalized states} The fact that focusing on the deformations $\omega_b$ goes towards the correct direction in order to clarify issues A and B  is evident from the following result, the first main result of the paper,  which connects  uniqueness of the measures $\mu_{\omega_b}^{(a)}$ with selfadjointness of $\overline{\pi_\omega(a)}$.

\begin{theorem}\label{T1}  Let $\gA$ be a unital $*$-algebra, $a^*=a \in \gA$, and $\omega : \gA \to \bC$ a non-normalized state.  Assume that the  
   finite  positive $\sigma$-additive  Borel measure $\mu^{(a)}_{\omega_b}
 : \cB(\bR) \to [0,+\infty)$
solving the moment problem  for every non-singular deformation $\omega_b$
\beq\label{bmoment}
\omega_b(a^n) = \int_\bR \lambda^n d\mu^{(a)}_{\omega_b}(\lambda) \quad \mbox{for every $n \in \bN$.}
\eeq
  is unique\footnote{If  $\omega_b$ is not singular, some $\mu^{(a)}_{\omega_b}$ exists due to Proposition \ref{P2}. If it is singular, the zero measure solves the moment problem.}.
 The  following facts hold for  every  deformation $\omega_b$.
\begin{itemize}
\item[{\bf (a)}] $\pi_\omega(a)$ is essentially selfadjoint in $\sH_\omega$ and, more generally,  all  operators $\pi_{\omega_b}(a)$ are essentially selfadjoint in the respective $\sH_{\omega_b}$. 

\item[{\bf (b)}] All measures $\mu^{(a)}_{\omega_b}$ are induced by the single PVM $ P^{(\overline{\pi_{\omega}(a)})}$ in the sense that
\begin{align}
&\mu_{\omega_b}^{(a)}(E) = \langle \psi_{\omega_b}| P^{(\overline{\pi_{\omega_b}(a)})}(E) \psi_{\omega_b} \rangle\:, \quad E\in \cB(\bR)\:,
\label{muPVMb}\\
&\label{dimenticata}
P^{(\overline{\pi_{\omega_b}(a)})}(E) = P^{(\overline{\pi_{\omega}(a)})}(E)\spa\rest_{\sH_{\omega_b}} \:, \quad E\in \cB(\bR)\:.
\end{align}
\end{itemize}
\end{theorem}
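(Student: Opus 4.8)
The plan is to dissect the GNS space $\sH_\omega$ into the cyclic subspaces of $\pi_\omega(a)$ generated by the deformation vectors $\psi_{\omega_b}=\pi_\omega(b)\psi_\omega$, read off essential selfadjointness on each piece from the classical operator form of the Hamburger moment problem, and then glue everything back together. A preliminary remark that makes the bookkeeping light is that $(\omega_b)_c=\omega_{cb}$, so every deformation of $\omega_b$ is itself a deformation of $\omega$; hence the hypothesis of the theorem is automatically inherited by each $\omega_b$, and it suffices to prove statement (a) for $\pi_\omega(a)$ itself — applying the same conclusion with $\omega$ replaced by $\omega_b$ then yields essential selfadjointness of $\pi_{\omega_b}(a)$ in $\sH_{\omega_b}$ (the singular case being trivial, since there $\sH_{\omega_b}=\{0\}$, and the non-singular $\omega_b$ is a genuine non-normalized state).

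For the core of (a): fix $b\in\gA$ with $\omega_b$ non-singular, put $\cD_b:=\{\pi_\omega(p(a)b)\psi_\omega\ :\ p\in\bC[x]\}$ and $V_b:=\overline{\cD_b}\subseteq\sH_\omega$, and set $S_b:=\pi_\omega(a)\rest_{\cD_b}$, a symmetric operator in $V_b$ with cyclic vector $\psi_{\omega_b}$ satisfying $\langle\psi_{\omega_b}|S_b^n\psi_{\omega_b}\rangle=\omega(b^*a^nb)=\omega_b(a^n)$. Up to the (harmless) quotient by vectors of zero norm, $S_b$ on $\cD_b$ is unitarily equivalent to the symmetric ``multiplication by $x$'' operator attached to the moment sequence $\{\omega_b(a^n)\}_{n\in\bN}$, so the classical operator-theoretic theory of the moment problem (cf.\ \cite[Ch.~6]{Schmudgen17}, or \cite{RS2}) gives that determinacy of that sequence — which holds by hypothesis — is equivalent to essential selfadjointness of $S_b$ in $V_b$; in the degenerate case $\{\omega_b(a^n)\}$ comes from a finitely supported measure, $V_b$ is finite-dimensional and $S_b$ is automatically selfadjoint. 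Now take any $\eta\in\ker(\pi_\omega(a)^\dagger\mp i)$. Testing the identity $\langle\pi_\omega(a)\phi|\eta\rangle=\langle\phi|(\pm i)\eta\rangle$ on $\phi\in\cD_b$, and using that both $\phi$ and $S_b\phi$ lie in $V_b$, shows that the orthogonal projection of $\eta$ onto $V_b$ belongs to $\ker(S_b^\dagger\mp i)=\{0\}$. Thus $\eta\perp V_b$ for every $b$; since $\bigcup_b V_b$ contains $\{\pi_\omega(b)\psi_\omega\ :\ b\in\gA\}=\cD_\omega$, which is dense, $\eta=0$. Both deficiency spaces of $\pi_\omega(a)$ being trivial, $\pi_\omega(a)$ is essentially selfadjoint, and (a) follows in full.

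For (b), equation (\ref{muPVMb}) is then immediate: by (a) the operator $\pi_{\omega_b}(a)$ is essentially selfadjoint, so Proposition \ref{P1}(a) exhibits $E\mapsto\langle\psi_{\omega_b}|P^{(\overline{\pi_{\omega_b}(a)})}(E)\psi_{\omega_b}\rangle$ as a solution of the moment problem for $\omega_b$, and the uniqueness hypothesis forces it to equal $\mu^{(a)}_{\omega_b}$. For the compatibility (\ref{dimenticata}) I would show that $\sH_{\omega_b}$ is a reducing subspace of $\overline{\pi_\omega(a)}$ whose restriction is $\overline{\pi_{\omega_b}(a)}$: since $\cD_{\omega_b}\subseteq\cD_\omega$ one has $\overline{\pi_{\omega_b}(a)}\subseteq\overline{\pi_\omega(a)}$ at the level of graphs, both operators are selfadjoint (by (a)) in $\sH_{\omega_b}$, resp.\ $\sH_\omega$, and $\sH_{\omega_b}=\overline{D(\overline{\pi_{\omega_b}(a)})}$; comparing resolvents then shows that $(\overline{\pi_\omega(a)}\mp i)^{-1}$ maps $\sH_{\omega_b}$ into itself and restricts there to $(\overline{\pi_{\omega_b}(a)}\mp i)^{-1}$, whence $\sH_{\omega_b}$ reduces $\overline{\pi_\omega(a)}$ and $\overline{\pi_\omega(a)}\rest_{\sH_{\omega_b}}=\overline{\pi_{\omega_b}(a)}$. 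Passing to spectral measures of a selfadjoint operator restricted to a reducing subspace yields (\ref{dimenticata}).

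I expect the main obstacles to be two. First, pinning down precisely the classical equivalence ``determinacy of $\{\omega_b(a^n)\}\ \Leftrightarrow\ S_b$ essentially selfadjoint'', together with a clean treatment of the degenerate (finitely supported) case; only the direction ``determinate $\Rightarrow$ essentially selfadjoint'' is actually used, but it has to be stated carefully because $S_b$ lives a priori on the small cyclic subspace $V_b$, not on the full $\sH_{\omega_b}$. Second, the reducing-subspace manipulations — both the step ``$\eta\perp V_b$ for all $b$'' and the proof of (\ref{dimenticata}) — which are routine once one recognizes $V_b$, resp.\ $\sH_{\omega_b}$, as a reducing subspace and uses commutation of the relevant orthogonal projection with the resolvents; everything else is bookkeeping with the GNS quadruple and Propositions \ref{P1}--\ref{P2}.
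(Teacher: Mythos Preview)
Your proposal is correct and follows essentially the same approach as the paper: for (a) the paper also reduces to essential selfadjointness on each cyclic subspace $V_b$ via the classical determinacy criterion (\cite[Thm.~6.10]{Schmudgen17}) and then invokes \emph{Nussbaum's lemma} --- your deficiency-vector projection onto each $V_b$ is precisely the proof of that lemma carried out by hand --- while for (\ref{dimenticata}) it cites a ready-made reducing-subspace criterion (Proposition~\ref{propSR}) in place of your resolvent comparison. The citations and packaging differ, but the mathematical content is the same.
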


\begin{proof} (a) {\em Carleman's condition} \cite[Corollary 4.10]{Schmudgen17} assures that, if $\omega_b$  is singular, only the zero measure $\mu_{\omega_b}^{(a)}=0$ solves the moment problem.We can therefore assume that there is a unique measure solving the moment problem for every $b \in \gA$.
From \cite[Thm. 6.10]{Schmudgen17}  
translated into our GNS-like formulation as in the proof of Proposition \ref{P2} and using the notation introduced therein, 
we have that $\mu_{\omega_b}^{(a)}$ is unique if and only
if $\pi_{\omega_b}(a)\spa\rest_{\pi_{\omega_b}(\mathfrak{A}_a)\psi_b}$ is essentially selfadjoint in $\overline{\pi_{\omega_b}(\mathfrak{A}_a)\psi_b}$ -- where $\mathfrak{A}_a$ denotes the $*$-algebra generated by $a$.
Now observe that the set of all vectors
$\psi_{\omega_b} = \pi_\omega(b)\psi_\omega$, for $b \in \gA$, is dense in $\sH_\omega$ since it coincides with $\cD_\omega$, so that it is a dense set of uniqueness vectors for the symmetric 
 operator $\pi_\omega(a)$ in $\sH_\omega$, which is therefore essentially selfadjoint in view of  {\em Nussbaum's lemma}
  (Lemma on p. 201 of \cite{RS2}). Essential selfadjointness of $\pi_{\omega_b}(a)$
can be established similarly, taking  (\ref{GNSb}) and (\ref{GNSb2}) into account.  By hypothesis, fixing $b\in \gA$, also the measures $\mu_{\omega_{cb}}^{(a)}$ are uniquely determined by the deformations $\omega_{cb}$ with $c\in \gA$.  The set of all vectors $\psi_{\omega_{cb}} = \pi_\omega(cb)\psi_{\omega} = \pi_\omega(c) \psi_{\omega_b}$, for $c \in \gA$, is dense in $\sH_{\omega_b}$ since it coincides to $\cD_{\omega_b}$, so that it is a dense set uniqueness vectors for the symmetric 
 operator $\pi_{\omega_b}(a)$ in $\sH_{\omega_b}$ which is  essentially selfadjoint due to Nussbaum's lemma again.\\
 (b)  Assuming (\ref{dimenticata}), then (\ref{muPVMb})  is a trivial consequence of the uniqueness hypothesis and (a) of Proposition \ref{P1} applied to the non-normalized state $\omega_b$. 
Let us prove  (\ref{dimenticata}) to conclude. We know that $\overline{\pi_\omega(a)}$ admits $\cD_{\omega_b}$ as invariant subspace from (\ref{GNSb}) and
(\ref{GNSb2}), furthermore 
$\pi_\omega(a)\spa\rest_{\cD_{\omega_b}} = \pi_{\omega_b}(a)$ is essentially selfadjoint in the Hilbert space  $\sH_{\omega_b}$, which is a closed subspace of  $\sH_{\omega}$, and $\sH_{\omega_b}$ is the closure of 
$\cD_{\omega_b}$. Proposition \ref{propSR} 
implies that $\sH_{\omega_b}$ reduces $\overline{\pi_\omega(a)}$ (see Appendix \ref{reduction}) so that 
$\overline{\pi_{\omega_b}(a)}$ is the part of 
$\overline{\pi_\omega(a)}$ on $\sH_{\omega_b}$. Using the properties of the PVMs, it is easy to prove that if a closed subspace $\sH_0$ reduces a selfadjoint operator $T$, then the PVM of the part of $T$ on  $\sH_0$ (which is selfadjoint in view of Proposition \ref{priopevidente})  is the restriction to $\sH_0$ of the PVM of $T$.
Applying this result to $\overline{\pi_{\omega}(a)}$ and  $\overline{\pi_{\omega_b}(a)}$,
uniqueness of the PVM of a selfadjoint operator implies that the PVM of
$\overline{\pi_{\omega_b}(a)}$  is nothing but the restriction to $\sH_{\omega_b}$ of the PVM of $\overline{\pi_\omega(a)}$. This is just  (\ref{dimenticata}).
\end{proof}

\begin{example} 
	{\em A simple example of application of Theorem \ref{T1} is the following. (It is not physically interesting, but it just provides evidence that the very strong  hypotheses of Theorem \ref{T1} are fulfilled in some case.) Consider the unital Abelian $*$-algebra $\bC_{[0,1]}[x]$
made of all complex polynomials $p: [0,1] \to \bC$ in the variable $x$ with the involution defined as the standard point-wise complex conjugation, and the unit given by the constantly $1$ polynomial.
Define the state
 $$\omega(p) = \int_0^1 p(x) dx\quad p \in \bC_{[0,1]}[x]\:.$$
Using in particular the Stone-Weierstrass theorem and the uniqueness part of the GNS theorem, it is easy to prove that a GNS representation is $$\sH_\omega := L^2([0,1], dx)\:, \quad 
\pi_\omega(p) =p\cdot \;, \quad \cD_\omega = \bC_{[0,1]}[x]\:,\quad  \psi_\omega  =1$$ 
where $p\cdot$ denotes the polynomial $p$ acting as multiplicative operator on $\bC_{[0,1]}[x]$.
Evidently $\omega_q(p) = \int_0^1 |q(x)|^2 p(x)\: dx$.
Hence $|\omega_q( x^n)| \leq \frac{C_q}{n+1}$ where $C_q = \max_{[0,1]}|q|^2$.  {\em Carleman's condition} \cite[Corollary 4.10]{Schmudgen17} assures that there exists at most one positive Radon measure satisfying (\ref{bmoment}) for 
$b=p$ and $a=x$. Hence we can apply Theorem \ref{T1} and this means in particular that $\pi_\omega(x)$, i.e. the symmetric multiplicative operator $x$ with  domain consisting
 of the complex polynomials on $[0,1]$,  is essentially selfadjoint  in $L^2([0,1],dx)$.}\hfill $\blacksquare$\\
\end{example}

At this point, it may seem plausible that the statement (a)  of Proposition \ref{P1} can be reversed proving that, if $a^*=a\in \gA$, essential selfadjointness of all $\pi_{\omega_b}(a)$ is equivalent to uniqueness of all the measures $\mu_{\omega_b}^{(a)}$. Unfortunately life is not so easy as a consequence of the last item of the pair of  examples  below. 
The former example concerns 
again QM whereas the latter concerns elementary QFT as in Example \ref{exQFT}.

\begin{example}\label{Ex: uniqueness of moment problem and self-adjointness of the operator are not related} 
{\em {\bf (1)}
	Let $\mathfrak{A}_{\textrm{CCR},1}$ be the $*$-algebra generated by $I,Q,P$ defined as follows.
		$Q, P, I : \cS(\bR) \to \cS(\bR)$ are respectively the operators $$(Q\psi)(x) = x\psi(x)\;,\quad (P\psi)(x) = -i \frac{d}{dx}\psi(x)\:, \quad (I \psi)(x) = \psi(x) \:, \quad x \in \bR\:. $$
The finite linear combinations of finite compositions of these operators with arbitrary order  form a unital $*$-algebra with unit $I$, provided  the involution is defined as 
$A^* := A^\dagger\spa\rest_{\cS(\bR)}$ where here $^\dagger$ is the adjoint in $L^2(\bR, dx)$, so that $P^*=P$ and $Q^*=Q$. It is important to stress that 
we are here considering $\mathfrak{A}_{\textrm{CCR},1}$ as an abstract algebra (i.e., up to isomorphisms of unital $*$-algebras)  independently of the above
 concrete realization.
	Consider the state $\omega$
	$$\omega(A) = \int_\bR \overline{\psi_0(x)} (A \psi_0)(x) \: dx\quad A\in \gA_{\textrm{CCR},1}\:, \quad \psi_0(x)=\pi^{-\frac 14}e^{-\frac{x^2}{2}}\:.$$
	The choice of $\omega$ is evidently related with the ground state of the harmonic oscillator.
	Exploiting the uniqueness part of the GNS theorem, it is not difficult to prove that the GNS construction generated by $\omega$ leads to
	$$
		\mathsf{H}_\omega=L^2(\mathbb{R}, dx)\:, \:\:
		\pi_\omega(A) = A\spa\rest_{\cD_\omega}\:, \:\:\psi_\omega  = \psi_0 \:.
	$$
	The crucial point which differentiates the found representation of $\gA_{\textrm{CCR},1}$ from the concrete initial realization,
  is that  now $\cD_\omega \subsetneq \cS(\bR)$. Indeed,  $\cD_\omega$ 
	results to be  the dense subspace of  $L^2(\mathbb{R}, dx)$ made of all finite linear combinations 
	of {\em Hermite functions} $\{\psi_n\}_{n \in \mathbb N}$ (the eigenstates of the harmonic oscillator Hamiltonian). $\cD_\omega$ is dense just because
 $\{\psi_n\}_{n \in \mathbb N}$ is a Hilbert basis of $L^2(\mathbb{R}, dx)$.
	
	Let us pass to consider the deformations $\omega_B$, $B \in \gA_{\textrm{CCR},1}$.
	Using as equivalent generators $I$ and $A:=\frac{1}{\sqrt{2}}(X+iP)$, $A^*=\frac{1}{\sqrt{2}}(X-iP)$ -- the elements of $\gA_{\textrm{CCR},1}$ 
corresponding to the  annihilation and creation operators -- instead of $I, Q, P$  to define $\mathfrak{A}_{\textrm{CCR},1}$, one easily sees that
		$${\cal D}_\omega = {\cal D}_{\omega_B}\:, \quad \mathsf{H}_\omega=\mathsf{H}_{\omega_{B}}\:,\quad \mbox{and}\quad\pi_{\omega_B}= \pi_\omega\quad
 \mbox{for every choice of $B \in \gA_{\textrm{CCR},1}$\,.}$$
		 The first identity holds  because $\psi_\omega \in {\cal D}_{\omega_B}$ as established in  
lemma \ref{Laggiunto1} in appendix and
	 $\pi_{\omega_B}(\cdot)= \pi_\omega(\cdot)\spa\rest_{{\cal D}_{\omega_B}} $,  for every $B \in \mathfrak{A}_{\textrm{CCR},1}$ such that $\omega(B^*B)>0$ 
according to (\ref{GNSb}). 
The remaining identities are trivial consequences of the first ones.  
		 Regarding the problem of essentially selfadjointness, we have that 
\begin{itemize}
\item[(a)] $\pi_{\omega_B}(Q^k)$ (and $\pi_{\omega_B}(P^k)$) are  essentially selfadjoint 
 for $k=1,2$ because of {\em Nelson's theorem} \cite[Thm. X39]{RS2} as the $\psi_n$s are a set of {\em analytic vectors} for $\pi_{\omega_B}(Q)$ and 
$\pi_{\omega_B}(Q^2)$.
This is consequence of  the estimate, arising form
\beq
X = \frac{1}{\sqrt{2}} \left(A+A^*\right)\:, \label{XAA}
\eeq
 \beq ||\pi_{\omega}(Q^k) \psi_n|| \leq 2^{k/2} \sqrt{(n+k)!}\,,\qquad n=0,1,\ldots\,,\,k=1,2, \ldots\label{estimate0}\:,\eeq
 arising from \cite[Example 2 p.204]{RS2} which implies
	 \beq ||\pi_{\omega}(Q^k)^m \psi_n|| \leq 2^{mk/2} \sqrt{(n+mk)!}\,,\qquad m,n=0,1,\ldots\,,\,k=1,2, \ldots\label{estimate}\:;\eeq
	 and of  the fact that the span $\cD_{\omega_B}= \cD_\omega$ of the $\psi_n$s  is furthermore dense in $\sH_{\omega_B}=\sH_\omega$.
\item[(b)] $\pi_{\omega}(Q^4)=\pi_{\omega_B}(Q^4)$ is essentially selfadjoint. In fact, it is  symmetric, {\em bounded from below} and a direct computation 
based on (\ref{estimate}) proves that the $\psi_n$ are {\em semianalytic vectors} for it, hence we can apply \cite[Thm. X40]{RS2} 
which guarantees that $\pi_{\omega_B}(Q^4)$ is essentially selfadjoint.
\end{itemize}
		Let us focus on the moment problem relative to $(Q^k,\omega)$. 
Observe that, if $a$ and $a^+\subset a^\dagger$ are the standard {\em annihilation} and {\em creation operators} defined on $\cD_{\omega}$ such that \beq \mbox{$[a,a^+] = I$,\:
 $[a,a] =[a^+,a^+]= 0$, \:
and $a\psi_0 =0$,}\label{algQM}\eeq then it holds
\beq
\pi_\omega(Q) = \frac{1}{\sqrt{2}}\left( \pi_\omega(A)+ \pi_\omega(A^*)\right)=  \frac{1}{\sqrt{2}}\left( a+ a^+\right)\:. \label{Qaa}
\eeq
		Therefore, if  $s^{(k)}_n$ denotes the $n$-th moment of $Q^k$ in the state $\omega$ ,we have
		\begin{align}
			s^{(k)}_n:=\omega(Q^{kn})=\frac{1}{2^{kn/2}}\left\langle \psi_0 \left| (a+a^+)^{kn}\right.\psi_0\right\rangle  =\pi^{-\frac{1}{2}}\int_{\mathbb{R}}x^{kn}e^{-x^2}dx\,.\label{sal}
		\end{align}
		The moment problem relative to $(Q^k,\omega)$ admits at least a solution $\mu^{(Q^k)}_\omega$ due to Proposition \ref{P2} because $Q^k$ is Hermitian in the algebra.
		Let us examine uniqueness of this measure, i.e, in the jargon of moment problem theory, we go to check  if the moment problem is {\em determinate} taking advantage 
	of the results discussed in \cite{Schmudgen17}.
		\begin{itemize}
			\item[$k=1$] We may directly compute $s^{(1)}_{2n+1}=0$ and $s^{(1)}_{2n}=2^{-2n}(2n-1)!!$, which satisfies the hypothesis of {\em Carleman's condition} \cite[Cor. 4.10]{Schmudgen17}: the moment problem for $\{s^{(1)}_n\}_{n\in \bN}$ is thus {\em determinate}.
			\item[$k=2$]  We may apply {\em Cramer's condition} \cite[Cor.4.11]{Schmudgen17} to conclude that the moment problem for $\{s^{(2)}_n\}_{n\in \bN}$ is again {\em determinate}.
			\item[$k=3$]  We have
			\begin{align}
			s^{(3)}_n:=\pi^{-\frac{1}{2}}\int_{\mathbb{R}}x^{3n}e^{-x^2}dx
			=\frac{1}{3\sqrt{\pi}}\int_{\mathbb{R}}y^{n}\frac{e^{-y^\frac{2}{3}}}{y^{\frac{2}{3}}}dy
			=:\int_{\mathbb{R}}y^nf(y)dy\,.
			\end{align}
			We now apply {\em Krein's condition} for indeterminacy \cite[Thm. 4.14]{Schmudgen17}: since
			\begin{align*}
			\int_{\mathbb{R}}\frac{\log f(x)}{1+x^2}dx=
			-\int\frac{\log(3\sqrt{\pi})}{1+x^2}dx
			-\int_{\mathbb{R}}\frac{x^{\frac{2}{3}}}{1+x^2}dx
			-\frac{2}{3}\int_{\mathbb{R}}\frac{\log |x|}{1+x^2}dx>-\infty\,,
			\end{align*}
			the moment problem for $\{s^{(3)}_n\}_{n\in \bN}$ is {\em not  determinate}.
			\item[$k=4$] We have
			\begin{align*}
				s^{(4)}_n=
				\frac{1}{\sqrt{\pi}}\int_{\mathbb{R}}x^{4n}e^{-x^2}dx=
				\frac{2}{\sqrt{\pi}}\int_0^{+\infty}x^{4n}e^{-x^2}dx=
				\frac{1}{2\sqrt{\pi}}\int_0^{+\infty}y^{n}\frac{e^{-y^\frac{1}{2}}}{y^{\frac{3}{4}}}dy=:
				\int_{\mathbb{R}}y^nf(y)dy\,.
			\end{align*}
			Once again Krein's condition is satisfied:
			\begin{align*}
				\int_{\mathbb{R}}\frac{\log f(x)}{1+x^2}dx=
				-\int_0^{+\infty}\frac{\log(2\sqrt{\pi})}{1+x^2}dx
				-\int_0^{+\infty}\frac{x^{\frac{1}{2}}}{1+x^2}dx
				-\frac{3}{4}\int_0^{+\infty}\frac{\log x}{1+x^2}dx>-\infty\,.
			\end{align*}
			The moment problem associated with $\{s^{(4)}_n\}_{n\in \bN}$ is therefore  {\em not  determinate}.
		\end{itemize}
 The case $k=4$ provides  a counter-example to the converse of Theorem \ref{T1} in elementary QM.
	In fact,  there are many measures $\mu^{(Q^4)}_\omega$ associated to the pair $(\pi_\omega(Q^4),\omega)$ 
because the moment problem is indeterminate, 
but every $\pi_{\omega_B}(Q^4)= \pi_{\omega}(Q^4)$ is essentially selfadjoint.\\
 All the found results can be recast for $P^n$  by exploiting the unitary map (Fourier transform) that transforms $\pi_\omega(Q)$
to   $\pi_\omega(P)$ and $\pi_\omega(P)$
to   $-\pi_\omega(Q)$ leaving $\psi_0$ invariant.\\

\noindent {\bf (2)} Taking advantage of the results presented in \cite{Wald}, let us consider again the $*$-algebra of bosonic fields as in item (2) of example \ref{exQFT} 
whose $*$-algebra is $\gA[M,g]$
and the generators (field operators smeared with real smooth solutions of the KG equation with compactly supported Cauchy data) are denoted by $\Phi[\varphi]$.
 A particularly relevant  class of 
states $\omega : \gA[M,g] \to \bC$ is the one of {\em Gaussian} ones (also known as  {\em quasifree}, see e.g., \cite{Kay-Wald-91,Wald,book}).
If $\omega$ is Gaussian \cite{KM}, then there is a $\bR$-linear map $K\colon Sol[M,g] \to \sH^{(1)}_\omega$, where $\sH^{(1)}_\omega$ is a Hilbert space
called {\em one-particle space}, with the following properties: \begin{itemize} \item[(a)] $K(Sol[M,g])+ i K(Sol[M,g])$ is dense in $\sH^{(1)}_\omega$;
\item[(b)] for all $\varphi,\tilde{\varphi} \in Sol[M,g]$ it
 holds $\langle K\varphi,K\tilde{\varphi} \rangle=\omega(\Phi[\varphi] \Phi[\tilde{\varphi} ])$;
\item[(c)] for all $\varphi,\tilde{\varphi} \in Sol[M,g]$ it
 holds  $Im \langle K\varphi,K\tilde{\varphi} \rangle = -\frac{1}{2}\sigma(\varphi, \tilde{\varphi})$.
\end{itemize}
Moreover the pair $(K, \sH_\omega^{(1)})$ is unique up to unitary transformations and it determines the GNS structure of $\omega$ as we are going to illustrate.  From now on,  $ \cF_+(\sH_\omega^{(1)})$ is the {\em bosonic Fock space} relying upon the {\em one-particle subspace} $\sH_\omega^{(1)}$ and 
 $\sH_\omega^{(n)}\subset \cF_+(\sH_\omega^{(1)})$
is the  subspace made of all  symmetrized products of  $n=1,2,\ldots$ factors in $\sH_\omega^{(1)}$. Furthermore
$a_x$ and $a^+_x\subset (a^+_x)^\dagger$, 
with $x \in \sH^{(1)}_\omega$, are 
standard {\em bosonic annihilation} and {\em creation operators} defined on the dense invariant subspace $\cN_\omega$
given by the finite span of $\psi_\omega$ and all spaces $\sH_\omega^{(n)}$.
They satisfy,
\beq \mbox{$[a_x,a^+_y] = \langle x|y\rangle I$, \:
$[a_x,a_y]=
[a^+_x,a^+_y]=0$, \: and $a_x\psi_\omega=0$,} \label{algQFT}\eeq
where $\langle \cdot |\cdot \rangle$ is the inner product in $\sH_\omega^{(1)}$.
The GNS structure of $\omega$ is as follows.
\begin{itemize}
\item[(i)] $\sH_\omega = \cF_+(\sH^{(1)}_\omega)$.
\item[(ii)]  The representation $\pi_\omega$ has the explicit form\footnote{The creation and annihilation operators $a,a^\dagger$ appearing  in Eq.(3.2.28) in  \cite{Wald} are our
 $-ia_{K\varphi}$ and
$ia^+_{K\varphi}$.} (to compare with (\ref{XAA}))
\beq \pi_\omega(\Phi[\varphi]) =  \left( a_{K\varphi}+ a^+_{K\varphi}\right)|_{\cD_\omega}\:,\label{PHI}\eeq
 where  $\cD_\omega$ is defined below.
\item[(iii)]  $\cD_\omega\subset \cN_\omega$  is the span  of the vectors  constructed by applying the operators $a^+_{K\varphi}$, with $\varphi \in Sol[M,g]$, to the GNS cyclic
 vector $\psi_\omega$ a finite but  arbitrarily large  number of times.  
\item[(iv)]  $\psi_\omega$ coincides with the vacuum
 state 
vector
of $\cF_+(\sH^{(1)}_\omega)$. 
\end{itemize}
An estimate similar to (\ref{estimate0}) holds true \cite[Proof of Theorem X.41,  p.210]{RS2}\footnote{The field operators $\Phi_S(f)$ used there correspond to our
 $2^{-1/2}\pi_\omega(\Phi[\varphi])$.}
\beq ||a_x \psi||\;, ||a^+_x \psi|| \leq \sqrt{(n+1)} ||x||||\psi|| \,,\qquad   x \in \sH_\omega^{(1)}\:, \psi \in \sH^{(n)}_\omega\:,   n=0,1,\ldots
\label{estimateQFT0}\:,\eeq
and also, for $\varphi \in Sol[M,g]$ and  $n=0,1,\ldots\,,\,k=1,2, \ldots$,
\beq || \pi_\omega((2^{-1/2} \Phi[\varphi])^k ) \psi|| \leq ||K\varphi||^k 2^{k/2}\sqrt{(n+k)!}||\psi|| \,,\qquad \psi \in \sH_\omega^{(n)}\cap \cD_\omega,
\label{estimateQFT}\eeq
as a consequence of (\ref{PHI}).\\
 Let us pass to discuss the GNS quadruple of deformations $\omega_b$ with the hypothesis that $\omega$ is pure.
It turns out that $\omega$ is {\em pure if and only if} $K(Sol[M,g])$ alone is dense in $\sH^{(1)}$ \cite{Kay-Wald-91}.
In that case, if  $\varphi \in Sol[M,g]$, there must be a sequence  $\varphi_n \in Sol[M,g]$
such that  $K\varphi_n \to iK\varphi$ for $n\to +\infty$.
Since $a_{x}$ and $a^+_{x}$ are respectively antilinear and linear  in their arguments $x \in \sH_\omega^{(1)}$, we have from (\ref{PHI})
 and (\ref{estimateQFT0}),
\beq a_{K\varphi}\psi = \lim_{n\to +\infty}\frac{1}{2} \left( \pi_\omega( \Phi[\varphi]) + i  \pi_\omega( \Phi[\varphi_n]) \right)\psi \quad \mbox{and}\quad
 a^+_{K\varphi}\psi = \lim_{n\to +\infty}\frac{1}{2} \left( \Phi[\varphi] ) - i  \pi_\omega( \Phi[\varphi_n] ) \right)\psi\nonumber \eeq
for every given   $\psi \in \cD_\omega$.
 Using  these  identities and taking estimates (\ref{estimateQFT0}) into account for passing from iterated limits in $n$
 to single limits in $n$, one sees that  $\pi_\omega(\gA[M,g]) \pi_\omega(b) \psi_\omega$ 
is dense in $\sH_\omega$ for every given 
$b\in \gA[M,g]$ such that $\omega(b^*b) \neq 0$. We conclude that, if $\omega$ is pure, the GNS 
structure of the (non-singular) deformation 
$\omega_b$ is 
$$\cD_{\omega_b}=\pi_\omega(\gA(M,g)) \pi_\omega(b) \psi_\omega\:,\quad  \sH_{\omega_b}=\sH_\omega =\cF_+(\sH_\omega^{(1)}) \:, \quad \pi_{\omega_b}= \pi_\omega|_{\cD_{\omega_b}}\:, \quad \psi_{\omega_b}=\pi_\omega(b)\psi_\omega.$$
In particular $\cD_{\omega_b}\subset \cD_\omega$.
There are however special physically important  cases where $\cD_{\omega_b} = \cD_\omega$ so that all the discussion about the CCR algebra of $X$ and $P$  in the previous example, including the 
counterexamples, 
can be  completely recast in this QFT context.
The form above of the GNS quadruple of $\omega_b$ implies that $\cD_{\omega_b} = \cD_\omega$ is equivalent to $\psi_\omega \in \cD_{\omega_b}$.
A sufficient condition for  $\psi_\omega \in \cD_{\omega_b}$  is that, 
\beq \mbox{for every $\varphi \in Sol[M,g]$ there is $\varphi' \in Sol[M,g]$ such that  $K\varphi' = i K\varphi$} \:.\label{iK}\eeq
This requirement  implies in particular that the state $\omega$ is pure and also promotes $a_{K\varphi}$ and $a^+_{K\varphi}$
 to elements and  generators of the algebra $\gA[M,g]$ as in the case of the CCR
of $X$ and $P$. Everything is established in lemma \ref{Laggiunto2} in appendix \ref{AppendixB}.
 When (\ref{iK}) is valid,  the GNS structure of the (non-singular) deformation 
$\omega_b$ is therefore  again
\beq \cD_{\omega_b} =\cD_\omega\:,\quad  \sH_{\omega_b}=\sH_\omega =\cF_+(\sH_\omega^{(1)}) \:, \quad \pi_{\omega_b}= \pi_\omega\:, \quad \psi_{\omega_b}=
\pi_\omega(b)\psi_\omega.\label{GNSQFT}\eeq
The validity of  (\ref{iK})  is in particular guaranteed if $(M,g)$ is the standard {\em four dimensional Minkowski spacetime} and $\omega$ is the (pure) 
{\em Poincar\'e invariant 
Gaussian vacuum state} for the Klein-Gordon quantum field with {\em strictly positive} mass $m>0$. Here,  the Klein-Gordon operator reads $P= -\frac{\partial^2}{\partial t^2} + \Delta -m^2$
 where $\Delta$
is the standard spatial Laplacian in $\bR^3$ and we are using a given Minkowskian coordinate system $(t, {\bf x)} \in \bR \times \bR^3$ whose $t=0$ surface is the preferred
 Cauchy surface $\Sigma$ we will henceforth use.
In this case,  the general Fock-space  structure of the GNS representation of $\omega$
declared in (a)-(c), (i)-(iv)
 is valid also redefining the space $Sol[M,g]$ as the space of real smooth solutions of KG equation 
with Cauchy data (in particular on $\Sigma$) which belongs to   $ \cS(\bR^3)_\bR$ (the real space of real-valued Schwartz functions on 
$\bR^3$). It turns out that $\sH_\omega^{(1)} = L^2(\bR^3, d{\bf k})$ and
$$\left(K\varphi\right)({\bf k}) = \frac{1}{\sqrt{2}{(\bf k}^2 +m^2)^{1/4}}\int_{\bR^3} \frac{e^{-i {\bf k}\cdot {\bf x}}}{(2\pi)^{3/2}}
 \left( 
\sqrt{-\Delta+ m^2} \varphi(0,{\bf x}) + i \partial_t \varphi(0,{\bf x})   \right)  d{\bf x}\:, \quad \varphi \in Sol[M,g] \:.$$
The crucial observation is that, if $\varphi(0,\cdot), \partial_t\varphi(0,\cdot) \in \cS(\bR^3)_\bR$ then 
$$f := - (-\Delta +m^2)^{-1/2} \partial_t\varphi(0,\cdot) \:, \quad p := (-\Delta +m^2)^{1/2} \varphi(0,\cdot)$$
still satisfy   $f,p \in \cS(\bR^3)_\bR$ so that there is a unique $\varphi' \in Sol[M,g]$ with Cauchy conditions $\varphi'(0, \cdot) = f$ and $\partial_t\varphi'(0,\cdot) = p$ and due to the above expression for $K$,   $\varphi'$ satisfies (\ref{iK}).\\

In summary, for a state $\omega : \gA[M,g] \to \bC$,
we finally have the following results analogous to the ones in the quantum mechanical case of (1) example \ref{Ex: uniqueness of moment problem and self-adjointness of the operator are not related} and established with an essentially identical procedure
\footnote{Condition  \eqref{iK} has been verified for the algebra $\mathfrak{A}[M,g]$ when  $(M,g)$ is 4D Minkowski  spacetime and  the  generators $\Phi[\varphi]$ are smeared  with real test Schwartz functions serving as Cauchy data. This is a very special case which is suitable in Minkowski spacetime especially because therein the Cauchy surfaces can be chosen as submanifolds 
isometrically isomorphic to $\bR^3$. However,  we expect that the conclusions (a1), (b1) hold  true also for the algebra $\mathfrak{A}(M,g)$ (i.e., field operators smeared with Cauchy data in $C_c^\infty(\bR^3)$)  as the closure of the $\pi_\omega$-representation of the generators $\Phi(f)\in\mathfrak{A}(M,g)$ and $\Phi[Ef]\in\mathfrak{A}[M,g]$ coincides, though a rigorous proof of this fact is not presented here.
Some standard spacetime-deformation argument  would probably allow to prove similar conclusions for a number of pure quasifree states in curved spacetimes too, though we refrain to specify the precise set of states.
We are grateful to an anonymous referee for drawing our attention on these issues.}.
\begin{itemize}
\item[(a)]  $\pi_\omega(\Phi[\varphi]^k)$ are essentially selfadjoint  for $k=1,2$ since  the vectors in  $\sH_\omega^{(n)}\cap 
\cD_\omega$ for all $n$, whose span is dense in $\sH_\omega$,
are  analytic vectors for $k=1,2$ for those symmetric operators.
This is consequence of the inequality  arising from (\ref{estimateQFT}), for  $n=0,1,\ldots\,,\,m, k=1,2, \ldots $
\beq ||\pi_{\omega}(((2^{-1/2}\Phi[\varphi])^k)^m \psi|| \leq (2||K\varphi||^2)^{mk/2} \sqrt{(n+mk)!}||\psi|| \,,\quad \psi \in \sH_\omega^{(n)}\cap \cD_\omega, \:
\label{estimateQFT3}\eeq
\item[(a1)] When assuming (\ref{iK}), the result in (a) extends to  
$\pi_{\omega_b}(\Phi[\varphi]^k)$, for every $b\in \gA[M, g]$,
  just in view of the the GNS structure (\ref{GNSQFT}), by  noticing that $\pi_{\omega_b}(\Phi[\varphi]^k)= \pi_\omega(\Phi[\varphi]^k)$, $\sH_{\omega_b}= \sH_\omega$ and
 $\sH_\omega^{(n)}\cap 
\cD_\omega = \sH_{\omega}^{(n)} \cap \cD_{\omega_b}$.
\item[(b)] $\pi_\omega(\Phi[\varphi]^4)$ is essentially selfadjoint because $\pi_\omega(\Phi[\varphi]^4)$ 
is bounded below -- still using \eqref{estimateQFT3} --
 while the vectors in $\cup_n(\sH_\omega^{(n)}\cap \cD_\omega) = \cup_n(\sH_\omega^{(n)} \cap \cD_{\omega_b})$ are semi-analytic vectors for $\pi_\omega(\Phi[\varphi]^4)$
 and their span is dense in $\mathsf{H}_\omega$.
\item[(b1)] When assuming (\ref{iK}), the result in (b) extends to  
 $\pi_{\omega_b}(\Phi[\varphi]^4), $for every $b\in \gA[M, g]$,
  just in view of the the GNS structure (\ref{GNSQFT}), by  noticing that $\pi_{\omega_b}(\Phi[\varphi]^4)= \pi_\omega(\Phi[\varphi]^4)$, $\sH_{\omega_b}= \sH_\omega$ and
 $\sH_\omega^{(n)}\cap 
\cD_\omega = \sH_{\omega}^{(n)} \cap \cD_{\omega_b}$.
\end{itemize}
Concerning the moment problem of $\pi_\omega(\Phi[\varphi]^k)$ with $K\varphi\neq 0$,  the sequence of moments
\begin{align}\label{momQFT}
			s^{(k)}_n:=\omega((2^{-1/2}\Phi[\varphi])^{kn})=\frac{1}{2^{kn/2}}\left\langle \psi_\omega \left| (a_{K\varphi}+a_{K\varphi}^+)^{kn}\right.\psi_\omega\right\rangle\:, 
		\end{align}
is identical to the analogous sequence (\ref{sal}) of $\pi_\omega(Q^k)$ discussed in the previous example just because 
(\ref{algQM}) and (\ref{algQFT}) are formally identical (for $||K\varphi||=1$ otherwise it is sufficient to normalize $\varphi$ accordingly) and the moments are just computed using only them as is evident if comparing (\ref{sal}) and (\ref{momQFT}).
Therefore, for $k=1,2$ the problem is determinate, whereas it is not determinate for $k=3,4$.\\
Assuming (\ref{iK}), thus in Minkowski spacetime in particular,   $k=4$ provides  counter-examples to the converse of Theorem \ref{T1} in elementary QFT:
 there are many measures $\mu^{(\Phi[\varphi]^4)}_\omega$ associated to the pair $(\Phi[\varphi]^4,\omega)$ 
because the moment problem is indeterminate,  but every $\pi_{\omega_b}(\Phi[\varphi]^4)= \pi_{\omega}(\Phi[\varphi]^4)$ is essentially selfadjoint.\hfill $\blacksquare$}\\
\end{example}
\begin{remark}
	{\em Since $\pi_\omega(Q^4)\geq 0$, we  can try to restrict our analysis  of  existence and uniqueness
problem  for measures  solving the moment problem for the sequence $\{s^{(4)}_n\}_{n \in \bN}$
when they are supported in 
$[0,+\infty)$ rather than in  the whole $\bR$.
This alternate   formulation  is called  {\em Stieltjes moment  problem}. 
				However, if this problem were determinate with unique measure $\mu$, the standard
 Hamburger problem would be determine as well (but we know that it is not)
unless  $\mu(\{0\})\neq 0$ on account of \cite[Corollary 8.9]{Schmudgen12}. Since $\overline{\pi_\omega(Q^4)}$ is 
 selfadjoint, that unique  $\mu$ would also  with the measure (\ref{muPVM})
obtained from the PVM of $\overline{\pi_\omega(Q^4)}$  with $\bR$ replaced for $[0,+\infty)$ since also this spectral 
measure is a solution of the same moment problem over $[0,+\infty)$.
 On the other hand, $\overline{\pi_\omega(Q^4)}$   has empty point spectrum (it is the multiplicative operator $x^4$ in $L^2(\bR,dx)$), against the 
assumption $\mu(\{0\})\neq 0$.
Therefore  also Stieltjes problem is  
{\em not determinate}. A similar comment can be ascribed to $\Phi[\varphi]^4$.}\hfill $\blacksquare$\\
\end{remark}

\begin{remark}
	{\em Example \ref{Ex: uniqueness of moment problem and self-adjointness of the operator are not related} shows that powers of a quantum field $\Phi(h)^k$ may lead to algebraic observables with a possibly non-determinated moment problem.
	It would be interesting to see whether a similar result holds for the Wick powers $\colon\spa\Phi^k\spa\colon(h)$ of a quantum field and in particular for the stress-energy tensor.
	Results on the self-adjointness of these observables are already present in the literature \cite{Rabszytn-89,Sanders-12}.
	We postpone the discussion of this example to a future investigation.} \hfill $\blacksquare$  \\
\end{remark}

\noindent Corollary \ref{Cor: uniqueness of consistent measures} below  can be in a sense interpreted  as a weak converse of Theorem \ref{T1}. However, to see it,  a suitable mathematical technology must be introduced.

\section{The notion of POVM and its relation with symmetric operators}\label{Sec: The notion of POVM and its relation with symmetric operators}
A {\em Positive Operator Valued Measure} (POVM for short) is an extension of the notion of {\em Projection Valued Measure} (PVM).
Since PVMs are one-to-one with selfadjoint operators and have  the physical meaning of a quantum observable (see, e.g., \cite{Moretti2017} for a thorough discussion on the subject), POVMs provide a generalization of the notion of quantum observable.
Similarly to the fact that PVMs are related with selfadjoint operators, it results that POVMs are connected to merely {\em symmetric} operators, even if this interplay is more complicated.
Since GNS operators $\pi_\omega(a)$ representing Hermitian elements are in general only symmetric, the notion of POVM seems to be relevant in
 our discussion on Issue A.

We briefly collect below some material on POVMs  and generalized extension of symmetric operator -- see \cite{Akniezer-Glazman-93,Busch-14, Dubin-Kiukas-Pellonpaa-Ylinen-14} for a complete discussion. 

\begin{remark}\label{remequ}
	{\em The complete equivalence between  the notion of  POVM used in 
 \cite{Busch-14,Dubin-Kiukas-Pellonpaa-Ylinen-14} and the older notion  of {\em spectral function} adopted in \cite{Akniezer-Glazman-93}   is  discussed and established in Section 4.9 of \cite{ Busch-14}, especially Theorem 4.3 therein.
 In  \cite{ Busch-14}, spectral functions are called {\em semispectral functions} while  normalized POVMs are  named {\em semispectral measures}.}\hfill $\blacksquare$
\end{remark}

\subsection{POVM as a generalized observable in a Hilbert space}
 $(\Omega,\Sigma)$ will henceforth denote a measurable space, where $\Sigma$ is a $\sigma$-algebra on the set $\Omega$.
$\mathfrak{B}(\mathsf{H})$ will denote the space of bounded linear operators on the Hilbert space $\mathsf{H}$ and 
 $\cL(\mathsf{H})\subset\mathfrak{B}(\mathsf{H})$ is the space of orthogonal projections over $\mathsf{H}$. We start from the following general definition which admits some  other equivalent formulations \cite{Dubin-Kiukas-Pellonpaa-Ylinen-14} (see Remark 4.4 of \cite{Busch-14} in particular, where the second requirement below is alternatively and equivalently stated).\
\begin{definition}\label{Definition: POVM}
	{An operator-valued map $Q\colon\Sigma\to\mathfrak{B}(\mathsf{H})$ is called {\bf positive-operator valued measure} (POVM)   if it satisfies the following two conditions:
	\begin{enumerate}
		\item
		for all $E\in\Sigma$, $Q(E)\geq 0$;
		\item 
		for all $\psi,\varphi\in\mathsf{H}$
 the map $Q_{\psi,\varphi}\colon\Sigma\ni E\mapsto\langle \psi|Q(E)\varphi\rangle\in\mathbb{C}$ 
defines a {\em complex $\sigma$-additive  measure}  according to \cite{Rudin87}
\footnote{
	With reference to \cite[Ch. 6]{Rudin87} this implies that for all $E\in\mathscr{B}(\Sigma)$ and for all countable partitions $E=\cup_{n\in\mathbb{N}} E_n$, with $E_n\in\mathscr{B}(\Sigma)$ and $E_n\cap E_m=\emptyset$ if $n\neq m$, it holds $Q_{\psi,\varphi}(E)=\sum_{n\in\mathbb{N}}Q_{\psi,\varphi}(E_n)$, where the series is absolutely convergent. 
	In particular, the {\em total variation}  of this measure $|Q_{\psi, \varphi}|$ is a {\em finite} positive $\sigma$-additive measure.
	Notice that
	$Q_{\psi,\psi}$ is a finite positive measure.}.
	\end{enumerate}
A POVM $Q$ is said  to be {\bf normalized} if $Q(\Omega)=I$.}\hfill $\blacksquare$\\
\end{definition}
\begin{remark}
{\em 	A normalized POVM 
is a standard  
  PVM (e.g. see \cite{Moretti2017,Schmudgen12}) if and only if  $Q(E)Q(F)=Q(E\cap F)$ for  $E,F\in \Sigma$,   so that 
$Q(E)\in\cL(\mathsf{H})$.}\hfill $\blacksquare$\\
\end{remark}
\noindent 
Normalized POVMs are physically intepreted and called {\bf generalized observables}, in the Hilbert-space formulation of quantum theory and we henceforth adopt this interpretation and apply it to our context.
Since $Q(E)\geq 0$ for every POVM, the map   $Q_{\psi,\psi}\colon \Sigma\ni E\mapsto\langle \psi|Q(E)\psi\rangle$ always defines a
 {\em finite} 
{\em positive} $\sigma$-additive measure for every fixed  $\psi\in\mathsf{H}$ which is also a {\em probability measure} on $(\Omega, \Sigma)$ if $Q$ is normalized and $||\psi||=1$.
Similarly to  what happens for a PVM, the physical interpretation of $\langle \psi|Q(E)\psi\rangle$ is the probability that, measuring the {\em generalized observable} associated to the normalized  POVM when the state is represented by the normalized vector 
$\psi$, the outcome belongs to the 
Borel set $E\subset \bR$. 

What is lost  within  this  more general framework in comparison with the physical interpretation of PVMs  is (a) the logical interpretation of $Q(E)$ as an  elementary {\em YES-NO observable} also known as {\em test},  (b) the possibility to describe the {\em post-measurement} state with the standard L\"uders-von Neumann reduction postulate 
exploiting only the POVM (more information must be supplied), (c) the fact that  observables $Q(E)$ 
and $Q(F)$ are necessarily compatible.

There exists an extended  literature on these topics and we refer the reader to \cite{Busch-14} for a modern also 
physically minded treatise on the subject.
Another difference concerns the one-to-one  correspondence  between PVM over $\bR$ and selfadjoint operators which, in the standard spectral theory, permits to identify PVMs (quantum observables) with selfadjoint operators.

Switching to  POVMs, it turns out that there is a more complicated correspondence between normalized POVMs and {\em symmetric operators} which we will describe shortly. 
The typical generalized observable which can be described in terms of a POVM is  the (arrival) {\em time observable} of a particle \cite{Busch-14}. That observable cannot be described in terms of selfadjoint operators (PVMs) if one insists on the validity of CCR with respect to the energy observable and these no-go results are popularly known as {\em Pauli's theorem} (see, e.g. \cite{Moretti2019}).

A celebrated result due to Naimark establishes  that POVMs are connected to PVMs  through  the famous  {\em Naimark's dilation theorem}, which we state for the case of a normalized POVM  \cite[Thm. Vol II, p.124]{Akniezer-Glazman-93} (see \cite{Dubin-Kiukas-Pellonpaa-Ylinen-14} for the general case).

\begin{theorem}{\bf [Naimark's dilation theorem]}\label{Theorem: Naimark's dilation theorem}
	Let $Q\colon\Sigma\to\mathfrak{B}(\mathsf{H})$ be a normalized POVM.
	Then there exists a Hilbert space $\mathsf{K}$ which includes $\mathsf{H}$ as a closed subspace, i.e. $\mathsf{K}= \mathsf{H}\oplus \mathsf{H}^\perp$, and a PVM $P\colon\Sigma\to\cL(\mathsf{K})$ such that
	\begin{align}\label{Equation: Naimark's dilation relation}
		Q(E)=P_{\mathsf{H}}P(E) \spa\rest_{\mathsf{H}}\qquad\forall E\in\Sigma\:,
	\end{align}
	where $P_{\mathsf{H}}\in \cL(\mathsf{K})$ is the orthogonal projector onto $\mathsf{H}$.
	The triple $(\mathsf{K},P_{\mathsf{H}},P)$ is called {\bf Naimark's dilation triple}.
\end{theorem}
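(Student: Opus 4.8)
The plan is to reconstruct the dilation triple $(\mathsf{K},P_{\mathsf{H}},P)$ explicitly by a GNS/Stinespring-type construction, the point being that the $\sigma$-algebra $\Sigma$ supplies a commuting family of projections $\chi_E$ which we must ``linearize'' into a genuine PVM. First I would introduce the complex vector space $\mathcal S$ of $\mathsf H$-valued $\Sigma$-simple functions, i.e. finite sums $f=\sum_i\chi_{E_i}\psi_i$ with $E_i\in\Sigma$, $\psi_i\in\mathsf H$, and equip it with the sesquilinear form
$$\langle f\,|\,g\rangle_{\mathcal S}:=\sum_{i,j}\big\langle\psi_i\,\big|\,Q(E_i\cap F_j)\,\phi_j\big\rangle\,,\qquad f=\sum_i\chi_{E_i}\psi_i\,,\ \ g=\sum_j\chi_{F_j}\phi_j\,.$$

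The first block of work is bookkeeping. A common-refinement argument shows $\langle\,\cdot\,|\,\cdot\,\rangle_{\mathcal S}$ is independent of the chosen simple-function representatives, and rewriting $f$ over a disjoint partition $\{A_k\}$ of $\bigcup_iE_i$ gives $\langle f|f\rangle_{\mathcal S}=\sum_k\langle\tilde\psi_k|Q(A_k)\tilde\psi_k\rangle\geq0$, using only positivity of $Q(A_k)$ (condition (1) of Definition \ref{Definition: POVM}) and finite additivity on disjoint sets. Passing to the quotient by the null space $\mathcal N=\{f:\langle f|f\rangle_{\mathcal S}=0\}$ and completing yields a Hilbert space $\mathsf K$. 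The map $J:\mathsf H\to\mathsf K$, $J\psi:=[\chi_\Omega\psi]$, is isometric precisely because $Q$ is normalized, $\langle\chi_\Omega\psi|\chi_\Omega\psi\rangle_{\mathcal S}=\langle\psi|Q(\Omega)\psi\rangle=\|\psi\|^2$; we identify $\mathsf H$ with the closed subspace $J\mathsf H$ and let $P_{\mathsf H}$ be the corresponding orthogonal projector.

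Next I would let $P(E)$ be the bounded extension of multiplication by $\chi_E$ on simple functions, $P(E):[f]\mapsto[\chi_Ef]$. The contraction estimate $\langle\chi_Ef|\chi_Ef\rangle_{\mathcal S}\leq\langle f|f\rangle_{\mathcal S}$, which follows from $Q(E\cap A_k)\leq Q(A_k)$, shows $P(E)$ descends to $\mathcal S/\mathcal N$ and extends to $\mathsf K$ with $\|P(E)\|\leq1$; then the pointwise identities $\chi_E\chi_E=\chi_E$, $\chi_E\chi_F=\chi_{E\cap F}$, $\chi_\Omega=1$ together with the symmetry of the form give $P(E)^\dagger=P(E)=P(E)^2$, $P(E)P(F)=P(E\cap F)$, $P(\Omega)=I$. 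The dilation relation is then immediate: for $\psi,\varphi\in\mathsf H$,
$$\langle\psi\,|\,P_{\mathsf H}P(E)\spa\rest_{\mathsf H}\varphi\rangle=\big\langle[\chi_\Omega\psi]\,\big|\,[\chi_E\varphi]\big\rangle=\langle\psi\,|\,Q(E\cap\Omega)\varphi\rangle=\langle\psi\,|\,Q(E)\varphi\rangle\,.$$

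The one genuinely analytic step — and the main obstacle — is to verify that $E\mapsto P(E)$ is $\sigma$-additive in the strong operator topology, which is what makes it a PVM rather than merely a finitely additive projection-valued set function; all the rest is algebra. Here I would exploit condition (2) of Definition \ref{Definition: POVM}, namely that each $E\mapsto\langle\psi_0|Q(E)\psi_0\rangle$ is a finite positive $\sigma$-additive measure: for a generator $[\chi_A\psi_0]$ and a disjoint sequence $\{E_n\}$ one gets $\big\|\big(P(\bigcup_nE_n)-\sum_{n\leq N}P(E_n)\big)[\chi_A\psi_0]\big\|^2=\langle\psi_0\,|\,Q(A\cap\bigcup_{n>N}E_n)\,\psi_0\rangle\to0$, and a uniform-boundedness plus density argument upgrades this to strong $\sigma$-additivity on all of $\mathsf K$. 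A more abstract alternative would be to define the unital positive linear map $\Phi(g):=\int_\Omega g\,dQ$ on bounded $\Sigma$-measurable functions (automatically completely positive, the domain being a commutative $C^*$-algebra), apply Stinespring's theorem to obtain $\mathsf K\supseteq\mathsf H$, a $*$-representation $\rho$ and $\Phi(g)=P_{\mathsf H}\rho(g)\spa\rest_{\mathsf H}$, and set $P(E):=\rho(\chi_E)$; but the $\sigma$-additivity of $P$ must still be checked by the same estimate, so the direct construction seems the tidier one to present.
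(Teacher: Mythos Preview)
The paper does not supply its own proof of this theorem; it simply states the result with a citation to \cite[Thm. Vol II, p.124]{Akniezer-Glazman-93}. So there is nothing to compare against on the paper's side.

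Your proposal is the standard GNS/Stinespring-type construction and is correct. The bookkeeping (well-definedness and positivity of $\langle\cdot|\cdot\rangle_{\mathcal S}$ via a disjoint refinement, the isometric embedding $J$ from $Q(\Omega)=I$, the contraction estimate $Q(E\cap A_k)\leq Q(A_k)$ giving $\|P(E)\|\leq 1$, and the projection/multiplicativity identities from $\chi_E\chi_F=\chi_{E\cap F}$) is all sound, and the dilation relation drops out exactly as you write. You are right that the only analytic step is strong $\sigma$-additivity of $P$, and your argument for it is the correct one: on a generator $[\chi_A\psi_0]$ the tail has norm-square $\langle\psi_0|Q(A\cap\bigcup_{n>N}E_n)\psi_0\rangle$, which tends to zero by $\sigma$-additivity of the scalar measure $Q_{\psi_0,\psi_0}$; the difference $P(\bigcup_nE_n)-\sum_{n\leq N}P(E_n)=P(\bigcup_{n>N}E_n)$ is a projection, hence uniformly bounded, so density of simple functions finishes it. The Stinespring alternative you mention is equally valid and, as you note, lands on the same verification.
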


\begin{remark}
{\em 	A slightly more general  way to state  the theorem above, is stating that, for a normalized POVM $Q\colon\Sigma\to\mathfrak{B}(\mathsf{H})$,
 there exist a Hilbert space $\mathsf{K}_0$ and an isometry  $V: \mathsf{K}_0 \to \mathsf{K}$ such that
	\begin{align}\label{Equation: Naimark's dilation relation2}
		Q(E)=V^\dagger P(E)V\qquad\forall E\in\Sigma\:.
	\end{align}
In this way, $V^\dagger V=I_{\mathsf{K}_0}$ and  $VV^\dagger\in \cL(\mathsf{K})$ is the orthogonal projector onto  the  closed subspace  $V(\mathsf{K}_0) \subset \sK$. 
Within this formulation,  {\em Naimark's dilation triple} is defined as $(\mathsf{K},V,P)$. In  (\ref{Equation: Naimark's dilation relation}), $\mathsf{K}_0 = \mathsf{H}$ and $V$ is the inclusion map $\mathsf{H}\hookrightarrow \mathsf{K}$, so that $P_{\mathsf{H}} = VV^\dagger$. }\hfill $\blacksquare$
\end{remark}
\subsection{Generalized selfadjoint extensions of symmetric operators} POVMs arise naturally when dealing with generalized extensions of {\em  symmetric operators}.
As is well known, a selfadjoint operator $A$ in a Hilbert space $\mathsf{H}$ does not admit proper symmetric extensions in $\mathsf{H}$. 
This is just a case of a more general class of symmetric operators.

\begin{definition}\label{Definition: maximally symmetric operator}
	{\em A  symmetric operator $A$  on a Hilbert space $\mathsf{H}$ is said to be {\bf maximally symmetric} if there is no symmetric operator $B$ on  $\mathsf{H}$ such that $B\supsetneq A$.}\hfill $\blacksquare$\\
\end{definition}

\begin{remark} $\null$\\
{\em  {\bf (1)} A maximally symmetric operator is necessarily closed, since the closure of a symmetric operator is symmetric as well.\\
{\bf (2)} It turns out that  \cite[Thm. 3, p.97]{Akniezer-Glazman-93}  {\em a closed symmetric operator is maximally symmetric (and not 
selfadjoint) iff one of its deficiency indices is $0$ (and the other does not vanish)}.\\ 
{\bf (3)} An elementary useful  result (immediately arising from, e.g. \cite[Thm. 5.43]{Moretti2017}) is that, {\em if a maximally symmetric operator $A$ 
on $\mathsf{H}$ satisfies $CA\subset AC$ for a conjugation $C: \sH \to \sH$, then $A$ is selfadjoint}. }\hfill $\blacksquare$\\\end{remark}

\noindent Symmetric operators can also admit
 extensions in a more general fashion and these extensions play a crucial r\^ole in the connection between symmetric operators and POVMs.
\begin{definition}\label{Definition: generalized extension of symmetric operator}
	 {\em Let $A$ be a  symmetric operator on a Hilbert space $\mathsf{H}$.
A {\bf generalized symmetric} (resp. {\bf selfadjoint}) {\bf extension} of $A$ is a symmetric (resp. selfadjoint) operator $B$ on a Hilbert space $\mathsf{K}$ such that  
\begin{itemize}
\item[(i)] $\mathsf{K}$ contains $\mathsf{H}$ as a closed subspace (possibly $\mathsf{K}= \mathsf{H}$),
 \item[(ii)] $A\subset B$ in $\mathsf{K}$,
\item[(iii)] every closed subspace $\mathsf{K}_0 \subset  \mathsf{K}$ 
such that $\{0\} \neq  \mathsf{K}_0 \subset  \mathsf{H}^\perp$
 does not {\em reduce} $B$ (see Appendix \ref{reduction}).
\end{itemize}}\hfill $\blacksquare$\\
\end{definition}
\noindent Every non-selfadjoint symmetric operator (possibly maximally symmetric) always admits generalized selfadjoint extensions as established 
in Theorem \ref{theoremkindII}. 
Selfadjoint operators are instead maximal also in respect of  this more general sort of extension.
\begin{proposition}\label{prop1}
	A selfadjoint operator  does not admit proper generalized symmetric extensions.
\end{proposition}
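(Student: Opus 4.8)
The plan is to show that a generalized symmetric extension $B$, living on a Hilbert space $\sK$ that contains $\sH$ as a closed subspace, of a selfadjoint operator $A$ on $\sH$ must in fact coincide with $A$ (so that in particular $\sK=\sH$). The strategy has two stages: first I would prove that the closed subspace $\sH$ \emph{reduces} $B$ (in the sense of Appendix~\ref{reduction}); then I would invoke condition (iii) of Definition~\ref{Definition: generalized extension of symmetric operator} to force $\sH^\perp=\{0\}$, and conclude with the maximality of selfadjoint operators among symmetric ones in a fixed Hilbert space.

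First I would record that $A=A^\dagger$ makes $A\pm iI\colon D(A)\to\sH$ bijective with bounded inverses $(A\pm iI)^{-1}\in\gB(\sH)$ satisfying $((A-iI)^{-1})^\dagger=(A+iI)^{-1}$. The computational core is the identity
\[
\langle (A-iI)\psi\,|\,\chi\rangle=\langle\psi\,|\,(B+iI)\chi\rangle\qquad\text{for all }\psi\in D(A),\ \chi\in D(B),
\]
which follows immediately from $A\subset B$ (so $A\psi=B\psi$) and the symmetry of $B$ (so $\langle B\psi|\chi\rangle=\langle\psi|B\chi\rangle$), the only care needed being with the convention that the scalar product is antilinear in the left slot. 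Writing an arbitrary $h\in\sH$ as $h=(A-iI)\psi$ with $\psi=(A-iI)^{-1}h$, transferring $(A-iI)^{-1}$ to the other factor — which is legitimate because $(A-iI)^{-1}h\in\sH$, so on the right-hand vector only its orthogonal projection $P(B+iI)\chi$ onto $\sH$ contributes, where $P:=P_{\sH}$ — and using $\langle h|\chi\rangle=\langle h|P\chi\rangle$, one obtains
\[
P\chi=(A+iI)^{-1}\,P\,(B+iI)\chi\in D(A)\subseteq D(B)\qquad\text{for all }\chi\in D(B).
\]
This already gives $PD(B)\subseteq D(B)$ and hence $(I-P)D(B)\subseteq D(B)$; applying $A+iI$ to the last display and cancelling the common term $iP\chi$ yields $AP\chi=PB\chi$, so $BP\chi=PB\chi$ because $A\subset B$. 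Thus $\sH$ reduces $B$.

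For the second stage, if a closed subspace reduces $B$ then so does its orthogonal complement, hence $\sH^\perp$ reduces $B$; by item (iii) of Definition~\ref{Definition: generalized extension of symmetric operator} no nonzero closed subspace of $\sH^\perp$ reduces $B$, so $\sH^\perp=\{0\}$, i.e.\ $\sK=\sH$. Then $B$ is a symmetric extension of $A$ inside $\sH$, and since a selfadjoint operator is maximally symmetric we conclude $B=A$; equivalently, one may simply set $P=I$ in the displayed formula to read off $D(B)\subseteq D(A)$ and then $A\chi=B\chi$.

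The main obstacle I anticipate is not conceptual but bookkeeping about where vectors live: since $\sK$ may be strictly larger than $\sH$, the resolvent operators $(A\pm iI)^{-1}$ are defined only on $\sH$, so one must insert the projection $P$ at exactly the right moment and justify $((A-iI)^{-1})^\dagger=(A+iI)^{-1}$ as an identity of bounded operators on $\sH$. Once the identity $\langle(A-iI)\psi|\chi\rangle=\langle\psi|(B+iI)\chi\rangle$ is established, the remaining manipulations are short and routine.
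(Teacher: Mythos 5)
Your proof is correct, but it takes a genuinely different route from the paper's. The paper passes to the closure $B'=\overline{B}$ and invokes Proposition \ref{propSR} (quoted from Schm\"udgen): since $B'\rest_{D(A)}=A$ is (essentially) selfadjoint on $\sH$, that proposition immediately gives that $\sH$ reduces $B'$, whence $\overline{B}=A\oplus B''$ and condition (iii) kills $\sH^\perp$. You instead prove the reduction by hand, using the bounded everywhere-defined resolvents $(A\pm iI)^{-1}$ of the selfadjoint operator $A$ to derive $P_{\sH}\chi=(A+iI)^{-1}P_{\sH}(B+iI)\chi\in D(A)$ for $\chi\in D(B)$, from which $P_{\sH}B\subset BP_{\sH}$ follows; by Remark (3) of the appendix this is exactly the statement that $\sH$ reduces $B$. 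Your computation is sound (the identity $\langle(A-iI)\psi|\chi\rangle=\langle\psi|(B+iI)\chi\rangle$, the insertion of $P_{\sH}$, and $((A-iI)^{-1})^\dagger=(A+iI)^{-1}$ are all correctly handled), and the endgame via (iii) and maximal symmetry of selfadjoint operators is the same as the paper's. What your approach buys is twofold: it is self-contained, needing no external reduction lemma, and it establishes that $\sH$ reduces $B$ \emph{itself} rather than its closure -- which matches the letter of condition (iii) in Definition \ref{Definition: generalized extension of symmetric operator} (stated for $B$, not $\overline{B}$) more directly than the paper's argument does. The cost is the resolvent bookkeeping you yourself flag, which the paper's citation-based proof avoids.
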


\begin{proof} See Appendix \ref{AppendixB}.
\end{proof}

\subsection{Decomposition of symmetric operators in terms of POVMs}  
Naimark extended part of the spectral theory usually formulated in terms of PVMs for normal closed operators
 (selfadjoint in particular)  to the more general case of a symmetric operator \cite{Naimark1940,Naimark1940b}
where  POVMs replace PVMs. A difference with the standard theory  is  that, unless the symmetric operator is {\em maximally symmetric}, the  POVM which decomposes it is not unique.

\begin{theorem}\label{Theorem: uniqueness of POVM associated with maximally symmetric operator}
	For a symmetric operator $A$  in the Hilbert space $\mathsf{H}$ the following facts hold.
	\begin{itemize}
	\item[{\bf (a)}] There exists a normalized POVM  $Q^{(A)}\colon\mathscr{B}(\mathbb{R})\to\mathfrak{B}(\mathsf{H})$ satisfying 
	\begin{align}
		\langle\psi|A\varphi\rangle=\int_{\mathbb{R}}\lambda dQ^{(A)}_{\psi,\varphi}(\lambda)\:, \quad
		\|A\varphi\|^2=\int_{\mathbb{R}}\lambda^2dQ^{(A)}_{\varphi,\varphi}(\lambda)\,,\quad\forall\psi\in\mathsf{H},\varphi\in D(A)\,,\label{2Equation: relations between POVM and symmetric operator}
	\end{align}
	\item[{\bf (b)}] 
	
	Every normalized POVM  $Q^{(A)}\colon\mathscr{B}(\mathbb{R})\to\mathfrak{B}(\mathsf{H})$ satisfying 
	(\ref{2Equation: relations between POVM and symmetric operator})  is of the form 
	$$
			Q^{(A)}(E):=P_{\mathsf{H}}P(E)\spa\rest_{\mathsf{H}}\qquad\forall E\in\mathscr{B}(\mathbb{R})\,.
		$$
	for some Naimark's dilation triple
	$(\mathsf{K},P_{\mathsf{H}} ,P)$ of $Q^{(A)}$  arising from  a generalized selfadjoint extension $B= \int_{\mathbb{R}} \lambda dP(\lambda)$ of $A$ in $\sK$,
	\begin{gather}
		A= B\spa\rest_{D(A)}
		\quad \mbox{and} \quad D(A)\subset\{\psi\in\mathsf{H}|\; \lambda \in L^2(\mathbb{R},Q^{(A)}_{\psi,\psi})\}\:.
	\end{gather}

\item[{\bf (c)}] A  normalized POVM  $Q^{(A)}$ satisfying 
	(\ref{2Equation: relations between POVM and symmetric operator}) is a PVM if and only if the selfadjoint operator 
 $B$ constructed out of Naimark's dilation triple of $Q^{(A)}$ as in (b) can be chosen as a standard selfadjoint extension of 
$A$.

	\item[{\bf (d)}]  If $A$ is closed,  a  normalized  POVM $Q^{(A)}$ as in 
	(\ref{2Equation: relations between POVM and symmetric operator}) exists
 that, referring to (b),  also satisfies 
\begin{gather}
	A= B\spa\rest_{D(B) \cap \mathsf{H}}
	\quad \mbox{and} \quad D(A)=\{\psi\in\mathsf{H}|\; \lambda \in L^2(\mathbb{R},Q^{(A)}_{\psi,\psi})\}\:.\label{234}
\end{gather}
	\item[{\bf (e)}]  If $A$ is closed, then $A$ is maximally symmetric if and only if  there is a unique  normalized POVM $Q^{(A)}$ as in 
	(\ref{2Equation: relations between POVM and symmetric operator}).
	In this case, (\ref{234}) is valid for 
all choices of
	$(\mathsf{K},P_{\mathsf{H}} ,P)$ generating $Q^{(A)}$
	as in  (b).
	\item[{\bf (f)}]  
 If $A$ 
	is selfadjoint, there is a unique 
  normalized  POVM $Q^{(A)}$ satisfying 
	(\ref{2Equation: relations between POVM and symmetric operator}),  and it is  a PVM. 
		 In this case $\mathsf{K}= \mathsf{H}$, $Q^{(A)}=P$, and $A=B$  for all choices of 
	$(\mathsf{K},P_{\mathsf{H}} ,P)$ generating $Q^{(A)}$
	as in  (b).
	\end{itemize}
\end{theorem}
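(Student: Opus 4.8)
The plan is to reduce the whole statement to the classical Naimark theory of generalized (spectral) extensions of symmetric operators --- essentially \cite[Ch.~VIII and Appendix~I]{Akniezer-Glazman-93} --- translated into the POVM language of Definition~\ref{Definition: POVM} via the dictionary recorded in Remark~\ref{remequ} and via Naimark's dilation theorem (Theorem~\ref{Theorem: Naimark's dilation theorem}). For \textbf{(a)} I would fix a selfadjoint extension $B$ of $A$ in some $\sK\supseteq\sH$ ($B=A$ if $A$ is already selfadjoint, otherwise invoking Theorem~\ref{theoremkindII}), let $P:=P^{(B)}$ be its PVM, and set $Q^{(A)}(E):=P_{\sH}P(E)\rest_{\sH}$. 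Positivity is clear from $\langle\psi|Q^{(A)}(E)\psi\rangle=\|P(E)\psi\|^2\ge 0$ for $\psi\in\sH$, weak $\sigma$-additivity is inherited from $P$, and $Q^{(A)}(\bR)=P_{\sH}\rest_{\sH}=I$, so $Q^{(A)}$ is a normalized POVM. The moment relations then rest on the single observation that $\langle\psi|P(E)\varphi\rangle=\langle\psi|P_{\sH}P(E)\varphi\rangle=Q^{(A)}_{\psi,\varphi}(E)$ for $\psi,\varphi\in\sH$: since $\varphi\in D(A)\subset D(B)$, the spectral calculus for $B$ gives $\langle\psi|A\varphi\rangle=\langle\psi|B\varphi\rangle=\int_{\bR}\lambda\,dQ^{(A)}_{\psi,\varphi}(\lambda)$ and $\|A\varphi\|^2=\|B\varphi\|^2=\int_{\bR}\lambda^2\,dQ^{(A)}_{\varphi,\varphi}(\lambda)$, the latter also delivering $D(A)\subset\{\psi\in\sH\,|\,\lambda\in L^2(\bR,Q^{(A)}_{\psi,\psi})\}$.

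For the converse half of \textbf{(b)} I would take \emph{any} normalized POVM $Q^{(A)}$ obeying (\ref{2Equation: relations between POVM and symmetric operator}), feed it to Naimark's theorem choosing the \emph{minimal} dilation triple $(\sK,P_{\sH},P)$ --- so $\sK=\overline{\mathrm{span}}\{P(E)\psi\,|\,E\in\cB(\bR),\,\psi\in\sH\}$ --- and put $B:=\int_{\bR}\lambda\,dP(\lambda)$. For $\varphi\in D(A)$ one has $\langle\varphi|P(E)\varphi\rangle=Q^{(A)}_{\varphi,\varphi}(E)$, hence $\int\lambda^2 d\langle\varphi|P(\lambda)\varphi\rangle=\|A\varphi\|^2<\infty$ and $\varphi\in D(B)$; the first relation gives $P_{\sH}B\varphi=A\varphi$, the second gives $\|B\varphi\|^2=\|A\varphi\|^2=\|P_{\sH}B\varphi\|^2$, and together they force $B\varphi=A\varphi\in\sH$, i.e.\ $A\subset B$. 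Minimality of the dilation is exactly what turns $B$ into a \emph{generalized} selfadjoint extension in the sense of Definition~\ref{Definition: generalized extension of symmetric operator}: a nonzero closed $\sK_0\subset\sH^{\perp}$ reducing $B$ would be $P$-reducing (cf.\ Appendix~\ref{reduction}), whence $\langle P(E)\psi|\xi\rangle=\langle\psi|P(E)\xi\rangle=0$ for all $\psi\in\sH$, $\xi\in\sK_0$, forcing $\sK_0\perp\sK$ and $\sK_0=\{0\}$. Item \textbf{(c)} is then immediate: if $Q^{(A)}$ is a PVM, the trivial triple $(\sH,I,Q^{(A)})$ is an admissible dilation and $B=\int\lambda\,dQ^{(A)}$ is a \emph{standard} selfadjoint extension of $A$; conversely, if $\sK$ can be taken to be $\sH$, then $P_{\sH}=I$ and $Q^{(A)}=P$ is a PVM.

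For \textbf{(d)} (here $A$ is closed) I would instead build the \emph{minimal} selfadjoint dilation through the Cayley transform: let $V_A$ be the closed isometric Cayley transform of $A$, take its minimal unitary dilation $U$ on some $\sK\supseteq\sH$, let $B$ be the inverse Cayley transform of $U$, and set $Q^{(A)}(E):=P_{\sH}P^{(B)}(E)\rest_{\sH}$. The point is that minimality of $U$ adjoins ``exactly the missing part'', so that $B\rest_{D(B)\cap\sH}=A$ holds with \emph{equality} and $D(A)=\{\psi\in\sH\,|\,\lambda\in L^2(\bR,Q^{(A)}_{\psi,\psi})\}$; this is Naimark's construction, which I would import from \cite{Akniezer-Glazman-93} after the translation of Remark~\ref{remequ}. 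For \textbf{(e)}, if $A$ is closed and maximally symmetric then by Remark~(2) after Definition~\ref{Definition: maximally symmetric operator} one deficiency index vanishes, so $V_A$ is a maximal isometry, its minimal unitary dilation is unique up to unitary equivalence fixing $\sH$, and therefore so is $Q^{(A)}$ --- this is the nontrivial implication, again from \cite{Akniezer-Glazman-93}. Conversely, a closed $A$ which is \emph{not} maximally symmetric has both deficiency indices nonzero and hence admits at least two inequivalent generalized selfadjoint extensions (von Neumann's theorem when the indices coincide, the finer analysis of \cite{Akniezer-Glazman-93} otherwise) with distinct associated POVMs, contradicting uniqueness. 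The final clause of \textbf{(e)} follows because the second identity in (\ref{234}) depends on $Q^{(A)}$ alone, while for the first one $P_{\sH}B\rest_{D(B)\cap\sH}$ is a symmetric extension of $A$ in $\sH$ --- hence equals $A$ by maximality --- so $D(B)\cap\sH=D(A)$ and $B\rest_{D(B)\cap\sH}=B\rest_{D(A)}=A$ by \textbf{(b)}.

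Finally \textbf{(f)}: a selfadjoint $A$ is maximally symmetric, so \textbf{(e)} gives a unique $Q^{(A)}$; the PVM $P^{(A)}$ of $A$ satisfies (\ref{2Equation: relations between POVM and symmetric operator}), so $Q^{(A)}=P^{(A)}$ and is a PVM; and since $A$ is selfadjoint, Proposition~\ref{prop1} forces every generalized selfadjoint extension $B$ of $A$ to satisfy $B=A$ and $\sK=\sH$, whence $Q^{(A)}=P$. The step I expect to be the genuine obstacle is \textbf{(d)} together with the forward implication of \textbf{(e)}: the statement that the minimal dilation adjoins \emph{precisely} the part of $A$ missing from $\sH$ is the technical core of Naimark's theory, and pinning down the equalities $B\rest_{D(B)\cap\sH}=A$ and $D(A)=\{\psi\,|\,\lambda\in L^2(\bR,Q^{(A)}_{\psi,\psi})\}$ requires the full Cayley-transform bookkeeping of \cite{Akniezer-Glazman-93}; everything else is either soft functional analysis or a direct verification from the two moment identities.
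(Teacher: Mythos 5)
Your proposal is correct and its skeleton coincides with the paper's: items (a)--(c) are obtained by compressing the PVM of a generalized selfadjoint extension and reading off the moment relations from the spectral calculus, while the genuinely hard statements are ultimately referred to the Naimark/Akhiezer--Glazman theory of spectral functions. The paper handles the converse half of (b) purely by citation to \cite[Thm.~2, p.~129 Vol II]{Akniezer-Glazman-93}, whereas you give a direct argument: pass to the minimal Naimark dilation, use the second moment identity to get $\varphi\in D(B)$, the first to get $P_{\sH}B\varphi=A\varphi$, and the norm equality $\|B\varphi\|=\|A\varphi\|=\|P_{\sH}B\varphi\|$ to force $B\varphi=A\varphi$, with minimality supplying condition (iii) of Definition~\ref{Definition: generalized extension of symmetric operator}. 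That argument is sound and more self-contained than the paper's. Similarly, your proof of (f) from (e) plus Proposition~\ref{prop1} and the fact that the PVM of a selfadjoint operator satisfies (\ref{2Equation: relations between POVM and symmetric operator}) is a clean direct derivation where the paper again cites \cite[Thm.~2, p.~135 Vol II]{Akniezer-Glazman-93}. For (d) and the uniqueness direction of (e) you and the paper both defer to the same external results (the existence of kind-II extensions for closed symmetric operators, i.e.\ Theorem~\ref{theoremkindII}, and the characterization of uniqueness of the spectral function by maximal symmetry); your Cayley-transform description of the minimal dilation is the standard route behind those citations, so nothing is lost, but nothing is gained in rigor there either. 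The closing argument for the last clause of (e) -- that $P_{\sH}B\rest_{D(B)\cap\sH}$ is a symmetric extension of $A$ in $\sH$, hence equals $A$ by maximality, so $D(B)\cap\sH=D(A)$ -- reproduces the paper's Proposition~\ref{prop22} and is correct.
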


\begin{proof}
 See Appendix \ref{AppendixB}. 
\end{proof}

\begin{corollary}\label{corollaryessselfadj} 
Let  $A$ be a  symmetric operator in $\sH$. Then
\begin{itemize}

\item[{\bf (a)}] $A$ and $\overline{A}$  admits the same class of POVMs  satisfying (a) of Theorem \ref{Theorem: uniqueness of POVM associated with maximally symmetric operator}
for $A$ and $\overline{A}$ respectively.

\item[{\bf (b)}]    $A$ admits a unique normalized POVM as in (a) of Theorem \ref{Theorem: uniqueness of POVM associated with maximally symmetric operator}
 if and only $\overline{A}$
is maximally symmetric.
In this case $$D(\overline{A})=\{\psi\in\mathsf{H}|\; \lambda \in L^2(\mathbb{R},Q^{(A)}_{\psi,\psi})\}\:.$$
\item[{\bf (c)}] The unique normalized  POVM as in (b) is a PVM if  $A$ is also essentially selfadjoint.
\end{itemize}
\end{corollary}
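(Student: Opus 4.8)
The plan is to deduce all three items from Theorem~\ref{Theorem: uniqueness of POVM associated with maximally symmetric operator} applied to the closed operator $\overline{A}$: the only item carrying genuine content is \textbf{(a)}, and once it is in place \textbf{(b)} and \textbf{(c)} follow at once.

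For \textbf{(a)} I would prove the two inclusions of families separately. That a normalized POVM $Q$ satisfying \eqref{2Equation: relations between POVM and symmetric operator} for $\overline{A}$ also satisfies it for $A$ is trivial, since $A\subset\overline{A}$ and so the relations for $\overline{A}$, read only on $\varphi\in D(A)$, are precisely the relations for $A$. For the converse, let $Q$ be a normalized POVM satisfying \eqref{2Equation: relations between POVM and symmetric operator} for $A$. By Theorem~\ref{Theorem: uniqueness of POVM associated with maximally symmetric operator}\,(b) there is a Naimark dilation triple $(\sK,P_{\sH},P)$ of $Q$ arising from a generalized selfadjoint extension $B=\int_{\bR}\lambda\,dP(\lambda)$ of $A$, with $A=B\rest_{D(A)}$ and $Q(E)=P_{\sH}P(E)\rest_{\sH}$. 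Since $B$ is selfadjoint, hence closed, and $A\subset B$, we get $\overline{A}\subset B$ (and, as conditions (i)--(iii) of Definition~\ref{Definition: generalized extension of symmetric operator} involve only $B$ and the closed subspace $\sH\subset\sK$, the operator $B$ is in fact a generalized selfadjoint extension of $\overline{A}$ as well). For $\psi\in\sH$ one has $Q_{\psi,\varphi}(E)=\langle\psi|P(E)\varphi\rangle$, and for $\varphi\in D(\overline{A})\subset D(B)$ one has $B\varphi=\overline{A}\varphi\in\sH$; hence
\[
\langle\psi|\overline{A}\varphi\rangle=\langle\psi|B\varphi\rangle=\int_{\bR}\lambda\,dQ_{\psi,\varphi}(\lambda)\,,\qquad
\|\overline{A}\varphi\|^2=\|B\varphi\|^2=\int_{\bR}\lambda^2\,dQ_{\varphi,\varphi}(\lambda)\,,
\]
so $Q$ satisfies \eqref{2Equation: relations between POVM and symmetric operator} for $\overline{A}$, which proves \textbf{(a)}.

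Items \textbf{(b)} and \textbf{(c)} are then routine. For \textbf{(b)}: by \textbf{(a)} the normalized POVM satisfying the relations is unique for $A$ iff it is unique for $\overline{A}$, and since $\overline{A}$ is closed Theorem~\ref{Theorem: uniqueness of POVM associated with maximally symmetric operator}\,(e) turns this into the statement that $\overline{A}$ is maximally symmetric; in that case the same theorem gives \eqref{234} for $\overline{A}$, i.e. $D(\overline{A})=\{\psi\in\sH\mid\lambda\in L^2(\bR,Q^{(\overline{A})}_{\psi,\psi})\}$, and $Q^{(\overline{A})}=Q^{(A)}$ by \textbf{(a)}. For \textbf{(c)}: if $A$ is essentially selfadjoint then $\overline{A}$ is selfadjoint, so Theorem~\ref{Theorem: uniqueness of POVM associated with maximally symmetric operator}\,(f) applied to $\overline{A}$ shows its unique POVM is a PVM, and by \textbf{(a)} this is the unique POVM associated with $A$.

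The main obstacle is the converse inclusion in \textbf{(a)}: showing the integral identities \eqref{2Equation: relations between POVM and symmetric operator} survive the passage from the core $D(A)$ to its closure $D(\overline{A})$. The Naimark dilation furnished by Theorem~\ref{Theorem: uniqueness of POVM associated with maximally symmetric operator}\,(b) reduces this to the elementary observations that $\overline{A}\subset B$ and that the generalized-extension conditions are insensitive to the choice of dense core; the only point needing a moment's care is that $B\varphi=\overline{A}\varphi\in\sH$ for $\varphi\in D(\overline{A})$, which is what legitimises reading the norm identity as an equality of vectors in $\sH$.
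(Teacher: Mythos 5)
Your proof is correct and follows essentially the same route as the paper: both arguments rest on the observation that the generalized selfadjoint extensions of $A$ and of $\overline{A}$ coincide (any such $B$ is closed, so $A\subset B$ forces $\overline{A}\subset B$, while the defining conditions involve only $B$ and $\mathsf{H}$), after which (b), (d), (e), (f) of Theorem \ref{Theorem: uniqueness of POVM associated with maximally symmetric operator} deliver all three items. Your explicit verification of the integral identities for $\overline{A}$ via the Naimark dilation is a slightly more detailed rendering of what the paper compresses into a citation of part (b), but it is not a different method.
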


\begin{proof} Every  generalized selfadjoint extension of $\overline{A}$ is a  generalized extension of $A$,  since $A \subset \overline{A}$.
 Every  generalized selfadjoint extension of $A$ is closed (because selfadjoint) so that it is also  a generalized selfadjoint extension of $\overline{A}$. In view of (b)
 of Theorem \ref{Theorem: uniqueness of POVM associated with maximally symmetric operator}, $A$ and $\overline{A}$ have the
 same class of associated POVMs satisfying (a) of that theorem. Therefore $A$ admits a unique POVM if and only if $\overline{A}$
is maximally symmetric as a consequence of (e) and the identity regarding $D(\overline{A})$ is valid in view of (d)   of Theorem \ref{Theorem: uniqueness of POVM associated with maximally symmetric operator}.
 Finally,  this POVM is a PVM if  $A$ is also essentially selfadjoint due to (f) Theorem \ref{Theorem: uniqueness of POVM associated with maximally symmetric operator}.
\end{proof}

\begin{definition}\label{DEFassociate}  {\em If $A$ is a symmetric operator in the Hilbert space $\mathsf{H}$, a normalized POVM $Q^{(A)}$ over the Borel $\sigma$-algebra over $\mathbb{R}$ which satisfies 
 (a) of Theorem \ref{Theorem: uniqueness of POVM associated with maximally symmetric operator}, i.e.
\begin{align*}
		\langle\psi|A\varphi\rangle=\int_{\mathbb{R}}\lambda dQ^{(A)}_{\psi,\varphi}(\lambda)\:, \qquad
		\|A\varphi\|^2=\int_{\mathbb{R}}\lambda^2dQ^{(A)}_{\varphi,\varphi}(\lambda)\,,\qquad\forall\psi\in\mathsf{H},\varphi\in D(A)\,,
	\end{align*}
is said to be {\bf associated} to $A$ or, equivalently, to {\bf decompose} $A$.} \hfill $\blacksquare$
\end{definition}

\subsection{Hermitian operators as integrals of POVMs}  While a  symmetric operator admits at least one  normalized POVM 
which decomposes it according to Definition \ref{DEFassociate},  not all normalized POVMs decompose symmetric operators. 
The main obstruction comes from the second equation in 
\eqref{2Equation: relations between POVM and symmetric operator} as well as from the difficulty to identify a convenient notion of operator integral with respect to a POVM.
This aspect of POVMs has been investigated in \cite{Dubin-Kiukas-Pellonpaa-Ylinen-14} (see also \cite{Busch-14} for further physical comments)  in wide generality. 
We only state and prove an elementary result which, though it is not explicitly stated in \cite{Dubin-Kiukas-Pellonpaa-Ylinen-14}, it is however part of the results discussed therein.  In particular, every  POVM over $\mathbb{R}$ can be weakly integrated determining a {\em unique} Hermitian operator over a natural domain. It is worth stressing that the result strictly depends on the choice of this 
domain and different alternatives are possible in principle  \cite{Busch-14,Dubin-Kiukas-Pellonpaa-Ylinen-14} (see also (1) Remark \ref{remfA} below).\\

\begin{theorem}\label{TheoremPOVMtoA}
	If $Q\colon\mathscr{B}(\mathbb{R})\to\mathfrak{B}(\mathsf{H})$ is a normalized POVM in the Hilbert space $\mathsf{H}$, define  the 
subset $D(A^{(Q)})\subset \mathsf{H}$,
	\begin{equation}\label{defD2}
		D(A^{(Q)}) := \left\{ \psi \in \mathsf{H} \left|  \int_{\mathbb{R}} \lambda^2 dQ_{\psi,\psi}(\lambda)<+\infty \right.\right\}\:.
	\end{equation}
	The following facts are valid.
	\begin{itemize}
	\item[{\bf (a)}]  $D(A^{(Q)})$ is a subspace of  $\mathsf{H}$ (which is not necessarily dense or  non-trivial).
	\item[{\bf (b)}]  There exists a unique operator  $A^{(Q)}: D(A^{(Q)}) \to \mathsf{H}$ such that
	\begin{equation}\label{eqfixA} \langle \varphi|A^{(Q)}\psi\rangle =  \int_{\mathbb{R}} \lambda dQ_{\varphi,\psi}(\lambda)\:, \quad \forall \varphi \in \mathsf{H}\:, \forall \psi \in D(A^{(Q)})\:.
	\end{equation}
	\item[{\bf (c)}] $A^{(Q)}$ is Hermitian,  so that  $A^{(Q)}$ is  symmetric if and only if $D(A^{(Q)})$ is dense.
	\item[{\bf (d)}] If  $(\mathsf{K}, P_{\mathsf{H}}, P)$ is a Naimark's dilation triple of $Q$, then 
	\begin{equation}
		A^{(Q)}\psi = P_{\mathsf{H}} \int_{\mathbb{R}} \lambda dP(\lambda) \psi\:, \quad \forall \psi \in D(A^{(Q)})\:.
	\end{equation}
	\item[{\bf (e)}] If  there exists  a Naimark's dilation triple $(\mathsf{K}, P_{\mathsf{H}}, P)$  of $Q$ such that 
	\beq \int_{\mathbb{R}} \lambda dP(\lambda)( D(A^{(Q)})) \subset \mathsf{H}\:,\label{aggni}\eeq then $A^{(Q)}$ is closed and
	\begin{equation} \label{v234}
		||A^{(Q)}\psi||^2 =  \int_{\mathbb{R}} \lambda^2 dQ_{\psi,\psi}(\lambda)\:, \quad \forall  \psi \in D(A^{(Q)})\:.
	\end{equation}
	\end{itemize}
\end{theorem}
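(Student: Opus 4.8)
The plan is to deduce all five claims from standard spectral theory by fixing a single Naimark dilation. First I would invoke Theorem \ref{Theorem: Naimark's dilation theorem} to choose a Naimark dilation triple $(\mathsf{K},P_{\mathsf{H}},P)$ of $Q$, so that $Q(E)=P_{\mathsf{H}}P(E)\rest_{\mathsf{H}}$ for all $E\in\mathscr{B}(\mathbb{R})$, and set $B:=\int_{\mathbb{R}}\lambda\,dP(\lambda)$, the selfadjoint operator in $\mathsf{K}$ whose PVM is $P$. For $\psi,\varphi\in\mathsf{H}$ one has $Q_{\varphi,\psi}(E)=\langle\varphi|P_{\mathsf{H}}P(E)\psi\rangle=\langle\varphi|P(E)\psi\rangle=P_{\varphi,\psi}(E)$, because $P_{\mathsf{H}}^\dagger=P_{\mathsf{H}}$ and $P_{\mathsf{H}}\varphi=\varphi$. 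In particular $\int_{\mathbb{R}}\lambda^2\,dQ_{\psi,\psi}=\int_{\mathbb{R}}\lambda^2\,dP_{\psi,\psi}$, so that $D(A^{(Q)})=\mathsf{H}\cap D(B)$, an intersection of two subspaces of $\mathsf{K}$; this proves (a). One can also obtain (a) directly, with no dilation, from the Cauchy--Schwarz bound $|Q_{\psi,\varphi}(E)|\le Q_{\psi,\psi}(E)^{1/2}Q_{\varphi,\varphi}(E)^{1/2}$ (valid since $Q(E)\ge 0$), which yields $Q_{\psi+\varphi,\psi+\varphi}\le 2Q_{\psi,\psi}+2Q_{\varphi,\varphi}$.

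For (b), fix $\psi\in D(A^{(Q)})=\mathsf{H}\cap D(B)$. Since $Q_{\varphi,\psi}=P_{\varphi,\psi}$ and $\psi\in D(B)$, the spectral calculus gives $\lambda\in L^1(|Q_{\varphi,\psi}|)$ with $\int_{\mathbb{R}}\lambda\,dQ_{\varphi,\psi}(\lambda)=\langle\varphi|B\psi\rangle$, so the map $\mathsf{H}\ni\varphi\mapsto\int_{\mathbb{R}}\lambda\,dQ_{\varphi,\psi}(\lambda)$ is antilinear and bounded, with norm $\le\|B\psi\|=(\int_{\mathbb{R}}\lambda^2\,dQ_{\psi,\psi})^{1/2}$ (the analogous estimate also follows self-containedly from the pointwise Cauchy--Schwarz bound above). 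The Riesz lemma then produces a unique vector, named $A^{(Q)}\psi\in\mathsf{H}$, representing this functional, i.e.\ satisfying \eqref{eqfixA}; linearity of $\psi\mapsto A^{(Q)}\psi$ follows from uniqueness and from linearity of $\psi\mapsto Q_{\varphi,\psi}$, and \eqref{eqfixA} fixes the operator $A^{(Q)}$ uniquely. For (c), since $Q(E)\ge 0$ is selfadjoint we get $\overline{Q_{\psi,\varphi}(E)}=Q_{\varphi,\psi}(E)$, whence for $\varphi,\psi\in D(A^{(Q)})$, $\langle A^{(Q)}\varphi|\psi\rangle=\overline{\int_{\mathbb{R}}\lambda\,dQ_{\psi,\varphi}}=\int_{\mathbb{R}}\lambda\,dQ_{\varphi,\psi}=\langle\varphi|A^{(Q)}\psi\rangle$, using that $\lambda$ is real; the last clause of (c) is the definition of symmetric operator recalled in Section \ref{Sec: Introduction}.

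For (d), take any Naimark dilation triple $(\mathsf{K},P_{\mathsf{H}},P)$ (the identities $Q_{\varphi,\psi}=P_{\varphi,\psi}$ on $\mathsf{H}$ hold for all of them) and $\psi\in D(A^{(Q)})=\mathsf{H}\cap D(B)$ with $B=\int\lambda\,dP(\lambda)$; then for every $\varphi\in\mathsf{H}$, $\langle\varphi|A^{(Q)}\psi\rangle=\int\lambda\,dQ_{\varphi,\psi}=\langle\varphi|B\psi\rangle=\langle\varphi|P_{\mathsf{H}}B\psi\rangle$, and since $A^{(Q)}\psi,P_{\mathsf{H}}B\psi\in\mathsf{H}$ we conclude $A^{(Q)}\psi=P_{\mathsf{H}}\int\lambda\,dP(\lambda)\psi$. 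For (e), assuming the triple satisfies \eqref{aggni}, part (d) gives $A^{(Q)}\psi=P_{\mathsf{H}}B\psi=B\psi$ for all $\psi\in D(A^{(Q)})$, so $\|A^{(Q)}\psi\|^2=\|B\psi\|^2=\int\lambda^2\,dP_{\psi,\psi}=\int\lambda^2\,dQ_{\psi,\psi}$, which is \eqref{v234}; moreover $\mathrm{graph}(A^{(Q)})=\mathrm{graph}(B)\cap(\mathsf{H}\times\mathsf{H})$ is the intersection of two closed subspaces of $\mathsf{K}\times\mathsf{K}$ ($B$ closed because selfadjoint, $\mathsf{H}$ closed in $\mathsf{K}$), hence closed, so $A^{(Q)}$ is closed. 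The only genuinely delicate point I anticipate is the total-variation/absolute-convergence estimate underpinning (b) together with the consistency of the weak integral with the chosen domain \eqref{defD2}; once that is in place, everything else is routine bookkeeping around Naimark's theorem and the spectral calculus for $B$.
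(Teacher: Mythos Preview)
Your proof is correct and follows essentially the same route as the paper: fix a Naimark dilation $(\mathsf{K},P_{\mathsf{H}},P)$, set $B=\int\lambda\,dP$, identify $D(A^{(Q)})=\mathsf{H}\cap D(B)$, and read off (a)--(e) from the spectral calculus of $B$. The only cosmetic differences are that the paper defines $A^{(Q)}$ directly as $P_{\mathsf{H}}B|_{D(A^{(Q)})}$ and then verifies \eqref{eqfixA} (proving (b) and (d) together), whereas you obtain existence in (b) via Riesz and only afterwards identify the representing vector with $P_{\mathsf{H}}B\psi$; and for (c) you argue intrinsically from $\overline{Q_{\psi,\varphi}}=Q_{\varphi,\psi}$ while the paper appeals to $B=B^\dagger$ through the dilation.
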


\begin{proof}
 See Appendix \ref{AppendixB}. 
\end{proof}

\begin{remark} \label{remfA} $\null$ \\
{\em
 {\bf (1)} Even if $A^{(Q)}$  is symmetric, we cannot say that $Q$ decomposes $A^{(Q)}$ according to Def. \ref{DEFassociate}, because 
generally  (\ref{v234}) fails. However, it is still possible that  (\ref{v234}) holds when restricting $A^{{(Q)}}$ to a subspace
 $D\subset D(A^{(Q)})$.  This is the case if (\ref{eqfixA}) is true  for $D(A^{(Q)})$ replaced for $D$.
The operator $A^{(Q)}|_D$ is still Hermitian and satisfies  (\ref{eqfixA}) for $\psi \in D$. This is just the  situation  treated in (a) and (b) of 
Theorem \ref{Theorem: uniqueness of POVM associated with maximally symmetric operator} when identifying  $A=A^{(Q^{(A)})}|_{D(A)}$  where $Q^{(A)}$ is a 
POVM decomposing the symmetric operator  $A$ according to (a) Theorem \ref{Theorem: uniqueness of POVM associated with maximally 
symmetric operator}. Here $D(A)\subset D(A^{(Q^{(A)})})$ is dense and, in general, $A^{(Q^{(A)})} \supset A$ and (\ref{v234}) is valid on $D(A)$, but not on $D(A^{(Q^{(A)})})$. If $A$
is maximally symmetric, $A=A^{(Q^{(A)})}$ necessarily.\\
{\bf (2)} Theorems \ref{Theorem: uniqueness of POVM associated with maximally symmetric operator} and \ref{TheoremPOVMtoA} can be used to define a function $f(A)$ of a symmetric  operator $A$ in $\sH$ when $A$ itself can be decomposed along the normalized POVM $Q^{(A)}$ according to Definition \ref{DEFassociate}, and  $f: \bR \to \bR$ is Borel measurable. 
It is simply sufficient to observe that $Q'(E) := Q^{(A)}(f^{-1}(E))$ is still a normalized POVM when $E$ varies in $\cB(\bR)$, so that $f(A)$ can be defined  according to  definitions (\ref{defD2}) and
 (\ref{eqfixA}) just by integrating $Q'$.  Therefore, from the standard measure theory, it arises
\begin{align}\label{int23}
&\langle \varphi| f(A) \psi \rangle =   \int_\bR \mu \:dQ'_{\varphi,\psi}(\mu) = \int_\bR f(\lambda) \:dQ_{\varphi,\psi}^{(A)}(\lambda)\quad \mbox{if $\varphi\in \sH$ and
$\psi \in D(f(A))$\:,}\\
&D(f(A)) = 
 \left\{\psi \in \sH \:\left|\:\int_{\mathbb{R}}\mu^2 dQ'_{\psi,\psi}(\mu)<+\infty\right. \right\} =\left\{\psi \in \sH \:\left|\:\int_{\mathbb{R}} |f(\lambda)|^2 dQ^{(A)}_{\psi,\psi}(\lambda)<+\infty\right. \right\}\:.
\end{align}
When $A$ is selfadjoint, so that we  deal with a PVM, this definition of $f(A)$ coincides to the standard one. It is however necessary to stress that, when $Q$ is properly  a POVM,
\begin{itemize}
\item[(a)]  unless $f$ is bounded (in that case  $D(f(A))=\sH$),  there is no guarantee  that the Hermitian operator $f(A)$ has a dense domain nor that $||f(A)\psi||^2 =
 \int_{\mathbb{R}} |f(\lambda)|^2 dQ^{(A)}_{\psi,\psi}(\lambda)$
for $\psi \in D(f(A))$ as observed in  Remark (1)\footnote{In particular, if 
$f: \bR \ni \lambda \to \lambda \in \bR$, we have $A=f(A)\sp\rest_{D(A)}$, but the domain of $f(A)$ according to  (\ref{defD2}) is in general larger than $D(A)$, and
 $||f(A)\psi||^2 = \int_{\mathbb{R}} |f(\lambda)|^2 dQ^{(A)}_{\psi,\psi}(\lambda)$ is valid for $\psi \in D(A)$.},

\item[(b)] $f(A)$ does not satisfy the same properties as those of the standard functional calculus of  selfadjoint operators ,  just because the fundamental property of PVMs $Q(E)Q(E') = Q(E\cap E')$ is  false for POVMs, 
\item[(c)] the notion of $f(A)$ also depends on the normalized POVM $Q^{(A)}$ exploited to decompose $A$, since $Q^{(A)}$ is unique if and only if $A$ is maximally symmetric.\hfill $\blacksquare$
\end{itemize}}
\end{remark}

\section{Observable interpretation of $\pi_\omega(a)$ in terms of POVMs and expectation-value interpretation of $\omega(a)$}\label{Sec: Generalized observable pi(a) and expectation-value interpretation of omega(a)}
We are in a position to apply the developed theory to tackle the initial problems stated in issues A and B establishing the main results of this work.
\subsection{Interpretation of $\pi_\omega(a)$  in terms of POVMs}
Coming back to symmetric operators arising from GNS representations, the summarized  theory of POVMs  and   Corollary \ref{corollaryessselfadj} in particular have some important consequences concerning a possible interpretation of $\pi_\omega(a)$
as a generalized observable when it is not essentially selfadjoint. Consider  
the symmetric operator $\pi_\omega(a)$ when  $a^*=a \in \gA$ and $\omega: \gA \to \bC$ is a non-normalized state on the unital $*$-algebra $\gA$. We have that

\begin{itemize}
\item[(1)] $\pi_\omega(a)$ and $\overline{\pi_\omega(a)}$ share the same class of
 associated normalized POVMs $Q^{(a,\omega)}$ so that they support the same physical information 
when interpreting them as generalized observables. More precisely,  each of these POVMs endows those symmetric 
operators with the physical meaning of generalized observable in the Hilbert space $\sH_\omega$. This is particularly relevant
 when $\pi_\omega(a)$ does not admit selfadjoint extensions;
\item[(2)]  the above class of normalized POVMs however  includes also all possible PVMs of all possible selfadjoint extensions of $\pi_\omega(a)$  provided they exist. Hence, the standard notion of quantum observable in Hilbert space is encompassed;
\item[(3)]   $Q^{(a,\omega)}$ is unique
 if and only if 
$\overline{\pi_\omega(a)}$ is maximally symmetric but not necessarily selfadjoint;
\item[(4)]   That  unique POVM is a PVM if $\pi_\omega(a)$ 
is  essentially selfadjoint.
\end{itemize}
Even if  the symmetric operator $\pi_\omega(a)$ does {\em not} admit a selfadjoint extension, it can be considered a generalized  observable with some precautions, since it admits decompositions in terms of POVMs which are {\em generalized observables} in their own right. However, in general, there are {\em many} POVMs associated with {\em one} given symmetric operator $\pi_\omega(a)$. Next section tackles the problem of reducing this number in relation with the expectation-value interpretation of $\omega_b(a)$.

\subsection{Expectation-value interpretation of $\omega_b(a)$  by means of consistent class of measures solving the moment problem}
Let us now come to the  expectation-value interpretation of $\omega(a)$  extended to the deformations $\omega_b(a)$. This
 inteprentation relies upon  the choice of a
 measure $\mu^{(a)}_{\omega}$ viewed as a particular case of the large class of measures $\mu^{(a)}_{\omega_b}$
 associated to deformed states $\omega_b$. All these measures are assumed to 
 solve the  moment problem (\ref{bmoment}) for $(a,\omega_b)$, where the case  $n=1$ is just the expectation-value interpretation of $\omega_b(a)$ and $\omega(a)$ in particular for $b=\bI$.

The final discussion  in (1) and (2) in Example \ref{Ex: uniqueness of moment problem and self-adjointness of the operator are not related}  shows that, in physically relevant cases,  there are many measures solving the moment problem
 relative to $(a,\omega_b)$ in general, even if the operator $\pi_\omega(a)$ is essentially selfadjoint.  
We need some physically meaningful strategy to reduce the number of those measures.

This section proves that, once we have imposed suitable physically meaningful requirements on the measures $\mu^{(a)}_{\omega_b}$, 
a new connection arises between the remaining classes of physically meaningful measures and POVMs decomposing the symmetric
 operators $\pi_{\omega_b}(a)$. 
These POVMs also generate the said  measures $\mu^{(a)}_{\omega_b}$.

We start by noticing that when $b$ is a  function of the Hermitian element $a\in \gA$,
	the measures $\mu^{(a)}_{\omega_{b}}$ and $\mu^{(a)}_\omega$ are not independent, in particular it holds
	\begin{align*}
		\int_{\mathbb{R}}\lambda^{2k+1}d\mu^{(a)}_\omega(\lambda)=
		\omega(a^{2k+1})=\omega_{a^{k}}(a)=
		\int_{\mathbb{R}}\lambda\,d\mu^{(a)}_{\omega_{a^{k}}}(\lambda)\,.
	\end{align*} 
 However, referring only to the subalgebra generated by $a$, we miss the information of the whole algebra $\gA$ which contains $a$.  
We therefore try to  restrict the class  of the  measures $\mu^{(a)}_{\omega_b}$ by imposing some  natural compatibility conditions 
 among the measures $\mu_{\omega_b}^{(a)}$ associated with {\em completely general} elements  $b\in \gA$.
As a starting point, let us consider a family of measures $\{\mu^{(a)}_{\omega_b}\}_{b \in\mathfrak{A}}$ each of which is a solution to the moment problem (\ref{bmoment}) relative to $(a,\omega_b)$, with $\mu^{(a)}_{\omega_b}=0$ if $\omega_b$ is singular.
Since for all $b,c\in\mathfrak{A}$,  $z\in\mathbb{C}$, and every real polynomial $p$,
\begin{align}  \nonumber  
	\omega_{b+c}(p(a))+\omega_{b-c}(p(a)) &= 2 [\omega_{b}(p(a))+\omega_{c}(p(a)) ]\,.\\
	\omega_{zb}(p(a))&=|z|^2\omega_{b}(p(a)) \,,
\end{align} 
we also have
\begin{align}\label{Equation: parallelogram rule for moments}
	\int_{\mathbb{R}}p(\lambda)d\mu^{(a)}_{\omega_{b+c}}(\lambda)+
	\int_{\mathbb{R}}p(\lambda)d\mu^{(a)}_{\omega_{b-c}}(\lambda)&=2\bigg[
	\int_{\mathbb{R}}p(\lambda)d\mu^{(a)}_{\omega_{b}}(\lambda)+
	\int_{\mathbb{R}}p(\lambda)d\mu^{(a)}_{\omega_{c}}(\lambda)
	\bigg]\,.\\
	\label{Equation: quadratic rule for moments}
	\int_{\mathbb{R}}p(\lambda)d\mu^{(a)}_{\omega_{z b}}(\lambda)&=|z|^2\int_{\mathbb{R}}p(\lambda)d\mu^{(a)}_{\omega_b}(\lambda)\,,
\end{align}
Finally observe that the following  directional  continuity property holds true for $b, c, a=a^* \in \mathfrak{A}$, and every real polynomial $p$,
\begin{align}\nonumber
\omega_{b+ t c}(p(a)) \to \omega_{b}(p(a))\quad \mbox{for $\mathbb{R} \ni t \to 0$}\,,
\end{align}
that implies
\begin{align}\label{Equation:directional-continuity}
\int_{\mathbb{R}}p(\lambda)d\mu^{(a)}_{\omega_{b+tc}}(\lambda) \to 
\int_{\mathbb{R}}p(\lambda)d\mu^{(a)}_{\omega_{b}}(\lambda)
\quad \mbox{for $\mathbb{R} \ni t \to 0$\,.}
\end{align}

\noindent Identities 
 (\ref{Equation: parallelogram rule for moments})-(\ref{Equation:directional-continuity}) are true for every choice of measures associated with the algebraic observable $a$
and the deformations $\omega_b$, so that they {\em cannot} be used as contraints to reduce the number of those measures.

 We observe that the above relations actually regard {\em polynomials} $p(a)$ of $a$. From the physical side, dealing only with polynomials  seems a limitation since   we  expect that, at the end of the game, after having introduced some technical information,   one would be able to define more complicated functions of $a$ (as it happens when dealing with $C^*$-algebras), because these observables are 
physically necessary and have a straightforward operational definition: {\em $f(a)$ heuristically  represents the algebraic observable (if any) that attains the values $f(\lambda)$, where $\lambda$ are the values attained by $a$}. 
If $\mu^{(a)}_{\omega_b}$ is physically meaningful and $f: \bR \to \bR$ is a bounded  measurable function (we restrict ourselves to bounded functions to avoid subtleties with domains),  we expect that the (unknown) algebraic observable  $f(a)$ is however  represented by the function $f(\lambda)$
in the space $L^2(\bR, d\mu^{(a)}_{\omega_b})$ and, as far as expectation values are concerned, $\omega_b(f(a)) = \int_\bR f(\lambda)\: d \mu_{\omega_b}^{(a)}(\lambda)$.

This viewpoint can be also  heuristically supported from another side. 
If we deal with $\pi_{\omega_b}(a)$ instead of $a$ itself and we decompose the symmetric operator $\pi_{\omega_b}(a)$ with a POVM, the function $f(\pi_{\omega_b}(a))$ can be defined according to (2) Remark \ref{remfA}. If we now  assume that $\mu^{(a)}_{\omega_b} = Q^{(\pi_{\omega_b}(a))}_{\psi_{\omega_b},\psi_{\omega_b}}$ we just have that 
 the algebraic observable  $f(a)$ is  represented by the function $f(\lambda)$ when we compute the expectation values: according to (\ref{int23}) for $\varphi=\psi=\psi_{\omega_b}$, we have  $\omega_b(f(a)) = \langle \psi_{\omega_b}| \pi_{\omega_b}(f(a))\psi_{\omega_b} \rangle =\int_\bR f(\lambda)\: d \mu_{\omega_b}^{(a)}(\lambda)$.

We therefore  strengthen equations (\ref{Equation: parallelogram rule for moments})-(\ref{Equation:directional-continuity})  {\em by requiring that the 
physically interesting measures are such that (\ref{Equation: parallelogram rule for moments})-(\ref{Equation:directional-continuity}) are valid 
for arbitrary bounded measurable  functions $f: \mathbb{R} \to  \mathbb{R}$ in place of polynomials $p$. }

The resulting condition, just specializing to  characteristic functions $f=\chi_E$  for every Borel measurable set $E$ over the real line,  leads to the following identities, which imply the previous ones (stated for general bounded measurable functions)
\begin{align}\label{A}
	\mu^{(a)}_{\omega_{b+c}}+\mu^{(a)}_{\omega_{b-c}}=2\big[\mu^{(a)}_{\omega_{b}}+\mu^{(a)}_{\omega_{c}}\big]\,,\quad
	\mu^{(a)}_{\omega_{zb}}=|z|^2\mu^{(a)}_{\omega_{b}}\\
	\label{B}
	\mu^{(a)}_{\omega_{b+tc}}(E) \to \mu^{(a)}_{\omega_{b}}(E)\quad \mbox{if $\mathbb{R}\ni t\to 0$,} 
\end{align}

\begin{remark}
{\em {\bf (1)} We stress that (\ref{A}) and (\ref{B})  are {\em not} consequences of  (\ref{Equation: parallelogram rule for moments})-(\ref{Equation:directional-continuity}) in the general case, in particular because polynomials are not necessarily dense in the relevant $L^1$ spaces, since the considered Borel measures have non-compact support in general and we cannot directly apply Stone-Weierstass theorem.
 (\ref{A}) and (\ref{B}) are  however necessarily satisfied when all considered measures $\mu_{\omega_b}^{(a)}$ are induced by a unique PVM as for instance  in the strong hypotheses of Theorem \ref{T1}:
(\ref{muPVMb}) immediately implies
(\ref{A}) and (\ref{B}). This is also the case for a $C^*$-algebra, since the measures arise from PVMs due to Proposition \ref{P2}.\\
{\bf (2)} Identities (\ref{A}) and (\ref{B}) remain valid also  when labeling the measures with the classes $[b] \in \mathfrak{A}/G( \mathfrak{A}, \omega) =\mathcal{D}_\omega$
since these only involve the linear structure of $\gA$ which survive the quotient operation.} \hfill $\blacksquare$\\
\end{remark} 
\noindent We can state the following general definition, taking remark (2) into account in particular.

\begin{definition}\label{Definition: compatibility condition for family of measures}
 {\em If $\cD$ is a complex vector space, a family of positive $\sigma$-additive measures  $\{\nu_{\psi}\}_{\psi\in \cD}$ over the measurable space $(\Omega, \Sigma)$  such that
	\begin{align}\label{Equation: compatibility conditions for measures}
	\nu_{\psi+\varphi}+\nu_{\psi-\varphi}=2\big[\nu_{\psi}+\nu_{\varphi}\big]\,,\quad
	\nu_{z \psi}=|z|^2\nu_\psi\,\qquad \mbox{for all $\psi,\varphi\in\cD$ and $z\in\mathbb{C}$}\\
	\label{Equation: compatibility conditions for measures2}
	\nu_{\psi+t \varphi}(E) \to \nu_{\psi}(E) \quad \mbox{if $\mathbb{R}\ni t\to 0$,} \quad \mbox{for every fixed triple  $\psi,\varphi\in\cD$ and $E\in\Sigma$.}
	\end{align}
 is said to be {\bf consistent}.}\hfill $\blacksquare$\\
\end{definition} 

\begin{remark}
	{\em From  Definition \ref{Definition: compatibility condition for family of measures}, $\nu_0$ is the zero measure ($\nu_0(E)=0$ if $E\in \Sigma$).}\hfill $\blacksquare$
\end{remark}
%

\subsection{Consistent classes of measures and POVMs}
We now apply the summarized theory of POVMs to prove that the family of  POVMs associated to $\pi_{\omega}(a)$ is one-to-one with the family of consistent classes of measures  solving the moment problem for all $\omega_b$.
The proof consists of two steps. Here is the former.

If $Q^{(a,\omega)}$ is a POVM associated to $\pi_\omega(a)$ for $a^*=a \in \gA$ and for a non-normalized state $\omega : \gA \to \bC$,
 let $\nu^{(a)}_{\omega_b}$ be the Borel  measure defined by
	\begin{align}\label{muQ}
	\nu^{(a)}_{\omega_b}(E):=\langle\psi_{\omega_b}|Q^{(a,\omega)}(E)\psi_{\omega_b}\rangle \quad \mbox{ if $E\in \cB(\bR)$}\,,
	\end{align}
for every deformation $\omega_b$.

\begin{theorem} \label{theorem-a-POVM} Consider  the unital $*$-algebra  $\mathfrak{A}$,  a non-normalized state $\omega: \mathfrak{A} \to \mathbb{C}$, an element $a=a^* \in  \mathfrak{A}$ and the family of measures
$\{\nu^{(a)}_{\omega_b}\}_{b \in \mathfrak{A}}$ defined in (\ref{muQ}) with respect to a  normalized POVM $Q^{(a,\omega)}$  associated to 
 $\pi_\omega(a)$.
Then

{\bf (a)}  $\{\nu^{(a)}_{\omega_b}\}_{b \in \mathfrak{A}}$ is  a consistent  family over $\cD_\omega = \gA/G(\gA,\omega)$.

{\bf (b)} Each $\nu^{(a)}_{\omega_b}$ is a  solution of the moment problem (\ref{bmoment}) relative to 
$(a, \omega_b)$.
\end{theorem}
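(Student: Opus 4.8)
The plan is to reduce both claims to elementary properties of the positive sesquilinear forms $E\mapsto\langle\,\cdot\,|Q^{(a,\omega)}(E)\,\cdot\,\rangle$, supplemented, for part (b), by a single application of Naimark's dilation theorem in the form of Theorem~\ref{Theorem: uniqueness of POVM associated with maximally symmetric operator}(b). For (a), I would first recall from Theorem~\ref{GNS} and the discussion following it that under the GNS identification $\cD_\omega=\gA/G(\gA,\omega)$ one has $\psi_{\omega_b}=\pi_\omega(b)\psi_\omega=[b]$, so that $\nu^{(a)}_{\omega_b}$ depends only on $[b]\in\cD_\omega$ and the family $\{\nu^{(a)}_{\omega_b}\}_{b\in\gA}$ is genuinely indexed by the complex vector space $\cD_\omega$ via $\nu_{[b]}:=\nu^{(a)}_{\omega_b}$; moreover the singular case $\omega(b^*b)=0$ is exactly the case $[b]=0$, where $\psi_{\omega_b}=0$ and $\nu^{(a)}_{\omega_b}$ is the zero measure $\nu_0$. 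Each $\nu^{(a)}_{\omega_b}(E)=\langle\psi_{\omega_b}|Q^{(a,\omega)}(E)\psi_{\omega_b}\rangle$ is a finite positive $\sigma$-additive Borel measure by condition~2 of Definition~\ref{Definition: POVM}. Since $b\mapsto[b]$ is $\bC$-linear, $\psi_{\omega_{b\pm c}}=\psi_{\omega_b}\pm\psi_{\omega_c}$ and $\psi_{\omega_{zb}}=z\psi_{\omega_b}$; hence the parallelogram and homogeneity identities~(\ref{Equation: compatibility conditions for measures}) are just the parallelogram law and the $|z|^2$-homogeneity of the nonnegative quadratic form $\xi\mapsto\langle\xi|Q^{(a,\omega)}(E)\xi\rangle$, evaluated at $\xi=\psi_{\omega_b},\psi_{\omega_c}$, while expanding $\langle\psi_{\omega_b}+t\psi_{\omega_c}|Q^{(a,\omega)}(E)(\psi_{\omega_b}+t\psi_{\omega_c})\rangle$ as a degree-two polynomial in the real parameter $t$ and letting $t\to0$ gives the directional continuity~(\ref{Equation: compatibility conditions for measures2}). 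This proves (a).

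For (b), I would apply Theorem~\ref{Theorem: uniqueness of POVM associated with maximally symmetric operator}(b) to the symmetric operator $\pi_\omega(a)$ (with domain $\cD_\omega$): there is a Naimark dilation triple $(\sK,P_{\sH_\omega},P)$ of $Q^{(a,\omega)}$ arising from a generalized selfadjoint extension $B=\int_\bR\lambda\,dP(\lambda)$ of $\pi_\omega(a)$ in $\sK$, with $\pi_\omega(a)=B\spa\rest_{\cD_\omega}$. Because $\cD_\omega$ is invariant under $\pi_\omega(a)$ and $\psi_{\omega_b}\in\cD_\omega$, induction on $n$ yields $\psi_{\omega_b}\in D(B^n)$ and $B^n\psi_{\omega_b}=\pi_\omega(a)^n\psi_{\omega_b}=\pi_\omega(a^n)\psi_{\omega_b}$ for all $n$ (at each step $\pi_\omega(a^n)\psi_{\omega_b}\in\cD_\omega\subset D(B)$ and $B$ agrees with $\pi_\omega(a)$ there). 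Since $\psi_{\omega_b}\in\sH_\omega$ we have $P_{\sH_\omega}\psi_{\omega_b}=\psi_{\omega_b}$, so $\nu^{(a)}_{\omega_b}(E)=\langle\psi_{\omega_b}|P_{\sH_\omega}P(E)\psi_{\omega_b}\rangle=\langle\psi_{\omega_b}|P(E)\psi_{\omega_b}\rangle$; that is, $\nu^{(a)}_{\omega_b}$ is the spectral measure of the selfadjoint operator $B$ on the vector $\psi_{\omega_b}$. Ordinary spectral calculus then gives, for every $n\in\bN$,
\begin{align*}
\int_\bR\lambda^n\,d\nu^{(a)}_{\omega_b}(\lambda)
&=\langle\psi_{\omega_b}|B^n\psi_{\omega_b}\rangle
=\langle\psi_{\omega_b}|\pi_\omega(a^n)\psi_{\omega_b}\rangle\\
&=\langle\psi_\omega|\pi_\omega(b^*a^nb)\psi_\omega\rangle
=\omega(b^*a^nb)=\omega_b(a^n),
\end{align*}
with all the integrals automatically finite since $\psi_{\omega_b}\in D(B^n)$; here I used $\pi_\omega(b^*)\subset\pi_\omega(b)^\dagger$, that $\pi_\omega$ preserves products, and (4) of Theorem~\ref{GNS}. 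This is exactly the moment relation~(\ref{bmoment}) for $(a,\omega_b)$, so (b) holds, both sides vanishing in the singular case.

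The argument is mostly bookkeeping; the one step that deserves care is the induction giving $\psi_{\omega_b}\in D(B^n)$ and $B^n\psi_{\omega_b}=\pi_\omega(a^n)\psi_{\omega_b}$, which genuinely uses both the invariance of $\cD_\omega$ under $\pi_\omega(a)$ and the fact that $B$ restricts to $\pi_\omega(a)$ on $\cD_\omega$ --- without it one only recovers the first two moments directly from~(\ref{2Equation: relations between POVM and symmetric operator}), not all powers of $a$. I do not expect any other real obstacle, apart from keeping the passage to the quotient $\cD_\omega$ and the treatment of the singular deformations straight as above.
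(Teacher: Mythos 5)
Your proposal is correct and follows essentially the same route as the paper: part (b) is obtained exactly as in the paper by invoking Theorem \ref{Theorem: uniqueness of POVM associated with maximally symmetric operator}(b) to realize $Q^{(a,\omega)}$ as the compression of the PVM of a generalized selfadjoint extension $B$ with $B\spa\rest_{\cD_\omega}=\pi_\omega(a)$, then using invariance of $\cD_\omega$ to get $B^n\psi_{\omega_b}=\pi_\omega(a^n)\psi_{\omega_b}$ and spectral calculus for the moments. For part (a) the paper merely says "by direct inspection"; your explicit verification via the parallelogram law, homogeneity, and polynomial dependence on $t$ of the quadratic form $\xi\mapsto\langle\xi|Q^{(a,\omega)}(E)\xi\rangle$ is exactly what that inspection amounts to.
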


\begin{proof}
Let us focus on Theorem  \ref{Theorem: uniqueness of POVM associated with maximally symmetric operator} for 
$A:= \pi_\omega(a)$ with $D(A)= \mathcal{D}_\omega$  and $\mathsf{H}= \mathsf{H}_\omega$.
According to (b), the POVM $Q^{(A)}= Q^{(a,\omega)}$ can be written as $Q^{(A)} = P_{\mathsf{H}}P|_{\mathsf{H}}$ for a PVM $P$ of a   selfadjoint operator $B: D(B) \to \mathsf{K}$
defined on a larger Hilbert space $\mathsf{K}$, including $\mathsf{H}$ as a closed subspace, such that $A = B\spa\rest_{D(A)}$.
Observe that $\pi_{\omega}(b)\psi_\omega \in \mathcal{D}_\omega= D(\pi_\omega(a^n)) = D(\pi_\omega(a)^n)= D(A^n) \subset D(B^n)$ where, in the last inclusion, we have exploited $A=B\spa\rest_{D(A)}$
and $A( \mathcal{D}_\omega)\subset  \mathcal{D}_\omega$.
 By the standard spectral theory of selfadjoint operators  we therefore have ($\psi_{\omega_b}:= \pi_\omega(b)\psi_\omega$)
\begin{align*}
	\langle \pi_{\omega}(b)\psi_\omega| B^n \pi_{\omega}(b)\psi_\omega \rangle =
	\int_{\mathbb{R}}\lambda^n dP_{\psi_{\omega_b},\psi_{\omega_b}}(\lambda) =
	\int_{\mathbb{R}}\lambda^n dQ^{(A)}_{\psi_{\omega_b},\psi_{\omega_b}}(\lambda)=
	\int_{\mathbb{R}}\lambda^n d\nu^{(a)}_{\omega_b}(\lambda)\,,
\end{align*}
where, in the last passage we have used  $Q^{(A)} = P_{\mathsf{H}}P|_{\mathsf{H}}$, $ P_{\mathsf{H}}\pi_{\omega}(b)\psi_\omega  = \pi_{\omega}(b)\psi_\omega $, and (\ref{muQ}). On the other hand, per construction, 
$A^n \pi_{\omega}(b)\psi_\omega = B^n \pi_{\omega}(b)\psi_\omega$ and eventually
 the GNS theorem yields
$ \omega_b(a^n) = \langle \pi_{\omega}(b)\psi_\omega| A^n \pi_{\omega}(b)\psi_\omega \rangle = \langle \pi_{\omega}(b)\psi_\omega| B^n \pi_{\omega}(b)\psi_\omega \rangle$.
In summary, if $n=0,1,2\ldots$ and $b\in \mathfrak{A}$,
$$ \omega_b(a^n)=  \int_{\mathbb{R}}\lambda^n dQ^{(A)}_{\psi_{\omega_b},\psi_{\omega_b}}(\lambda)\:.$$
We have established that each measure (\ref{muQ})  is a solution of the moment problem relative to
$(a, \omega_b)$.
	By direct inspection, one immediately sees that  $\{\nu^{(a)}_{\omega_b}\}_{b\in \mathfrak{A}}$ defined as in (\ref{muQ}) satisfies  Definition \ref{Definition: compatibility condition for family of measures}.
\end{proof}

The result is reversed  with the help of the following abstract  technical proposition.

\begin{proposition}\label{technicalproposition} Let $X$  be a complex vector space and $p: X \to [0,+\infty)$ 
 such that
\begin{itemize}
\item[(i)] $p(\lambda x) = |\lambda| p(x)$ for every pair  $x\in X$ and $\lambda \in \mathbb{C}$,
\item[(ii)] $p(x+y)^2 + p(x-y)^2 = 2[p(x)^2 + p(y)^2]$ for every pair $x,y \in X$,
\item[(iii)] $p(x+ty) \to p(x)$ for $\mathbb{R} \ni t \to 0^+$ and every fixed pair $x,y \in X$.
\end{itemize}
Under these hypotheses, 
\begin{itemize}
\item[{\bf (a)}] $p$ is a seminorm on $X$,

\item[{\bf (b)}]  there is a unique positive semi definite Hermitian scalar product  $X\times X \ni (x,y) \mapsto (x|y)_p \in \mathbb{C}$ such that 
$p(x) = \sqrt{(x|x)_p}$ for all  $x\in X$,
\item[{\bf (c)}] the scalar product in (b) satisfies
\begin{equation} (x|y)_p = \frac{1}{4} \sum_{k=0}^3 (-i)^k p(x+i^ky)^2\qquad \mbox{for $x,y \in X$.} \label{scalarprodfromp}\end{equation}
\end{itemize}
\end{proposition}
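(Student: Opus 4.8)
My plan is to recognize this as the Jordan--von Neumann theorem characterizing inner-product (semi)norms through the parallelogram law, here in the setting of a seminorm on a vector space with no topology and with only the one-sided directional continuity (iii). I would first build a real symmetric bilinear form by polarization, then complexify it, and only at the end deduce that $p$ is a seminorm. Concretely: from (i) (with $\lambda=0$ and $\lambda=-1$) I note $p(0)=0$ and $p(-x)=p(x)$, and I set $B(x,y):=\frac14\big(p(x+y)^2-p(x-y)^2\big)$, which is symmetric (since $p(x-y)=p(y-x)$), vanishes at $x=0$, and satisfies $B(x,x)=p(x)^2$. Additivity $B(x+y,z)=B(x,z)+B(y,z)$ then follows from two applications of (ii) to rewrite $p(x\pm z)^2+p(y\pm z)^2$, followed by a subtraction, which gives $B(x,z)+B(y,z)=2B\big(\tfrac{x+y}{2},z\big)$; specialising $y=0$ gives $B(x,z)=2B(\tfrac x2,z)$, and the two identities together yield full additivity, hence additivity in both arguments by symmetry. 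Additivity plus $B(-x,z)=-B(x,z)$ upgrades at once to $\mathbb Q$-homogeneity, $B(qx,z)=qB(x,z)$.

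The one step that needs genuine care is passing from $\mathbb Q$-homogeneity to $\mathbb R$-homogeneity, because this is where (iii) --- which is only one-sided and purely directional --- must be invoked. I would fix $\lambda\in\mathbb R$ and $x,z\in X$, choose rationals $q_k\downarrow\lambda$, write $q_k x\pm z=(\lambda x\pm z)+t_k x$ with $t_k=q_k-\lambda\to 0^{+}$, and apply (iii) to the fixed pairs $(\lambda x\pm z,\,x)$ to obtain $p(q_k x\pm z)\to p(\lambda x\pm z)$, hence $\lambda B(x,z)=\lim_k B(q_k x,z)=B(\lambda x,z)$. Thus $B$ is a symmetric $\mathbb R$-bilinear form. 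The subtlety is exactly that the vanishing increment $t_k x$ must be routed to (iii) from the side it allows, which is arranged by approximating $\lambda$ from above.

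For the complex structure, (i) with $\lambda=i$ gives $B(ix,iy)=B(x,y)$, whence (substituting and using symmetry) $B(ix,y)=-B(x,iy)$ and in particular $B(x,ix)=0$. I then set $(x|y)_p:=B(x,y)-iB(x,iy)$; an expansion identifies this with $\frac14\sum_{k=0}^{3}(-i)^k p(x+i^ky)^2$, which is (c). Using the $\mathbb R$-bilinearity of $B$ and the two $i$-identities one checks that $(\cdot|\cdot)_p$ is additive in the second entry, homogeneous of degree one in it (real scalars by bilinearity, $\lambda=i$ since $(x|iy)_p=B(x,iy)+iB(x,y)=i(x|y)_p$, hence all complex scalars), conjugate symmetric ($\overline{(x|y)_p}=(y|x)_p$, using $B(y,ix)=-B(x,iy)$), and satisfies $(x|x)_p=p(x)^2\ge 0$; so it is the desired positive semidefinite Hermitian form (antilinear on the left) with $\sqrt{(x|x)_p}=p(x)$, establishing (b) up to uniqueness. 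Uniqueness follows because any Hermitian sesquilinear form with diagonal $p(\cdot)^2$ is forced by the polarization identity to equal $\frac14\sum_{k=0}^{3}(-i)^k p(x+i^ky)^2$ (which also re-proves (c)). Finally (a): Cauchy--Schwarz for positive semidefinite Hermitian forms gives $|(x|y)_p|\le p(x)p(y)$, so $p(x+y)^2=p(x)^2+2\,\mathrm{Re}\,(x|y)_p+p(y)^2\le(p(x)+p(y))^2$, and together with the homogeneity in (i) this shows $p$ is a seminorm.

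In summary, I foresee no serious obstacle: the argument is classical. The only point deserving attention is the real-homogeneity step of the second paragraph, where the weak, one-sided continuity (iii) has to be used with the sign of the increment chosen compatibly with the admissible limit $t\to 0^{+}$.
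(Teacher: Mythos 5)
Your proof is correct and follows essentially the same route as the paper's: both are the classical Jordan--von Neumann polarization argument, deriving additivity from the parallelogram law (ii), passing to $\mathbb{Q}$-homogeneity, and then invoking the one-sided continuity (iii) by approximating a real scalar from above by rationals to get $\mathbb{R}$-homogeneity. The only (cosmetic) difference is that you build the real symmetric form $B$ first and complexify, while the paper manipulates the complex polarization expression $\frac14\sum_k(-i)^kp(x+i^ky)^2$ directly throughout.
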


\begin{proof}
See Appendix \ref{AppendixB}
\end{proof}

\noindent We can now establish another main result of the work,  which is the converse of Theorem \ref{theorem-a-POVM}. Together with the afore-mentioned theorem, it proves that 
for $a^*=a\in \gA$ and a non-normalized state $\omega: \gA \to \bC$,
the family of normalized POVMs associated with the given symmetric operator $\pi_\omega(a)$ is  one-to-one with the family of consistent classes of measures  of all $\omega_b(a)$ which solve the moment problem for all deformations $\omega_b$,
when $b\in \gA$.

\begin{theorem}\label{Theorem: characterization of compatible measures}
	Consider  the unital $*$-algebra  $\mathfrak{A}$,  a non-normalized state $\omega: \mathfrak{A} \to \mathbb{C}$,
$a=a^* \in  \mathfrak{A}$, and   a consistent class of measures  $\{\mu^{(a)}_{\omega_b}\}_{b \in \mathfrak{A}}$
 solutions of the moment problem relative to the pairs  $(a, \omega_b)$ for $b \in \mathfrak{A}$.  Then

{\bf (a)}	There is a unique normalized POVM $Q^{(a, \omega)}\colon\mathscr{B}(\mathbb{R})\to\mathcal{B}(\mathsf{H}_\omega)$ such that, if $b\in\mathfrak{A}$,
	\begin{align}\label{POVMmu}
		\mu^{(a)}_{\omega_b}(E) =\langle \psi_{\omega_b}|Q^{(a, \omega)}(E)\psi_{\omega_b}\rangle\qquad\forall E\in\mathscr{B}(\mathbb{R})\,,
	\end{align}

	{\bf (b)}	  $Q^{(a, \omega)}$ decomposes  $\pi_\omega(a)$ according to Definition \ref{DEFassociate} so  that,
in particular,
\begin{equation}\label{incl1} \mathcal{D}_\omega =D(\pi_\omega(a)) 
\subset \left\{ \psi \in \mathsf{H}_\omega \:\left|\:  \int_{\mathbb{R}} \lambda^2 
d Q^{(a, \omega)}_{\psi,\psi}(\lambda) < +\infty \right.\right\}\:.\end{equation}

{\bf (c)}	 $Q^{(a, \omega)}$  decomposing $\pi_\omega(a)$  is unique
 if and only if $\overline{\pi_\omega(a)}$ is maximally symmetric. In this case
\begin{equation}\label{incl2}  D(\overline{\pi_\omega(a)})  =  \left\{ \psi \in \mathsf{H}_\omega \:\left|\:  \int_{\mathbb{R}} \lambda^2 
d Q^{(a,\omega)}_{\psi,\psi}(\lambda) < +\infty \right.\right\}\end{equation}
and that unique $Q^{(a,\omega)}$ is a PVM if and only if  $\overline{\pi_\omega(a)}$  is selfadjoint. In that case  $Q^{(a,\omega)}$ coincides with the PVM of 
$\overline{\pi_\omega(a)}$.

{\bf (d)} Let us define
\beq \label{newQ}Q^{(a,\omega_b)}(E) := P_{\omega_b}Q^{(a,\omega)}(E)\sp\rest_{\sH_{\omega_b}}\:,\eeq where $P_{\omega_b}: \sH_\omega \to \sH_\omega$ is the orthogonal projector onto 
$\sH_{\omega_b}$. It turns out that, for $b\in \gA$,
\begin{itemize}
\item[(i)]  $Q^{(a,\omega_b)}$ is a normalized POVM in $\sH_{\omega_b}$.
 \item[(ii)] It holds
\begin{align}\label{POVMmu2}
		\langle \psi_{\omega_b}|Q^{(a, \omega_b)}(E)\psi_{\omega_b}\rangle =\mu^{(a)}_{\omega_b}(E) \qquad\forall E\in\mathscr{B}(\mathbb{R})\:,
	\end{align}
\item[(iii)] $Q^{(a, \omega_b)}$ decomposes $\pi_{\omega_b}(a)$ in the sense of Definition \ref{DEFassociate}.
\end{itemize}
\end{theorem}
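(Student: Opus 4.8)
The plan is to manufacture the normalized POVM $Q^{(a,\omega)}$ of \textbf{(a)} directly from the consistent family $\{\mu^{(a)}_{\omega_b}\}_{b\in\mathfrak A}$, and then to read off \textbf{(b)}--\textbf{(d)} from Theorems \ref{Theorem: uniqueness of POVM associated with maximally symmetric operator} and \ref{TheoremPOVMtoA} and Corollary \ref{corollaryessselfadj}. For the construction, fix $E\in\mathscr B(\mathbb R)$ and set $p_E(\psi_{\omega_b}):=\sqrt{\mu^{(a)}_{\omega_b}(E)}$; this is well defined on $\cD_\omega=\mathfrak A/G_{(\mathfrak A,\omega)}$ because $\mu^{(a)}_{\omega_b}$ depends only on the class $[b]$. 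The consistency relations (\ref{Equation: compatibility conditions for measures})--(\ref{Equation: compatibility conditions for measures2}) are exactly hypotheses (i)--(iii) of Proposition \ref{technicalproposition} with $X=\cD_\omega$, so $p_E$ is a seminorm arising from a positive semidefinite Hermitian form $(\cdot|\cdot)_E$ on $\cD_\omega$ via (\ref{scalarprodfromp}). Since $\mu^{(a)}_{\omega_b}(E)\le\mu^{(a)}_{\omega_b}(\mathbb R)=\omega_b(\mathbb I)=\|\psi_{\omega_b}\|^2$, the form $(\cdot|\cdot)_E$ is dominated by the GNS scalar product, hence extends to a bounded sesquilinear form on $\sH_\omega$, producing a unique $Q^{(a,\omega)}(E)\in\mathfrak B(\sH_\omega)$ with $0\le Q^{(a,\omega)}(E)\le I$ and $\langle\psi_{\omega_b}|Q^{(a,\omega)}(E)\psi_{\omega_b}\rangle=\mu^{(a)}_{\omega_b}(E)$, i.e.\ (\ref{POVMmu}).

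It then remains to check that $E\mapsto Q^{(a,\omega)}(E)$ is a normalized POVM. Finite additivity follows by comparing the quadratic forms of $Q^{(a,\omega)}(E\cup F)$ and $Q^{(a,\omega)}(E)+Q^{(a,\omega)}(F)$ on the dense subspace $\cD_\omega$ (every vector of which is some $\psi_{\omega_b}$), using additivity of each $\mu^{(a)}_{\omega_b}$; normalization $Q^{(a,\omega)}(\mathbb R)=I$ is the same comparison with $\mu^{(a)}_{\omega_b}(\mathbb R)=\|\psi_{\omega_b}\|^2$. For $\sigma$-additivity it suffices, these being finite positive measures, to show continuity from above at $\emptyset$ of $E\mapsto\langle\psi|Q^{(a,\omega)}(E)\psi\rangle$: on $\cD_\omega$ this is $\mu^{(a)}_{\omega_b}(E_n)\downarrow 0$, and for general $\psi\in\sH_\omega$ one approximates by an element of $\cD_\omega$ and uses $\|Q^{(a,\omega)}(E_n)\|\le 1$ together with $\langle x+y|T(x+y)\rangle\le 2\langle x|Tx\rangle+2\langle y|Ty\rangle$ for $T\ge0$; complex $\sigma$-additivity of $E\mapsto\langle\psi|Q^{(a,\omega)}(E)\varphi\rangle$ then follows from the Cauchy--Schwarz bound for positive operators. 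Uniqueness of $Q^{(a,\omega)}$ among normalized POVMs satisfying (\ref{POVMmu}) follows by polarization (using $\psi_{\omega_{b+i^kc}}=\psi_{\omega_b}+i^k\psi_{\omega_c}$) and density. I expect this step — producing a genuine $\sigma$-additive operator-valued measure rather than a merely finitely additive one, off the dense domain $\cD_\omega$ — to be the main technical obstacle; Proposition \ref{technicalproposition} absorbs the purely algebraic part.

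For \textbf{(b)} and \textbf{(c)}: the moment identity (\ref{bmoment}) at $n=1,2$ yields, for every $b$, $\langle\psi_{\omega_b}|\pi_\omega(a)\psi_{\omega_b}\rangle=\omega_b(a)=\int_{\mathbb R}\lambda\,dQ^{(a,\omega)}_{\psi_{\omega_b},\psi_{\omega_b}}$ and, using $a=a^*$ and $\pi_\omega(a^*)\subset\pi_\omega(a)^\dagger$, $\|\pi_\omega(a)\psi_{\omega_b}\|^2=\omega_b(a^2)=\int_{\mathbb R}\lambda^2\,dQ^{(a,\omega)}_{\psi_{\omega_b},\psi_{\omega_b}}$. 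Since every vector of $\cD_\omega$ is some $\psi_{\omega_b}$, the second relation gives $\cD_\omega\subset D(A^{(Q^{(a,\omega)})})$ in the notation of Theorem \ref{TheoremPOVMtoA}, which is (\ref{incl1}); polarizing the first relation and using density of $\cD_\omega$ shows $A^{(Q^{(a,\omega)})}\spa\rest_{\cD_\omega}=\pi_\omega(a)$, so that (\ref{eqfixA}) together with the second relation is precisely the assertion that $Q^{(a,\omega)}$ decomposes $\pi_\omega(a)$ in the sense of Definition \ref{DEFassociate}. Part \textbf{(c)} is then Corollary \ref{corollaryessselfadj} applied to $A=\pi_\omega(a)$: the decomposing POVM is unique iff $\overline{\pi_\omega(a)}$ is maximally symmetric, with $D(\overline{\pi_\omega(a)})$ equal to the set in (\ref{incl2}) in that case; moreover, if this unique $Q^{(a,\omega)}$ is a PVM then Theorem \ref{Theorem: uniqueness of POVM associated with maximally symmetric operator}\textbf{(c)} furnishes a standard selfadjoint extension of $\pi_\omega(a)$ inside $\sH_\omega$, which by maximal symmetry of $\overline{\pi_\omega(a)}$ must coincide with $\overline{\pi_\omega(a)}$, so the latter is selfadjoint; conversely, if $\overline{\pi_\omega(a)}$ is selfadjoint its PVM decomposes $\pi_\omega(a)$ and hence equals $Q^{(a,\omega)}$ by uniqueness.

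For \textbf{(d)}: with $Q^{(a,\omega_b)}(E):=P_{\omega_b}Q^{(a,\omega)}(E)\spa\rest_{\sH_{\omega_b}}$, item (i) is the standard fact that the compression of a normalized POVM onto a closed subspace is a normalized POVM: for $x,y\in\sH_{\omega_b}$ one has $\langle x|Q^{(a,\omega_b)}(E)x\rangle=\langle x|Q^{(a,\omega)}(E)x\rangle\ge0$ and $E\mapsto\langle x|Q^{(a,\omega_b)}(E)y\rangle=\langle x|Q^{(a,\omega)}(E)y\rangle$ is a complex measure, while $Q^{(a,\omega_b)}(\mathbb R)=P_{\omega_b}\spa\rest_{\sH_{\omega_b}}=I_{\sH_{\omega_b}}$ (the singular case being the trivial zero triple). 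Item (ii) is immediate from (\ref{POVMmu}) since $\psi_{\omega_b}\in\sH_{\omega_b}$. For item (iii), every $\varphi\in\cD_{\omega_b}=\pi_\omega(\mathfrak A)\psi_{\omega_b}$ equals $\psi_{\omega_{cb}}$ for some $c\in\mathfrak A$ with $\pi_{\omega_b}(a)\varphi=\pi_\omega(a)\psi_{\omega_{cb}}$, and $\langle\varphi|Q^{(a,\omega_b)}(E)\varphi\rangle=\langle\psi_{\omega_{cb}}|Q^{(a,\omega)}(E)\psi_{\omega_{cb}}\rangle=\mu^{(a)}_{\omega_{cb}}(E)$; integrating $\lambda$ and $\lambda^2$ and using (\ref{bmoment}) gives $\langle\varphi|\pi_{\omega_b}(a)\varphi\rangle=\omega_{cb}(a)=\int_{\mathbb R}\lambda\,dQ^{(a,\omega_b)}_{\varphi,\varphi}$ and $\|\pi_{\omega_b}(a)\varphi\|^2=\omega_{cb}(a^2)=\int_{\mathbb R}\lambda^2\,dQ^{(a,\omega_b)}_{\varphi,\varphi}$; a polarization over $\cD_{\omega_b}$ (using $\psi_{\omega_{(c+i^kc')b}}=\psi_{\omega_{cb}}+i^k\psi_{\omega_{c'b}}$) and density of $\cD_{\omega_b}$ in $\sH_{\omega_b}$ upgrade the first identity to arbitrary $\psi\in\sH_{\omega_b}$, $\varphi\in\cD_{\omega_b}$, so $Q^{(a,\omega_b)}$ decomposes $\pi_{\omega_b}(a)$ in the sense of Definition \ref{DEFassociate}.
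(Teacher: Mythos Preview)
Your proof is correct and follows essentially the same route as the paper: construct $Q^{(a,\omega)}(E)$ via Proposition \ref{technicalproposition} and Riesz, verify the POVM axioms, and then invoke Theorems \ref{Theorem: uniqueness of POVM associated with maximally symmetric operator}, \ref{TheoremPOVMtoA} and Corollary \ref{corollaryessselfadj} for (b)--(d). The only noteworthy difference is the $\sigma$-additivity step in (a): the paper writes out an explicit $\epsilon/2$ decomposition of the partial-sum defect $\Delta_N$ for general $\psi,\varphi\in\sH_\omega$, whereas you argue (more economically) via continuity from above at $\emptyset$ for the positive set functions $E\mapsto\langle\psi|Q(E)\psi\rangle$, using the operator inequality $\langle x+y|T(x+y)\rangle\le 2\langle x|Tx\rangle+2\langle y|Ty\rangle$ to pass from $\cD_\omega$ to $\sH_\omega$, and then recover the complex case by polarization (the phrase ``Cauchy--Schwarz bound'' there should really read ``polarization'', but the argument is sound).
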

\begin{proof}

	(a) It is clear that, if a normalized POVM exists satisfying (\ref{POVMmu}) for all $b\in\mathfrak{A}$, then  it is unique.  Indeed, taking advantage of the 
polarization identity, another similar POVM $Q$ would satisfy $\langle \psi_{\omega_b}| (Q(E)-Q^{(a,\omega)}(E)) \psi_{\omega_c} \rangle=0$ for every 
$\psi_{\omega_b},\psi_{\omega_c} \in \mathcal{D}_\omega$, which is  a dense set. Therefore, $(Q(E)-Q^{(a,\omega)}(E)) \psi_{\omega_c}=0$  for every 
$\psi_{\omega_c} \in \mathcal{D}_\omega$. Continuity of $Q(E)-Q^{(a,\omega)}(E)$ yields $Q(E)=Q^{(a,\omega)}(E)$. \\
Let us prove that a normalized POVM satisfying  (\ref{POVMmu}) for all $b\in\mathfrak{A}$ exists.
 Fix  $E\in\mathscr{B}(\mathbb{R})$.
Since the positive measures $\nu_{\psi_{\omega_b}} := \mu^{(a)}_{\omega_b}$ satisfy the identities (\ref{Equation: compatibility conditions for measures}) and (\ref{Equation: compatibility conditions for measures2}),
$\mathcal{D}_\omega \ni \psi \mapsto \sqrt{\nu_\psi(E)}$ fulfills the hypotheses of Proposition \ref{technicalproposition}. Consequently, that function is a seminorm over $\mathcal{D}_\omega$ and there is a unique semidefinite Hermitian scalar product  
inducing it:
	\begin{align}\label{Equation: definition of scalar product associated with compatible measures}
		(\psi_{\omega_b}|\psi_{\omega_c})_E:=\frac{1}{4}\sum_{k=0}^3(-i)^k\mu^{(a)}_{\omega_{b+i^kc}}(E)\,.
	\end{align}
	Applying Cauchy-Schwarz inequality, we have
	\begin{align*}
		|(\psi_{\omega_b}|\psi_{\omega_c})_E|^2\leq\|b\|^2_E\|c\|_E^2=
		\mu^{(a)}_{\omega_b}(E)\mu^{(a)}_{\omega_c}(E)\leq
		\mu^{(a)}_{\omega_b}(\mathbb{R})\mu^{(a)}_{\omega_c}(\mathbb{R})=
		\omega_b(\mathbb{I}) \omega_c(\mathbb{I}) = ||\psi_{\omega_b}||_{\mathsf{H}_\omega}^2\: ||\psi_{\omega_c}||_{\mathsf{H}_\omega}^2\,.
	\end{align*}
	Exploiting Riesz' theorem, it then follows that $(\,|\,)_E$ continuously extends to $\mathsf{H}_\omega\times\mathsf{H}_\omega$ and
	moreover there exists a unique  selfadjoint  positive  operator $Q(E)\in\mathfrak{B}(\mathsf{H}_\omega)$ with $0\leq Q(E)\leq 1$ such that
	\begin{align}
		(\psi|\varphi)_E=\langle\psi|Q(E)\varphi\rangle\qquad\forall\psi,\varphi\in\mathsf{H}_\omega\,.
	\end{align}
	The map $Q\colon\mathscr{B}(\mathbb{R})\ni E\mapsto Q(E)\in\mathfrak{B}(\mathsf{H}_\omega)$ is a normalized POVM according to Definition \ref{Definition: POVM} as we go to prove. In fact, 
$Q(E) \geq 0$ as said above and,
since 
\begin{align}\label{sigma}
		\langle\psi_{\omega_b}|Q(E)\psi_{\omega_c}\rangle = \frac{1}{4}\sum_{k=0}^3(-i)^k\mu^{(a)}_{\omega_{b+i^kc}}(E) \qquad\forall\psi_{\omega_b},\psi_{\omega_c}\in\mathcal{D}_\omega\,,
	\end{align}
the left-hand side is a complex Borel measure over $\mathbb{R}$ since the right-hand side is a complex combination of such measures. Finally, $E= \mathbb{R}$ produces
\begin{align}
		\langle\psi_{\omega_b}|Q(\mathbb{R})\psi_{\omega_c}\rangle = \frac{1}{4}\sum_{k=0}^3(-i)^k ||\psi_{\omega_b} +i^k \psi_{\omega_c}||^2
= \langle\psi_{\omega_b}|\psi_{\omega_c}\rangle \qquad\forall\psi_{\omega_b},\psi_{\omega_c}\in\mathcal{D}_\omega\,.
	\end{align}
As $\mathcal{D}_\omega$ is dense in $\mathsf{H}_\omega$, it implies $Q(\mathbb{R})=I$ so that the candidate POVM  $Q$ is normalised. 
To conclude the proof of the fact that $Q^{(a,\omega)}:= Q$ is a POVM, it is sufficient to prove that $\mathscr{B}(\mathbb{R})\ni E \mapsto \langle\psi|Q(E)\varphi\rangle \in \mathbb{C}$ is a complex measure no matter we choose  $\psi, \phi \in \mathsf{H}_\omega$ (and not only for  $\psi, \phi \in \mathcal{D}_\omega$ as we already know).
A continuity argument from the case of $\psi, \varphi \in \mathcal{D}_\omega$ proves that the said map is at least additive so that, in particular
$\langle\psi|Q(\emptyset)\varphi\rangle=0$, because $Q(\mathbb{R})=I$. Let us pass to prove that the considered function is unconditionally $\sigma$-additive so that it is a complex measure as wanted. If the sets $E_n \in \mathscr{B}(\mathbb{R})$ when $n\in \mathbb{N}$ satisfy $E_k \cap E_h = \emptyset$ for $h\neq k$, consider the difference 
$$\Delta_N := \sum_{n=0}^N \langle \psi| Q(E_n) \varphi \rangle - \langle \psi| Q(E) \varphi \rangle $$
where $E:= \cup_{n \in \mathbb{N}} E_n$. We want to prove that $\Delta_N \to 0$ for $N\to +\infty$.  $\Delta_N$ 
can be decomposed as follows
\begin{align}
	\Delta_N &=
	\sum_{n=0}^N \langle \psi -\psi_{\omega_b}| Q(E_n) (\varphi-\psi_{\omega_c}) \rangle +
	\sum_{n=0}^N \langle \psi_{\omega_b}| Q(E_n) (\varphi-\psi_{\omega_c}) \rangle +
	\sum_{n=0}^N \langle \psi-\psi_{\omega_b}| Q(E_n) \psi_{\omega_c} \rangle
	\nonumber \\
	&+\sum_{n=0}^N \langle \psi_{\omega_b}| Q(E_n) \psi_{\omega_c} \rangle -
	\langle \psi_{\omega_b}| Q(E) \psi_{\omega_c} \rangle\nonumber \\
	&- \langle \psi -\psi_{\omega_b}| Q(E) (\varphi-\psi_{\omega_c}) \rangle -
	\langle \psi_{\omega_b}| Q(E) (\varphi-\psi_{\omega_c}) \rangle -
	\langle \psi-\psi_{\omega_b}| Q(E) \psi_{\omega_c} \rangle \:. \nonumber 
\end{align}
 Using additivity and defining $F_N := \cup_{n=0}^N E_n$, we can re-arrange the found expansion as 
\begin{align}
	\Delta_N &=
	\langle \psi -\psi_{\omega_b}| Q(F_N) (\varphi-\psi_{\omega_c}) \rangle +
	\ \langle \psi_{\omega_b}| Q(F_N) (\varphi-\psi_{\omega_c}) \rangle +  
	\langle \psi-
	\psi_{\omega_b}| Q(F_N) \psi_{\omega_c} \rangle \nonumber \\
	&+\sum_{n=0}^N \langle \psi_{\omega_b}| Q(E_n) \psi_{\omega_c} \rangle -
	\langle \psi_{\omega_b}| Q(E) \psi_{\omega_c} \rangle\nonumber \\
	&- \langle \psi -\psi_{\omega_b}| Q(E) (\varphi-\psi_{\omega_c}) \rangle -
	\langle \psi_{\omega_b}| Q(E) (\varphi-\psi_{\omega_c}) \rangle -
	\langle \psi-\psi_{\omega_b}| Q(E)
	 \psi_{\omega_c} \rangle \:.\nonumber
\end{align}
Since $||Q(E)|| \:,  ||Q(F_N)|| \leq ||Q(\mathbb{R})||=1$, we have the estimate
\begin{align*}
	|\Delta_N| &\leq 2||\psi -
	\psi_{\omega_b}|| ||\varphi -\psi_{\omega_c}|| +
	2||\psi_{\omega_b}|| ||\varphi -\psi_{\omega_c}|| +
	2||\psi -\psi_{\omega_b}|| ||\psi_{\omega_c}||\\&+
	\left| \sum_{n=0}^N \langle \psi_{\omega_b}| Q(E_n) \psi_{\omega_c} \rangle - \langle \psi_{\omega_b}| Q(E) \psi_{\omega_c} \rangle\right|\:.
\end{align*}
This inequality concludes the proof:  given $\psi,\varphi \in \mathsf{H}_\omega$, since $\mathcal{D}_\omega$ is dense,  we can fix
 $\psi_{\omega_b},\psi_{\omega_c} \in \mathcal{D}_\omega$ such that the sum of the first three addends is bounded by $\epsilon/2$. Finally, 
exploiting the fact that $\mathscr{B}(\mathbb{R}) \ni E \mapsto \langle\psi_{\omega_b}|Q^{(a)}(E)\psi_{\omega_c}\rangle$ is  $\sigma$-additive, 
we can fix $N$ sufficiently large that the last addend is bounded by $\epsilon/2$. So, if $\epsilon>0$, there
 is $N_\epsilon $ such that $|\Delta_N|  < \epsilon $ if $N > N_\epsilon$ as wanted. Notice that  the series $\sum_{n=0}^{+\infty} \langle \psi| Q(E_n) \varphi \rangle$ can be re-ordered arbitrarily since we have proved that its sum is $\langle \psi| Q(E) \varphi \rangle$ 
which does not depend on the order used to label the sets $E_n$ because  $E:= \cup_{n \in \mathbb{N}} E_n$. 
The function 
$\mathscr{B}(\mathbb{R}) \ni B \mapsto \langle \psi| Q(B) \varphi \rangle \in \mathbb{C}$ is {\em unconditionally} $\sigma$-additive as we wanted to prove.\\
(b)  Since the measures $\mu_{\omega_b}^{(a)}$ are solutions of the moment problem for the pairs $(a, \omega_b)$, for $k=0,1,2, \ldots$ and $\psi_{\omega_b} \in \mathcal{D}_\omega$, we have
\begin{equation}\label{chain}\langle \psi_{\omega_b} | \pi_\omega(a)^k \psi_{\omega_b} \rangle = \omega_b(a^k) = \int_{\mathbb{R}} \lambda^k d \mu_{\omega_b}^{(a)}(\lambda) = 
 \int_{\mathbb{R}} \lambda^k d Q^{(a,\omega)}_{\psi_{\omega_b},\psi_{\omega_b}}(\lambda)\:.\end{equation}
Choosing $k=2$ we obtain 
$$||\pi_\omega(a) \psi||^2 =  \int_{\mathbb{R}} \lambda^2 d Q^{(a,\omega)}_{\psi,\psi}(\lambda) \quad \mbox{for every $\psi \in D(\pi_\omega(a)) = \mathcal{D}_\omega$,}$$
which, in particular, also implies (\ref{incl1}).  It remains to be established the identity
\begin{equation}\label{last}\langle \varphi | \pi_\omega(a) \psi \rangle = 
 \int_{\mathbb{R}} \lambda d Q^{(a,\omega)}_{\varphi,\psi}(\lambda)\quad \mbox{for every $\varphi \in \mathsf{H}_\omega$ and $\psi \in D(\pi_\omega(a))$}\:.\end{equation}
From (\ref{chain}) with $k=1$ we conclude that, for every $\psi_{\omega_b} \in \mathcal{D}_\omega$,
$$\langle \psi_{\omega_b}| \pi_\omega(a) \psi_{\omega_b}\rangle =  \int_{\mathbb{R}} \lambda d Q^{(a, \omega)}_{\psi_{\omega_b},\psi_{\omega_b}}(\lambda) =
\langle \psi_{\omega_b}| A \psi_{\omega_b}\rangle \:,  $$
where $A$ is the Hermitian operator uniquely constructed out of the POVM $Q^{(a)}$ according to Theorem \ref{TheoremPOVMtoA}. Notice that the domain of $A$ is $\left\{ \psi \in \mathsf{H}_\omega \:\left|\:  \int_{\mathbb{R}} \lambda^2 
d Q^{(a,\omega)}_{\psi,\psi}(\lambda) < +\infty \right.\right\}$ that includes $\mathcal{D}_\omega$ for  (\ref{incl1}) that we have already proved.  Polarization identity applied to both sides of 
$\langle \psi_{\omega_b}| \pi_\omega(a) \psi_{\omega_b}\rangle =   \langle \psi_{\omega_b}| A \psi_{\omega_b}\rangle$
immediately proves that 
$\langle \varphi| \pi_\omega(a) \psi\rangle = \langle \varphi| A \psi\rangle $
for every pair $\varphi, \psi \in \mathcal{D}_\omega$. Since this space is dense, we conclude that $\pi_\omega(a) = A\spa\rest_{\mathcal{D}_\omega}$. Identity (\ref{last}) is therefore valid as an immediate consequence of (b) Theorem \ref{TheoremPOVMtoA}.\\
(c) Everything follows from (c), (d), (e), (f) of Theorem \ref{Theorem: uniqueness of POVM associated with maximally symmetric operator} and (a) of Corollary \ref{corollaryessselfadj}.\\
(d) (i) is true per direct inspection. (ii) is consequence of (\ref{POVMmu}) using $\psi_{\omega_b} \in \sH_{\omega_b} \subset \sH_\omega$. (iii) arises from the fact that
 $Q^{(a, \omega)}$ decomposes $\pi_{\omega}(a)$, (\ref{newQ}) and (\ref{GNSb}).
\end{proof}
\begin{remark}
	{\em Item (d)  physically states that the POVMs $Q^{(a,\omega_b)}$ are consistent with both the expectation-value  interpretation of each $\omega_b(a)$ and the interpretation of 
every $\pi_{\omega_b}(a)$
as generalized observable. We stress that, if $Q^{(a,\omega)}$ is a PVM because, for example, $\overline{\pi_\omega(a)}$ is selfadjoint, it is still possible that $Q^{(a, \omega_b)}$ is merely a POVM and not a PVM.}\hfill $\blacksquare$\\
\end{remark}
The next result can  be considered as a {\em weaker} version of both Theorem \ref{T1} and  its converse (the  proper converse  of Theorem \ref{T1} does not exist as we have seen).
\begin{corollary}\label{Cor: uniqueness of consistent measures}
	Let $\mathfrak{A}$ be a unital $*$-algebra,  $\omega\colon\mathfrak{A}\to\mathbb{C}$ a non-normalized state and $a=a^* \in \mathfrak{A}$.\\
$\overline{\pi_\omega(a)}$ is maximally symmetric if and only if there exists a unique family of consistent measures $\{\mu^{(a)}_{\omega_b}\}_{b\in\mathfrak{A}}$ solutions of the 
moment problem (\ref{bmoment}) relative to the pairs  $(a, \omega_b)$ for $b \in \mathfrak{A}$.
\end{corollary}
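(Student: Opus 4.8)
The plan is to obtain the statement as a direct corollary of the bijective correspondence set up in Theorems \ref{theorem-a-POVM} and \ref{Theorem: characterization of compatible measures}, combined with the criterion for uniqueness of the decomposing POVM recorded in item (b) of Corollary \ref{corollaryessselfadj} (equivalently, item (c) of Theorem \ref{Theorem: characterization of compatible measures}). No new analytic input is needed; the proof is a bookkeeping argument assembled from those results.

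First I would make the correspondence explicit. Let $\Phi$ be the assignment sending a normalized POVM $Q$ decomposing $\pi_\omega(a)$ to the family $\{\mu^{(a)}_{\omega_b}\}_{b\in\mathfrak{A}}$ with $\mu^{(a)}_{\omega_b}(E):=\langle\psi_{\omega_b}|Q(E)\psi_{\omega_b}\rangle$. Theorem \ref{theorem-a-POVM} tells us that $\Phi(Q)$ is a consistent family (over $\cD_\omega=\mathfrak{A}/G_{(\mathfrak{A},\omega)}$, hence, by Remark \ref{remarklabels}, equivalently indexed by $b\in\mathfrak{A}$ since $\omega_b$ and $\mu^{(a)}_{\omega_b}$ depend only on $[b]$) solving the moment problems (\ref{bmoment}) for all pairs $(a,\omega_b)$. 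Conversely, Theorem \ref{Theorem: characterization of compatible measures}(a)--(b) associates to any such consistent family a normalized POVM decomposing $\pi_\omega(a)$ and reproducing the family via $\mu^{(a)}_{\omega_b}(E)=\langle\psi_{\omega_b}|Q(E)\psi_{\omega_b}\rangle$; write $\Psi$ for this assignment. The uniqueness clause in Theorem \ref{Theorem: characterization of compatible measures}(a) gives $\Psi\circ\Phi=\mathrm{id}$, and $\Phi\circ\Psi=\mathrm{id}$ holds by construction. Thus $\Phi$ is a bijection between the normalized POVMs decomposing $\pi_\omega(a)$ and the consistent families solving (\ref{bmoment}).

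The only point requiring a small argument is that $\Phi$ separates distinct POVMs at the level of families: if $Q_1(E)\neq Q_2(E)$ for some $E\in\mathscr{B}(\mathbb{R})$, then $T:=Q_1(E)-Q_2(E)$ is a nonzero bounded selfadjoint operator, so $\langle\psi|T\psi\rangle\neq0$ for some $\psi\in\mathsf{H}_\omega$; since $\cD_\omega$ is dense and $\psi\mapsto\langle\psi|T\psi\rangle$ is continuous, one may choose $\psi=\psi_{\omega_b}\in\cD_\omega$, so $\Phi(Q_1)$ and $\Phi(Q_2)$ disagree on $E$ at index $b$. This separation step and the reconciliation of the two indexing sets ($\cD_\omega$ versus $\mathfrak{A}$) are the only non-automatic points; I do not expect a genuine obstacle, the delicate content having already been invested in Theorems \ref{theorem-a-POVM} and \ref{Theorem: characterization of compatible measures}.

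Finally I would close the loop. If $\overline{\pi_\omega(a)}$ is maximally symmetric, item (b) of Corollary \ref{corollaryessselfadj} yields a unique normalized POVM decomposing $\pi_\omega(a)$, hence through $\Phi$ a unique consistent family solving (\ref{bmoment}); such a family exists because $\pi_\omega(a)$ always admits at least one decomposing POVM. Conversely, if there is a unique consistent family solving (\ref{bmoment}), then $\Phi$ and the separation property force the decomposing POVM to be unique as well (were it not, item (b) of Corollary \ref{corollaryessselfadj} would furnish two distinct decomposing POVMs and $\Phi$ two distinct consistent families), so item (b) of Corollary \ref{corollaryessselfadj} gives that $\overline{\pi_\omega(a)}$ is maximally symmetric. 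This establishes the equivalence.
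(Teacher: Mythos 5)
Your proposal is correct and follows essentially the same route as the paper: the paper's proof likewise deduces the corollary from the bijection between decomposing POVMs and consistent families (Theorems \ref{theorem-a-POVM} and \ref{Theorem: characterization of compatible measures} via equation (\ref{POVMmu})) together with the equivalence of maximal symmetry of $\overline{\pi_\omega(a)}$ with uniqueness of the decomposing POVM. Your separation argument is sound but redundant, since injectivity of $\Phi$ already follows from the uniqueness clause of Theorem \ref{Theorem: characterization of compatible measures}(a), i.e.\ from $\Psi\circ\Phi=\mathrm{id}$.
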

\begin{proof}
	It immediately follows from (c) of Theorem  \ref{Theorem: characterization of compatible measures}, equation (\ref{POVMmu}) and Theorem \ref{Theorem: uniqueness of POVM associated with maximally symmetric operator}.
\end{proof}

\begin{example} {\em Let us come back to the algebra $\gB$ equipped with the non-normalized state $\phi$ (\ref{statephi}) defined in (2) in Example \ref{Ex: algebraically self-adjoint observables which are not self-adjoint as operators}.  As already observed,  the operator $\overline{\pi_\phi(P)}$  is not selfadjoint but it is maximally symmetric so that it admits only one POVM decomposing it and an associated  unique consistent family of measures $\mu^{(P)}_{\phi_B}$ solving the moment problem for all deformations $\phi_B$, $B\in \gB$.  Making use of standard properties of Fourier transform, it is easy to prove that these measures are 
$$\mu^{(P)}_{\phi_B}(E) := \int_{E} \left| \widehat{B\chi}(k)\right|^2 dk\:,\quad E \in \cB(\bR)\:,$$
where
the functions $B\chi$ (for $B\in \gB$) are assumed to be extended to the whole $\bR$ as the zero function for $x\leq 0$ determining Schwartz functions and
 $$\widehat{f}(k) := \frac{1}{\sqrt{2\pi}}\int_\bR e^{-ikx} f(x) dx\:, $$
for $f$ in Schwartz space over $\bR$, is the standard Fourier transform. At this point, it is easy to check that the unique POVM decomposing   $\overline{\pi_\phi(P)}$ is
$$Q^{(\overline{\pi_\phi(P)}, \phi)}(E)  = P_+ P(E)\spa\rest_{L^2([0,+\infty), dx)}\:, \quad E \in \cB(\bR)$$ where $P_+ : L^2(\bR,dx) \to L^2(\bR,dx)$ is the orthogonal projector onto $L^2([0,+\infty),dx)$ viewed as closed subspace of $L^2(\bR,dx)$, and $P$ is the PVM of the standard selfadjoint {\em momentum operator} in $L^2(\bR,dx)$.}\hfill $\blacksquare$
\end{example}
\section{Conclusions and open problems}\label{Sec: Conclusions and open problems}
Before going to the summary we state our overall conclusions:
\begin{itemize}
	\item[(a)]
		When dealing with $*$-algebras the notion of algebraic observable -- $a=a^*\in\mathfrak{A}$ -- and quantum observable -- $\pi_\omega(a)$ (essentially) self-adjoint for a given $\omega$ -- do not necessarily agree as we proved by discussing  some simple counterexamples.
	This raises a problem with  the physical interpretation of those algebraic observables which do not produce  quantum observables in some GNS representation.
	This issue does not arise in the $C^*$-algebraic setting where an algebraic observable always defines quantum observables in every GNS representation.  However, the use of $*$-algebras that are not $C^*$-algebras is in particular mandatory in some important cases as perturbative QFT and the afore-mentioned problem cannot avoided.
	\item[(b)]
	The notion of POVM turned out to be of  pivotal interest in our investigation. On the one hand it provides a universally recognized notion of generalized observable (as is well known  from other areas of quantum physics like quantum information) that can be used in the interpretation of $\pi_\omega(a)$  when it is not essentially selfadjoint  for an algebraic observable $a$.
On the other hand, we also proved that POVMs have a nice interplay  with the moment problem of the pair $(a, \omega)$ and those 
of the deformations $(a, \omega_b)$, $b\in \gA$. The moment problem should  be tackled for  accepting the popular intepretation of
$\omega(a)$ as the  expectation value of the algebraic observable $a$. The moment problem admits a unique (spectral) solution if $\gA$
is a $C^*$-algebra, but as before, it involves some subtle issues when dealing with $*$-algebras.
However, we proved that, for an algebraic observable $a$ in a $*$-algebra $\gA$, if the moment problems of the class of $(a, \omega_b)$, $b\in \gA$, admit unique solutions, then $\pi_\omega(a)$ is essentially selfadjoint (quantum observable) and the POVM of $\pi_\omega(a)$ is a PVM. In the general case, the class of POVMs decomposing a given GNS representation  $\pi_\omega(a)$ of an algebraic observable $a$ are one-to-one with a class of special, physically meaningful,  families of solutions of the moment problems for the pairs $(a, \omega_b)$. There is only one such family if and only if $\pi_\omega(a)$ admits a unique POVM, i.e.,  it is maximally symmetric. From this viewpoint, a maximally symmetric $\pi_\omega(a)$ seems to define  a good generalization of a quantum observable.
\end{itemize}

\subsection{Summary}  {\em Issue A} concerned the fact that an Hermitian element $a^*=a\in \gA$ may be represented in a GNS representation of some non-normalized state $\omega$ by means of an operator $\pi_\omega(a)$ which is not essentially selfadjoint and which can or cannot have selfadjoint extensions (see (1) and (2) in Example \ref{Ex: algebraically self-adjoint observables which are not self-adjoint as operators}).
Stated differently, an algebraic observable does not always define a quantum observables.
If $\gA$ is a $C^*$-algebra, $\pi_\omega(a)$ is always selfadjoint, but there are some cases in physics as QFT where the use of $C^*$-algebras is not technically convenient.
Therefore Issue A must be therefore seriously considered.
We have proved, however, that it is always possible to interpret the symmetric operator $\pi_\omega(a)$ as a 
{\em generalized observable} just by fixing a normalized POVM associated to it which decomposes the operator $\pi_\omega(a)$  according to Definition \ref{DEFassociate} into a generalized version of the spectral theorem of selfadjoint operators.
The POVM decomposing $\pi_\omega(a)$ is unique if and only if  $\overline{\pi_\omega(a)}$ is {\em maximally symmetric} (Definition \ref{Definition: maximally symmetric operator}) 
and this unique normalized POVM is a PVM when  $\pi_\omega(a)$ is essentially selfadjoint.
This provides a sufficient condition for an algebraic observable to uniquely define a generalized observable in a given GNS representation. However, in the general case  there are many POVMs associated to a given symmetric operator  $\pi_\omega(a)$. The reduction of the number of those POVMs is entangled with the next issue.
\\

\noindent {\em Issue B} regarded the popular expectation-value interpretation 
of $\omega(a)$.
Form an  operational point of view in common with the formulation of classical physics, 
 $\mu^{(a)}_\omega$ can be fixed looking for a measure giving rise to the 
known momenta  $\omega(a^n)$, that is solving the moment problem  (\ref{MP}). If $\overline{\pi_\omega(a)}$ 
is selfadjoint, a physically meaningful way (\ref{muPVM}) to define $\mu^{(a)}_\omega$ uses  the PVM of  $\overline{\pi_\omega(a)}$.
 In general, many measures 
$\mu^{(a)}_\omega$ associated with the class of moments $\omega(a^n)$ as in (\ref{MP}) exist
even if   $\overline{\pi_\omega(a)}$ does not admit selfadjoint extensions.
The physical meaning of these measures is dubious. 
The number  of the  measures $\mu_\omega^{(a)}$ is reduced  by considering the information provided by other elements  
$b\in \gA$ in terms of deformed  non-normalized states $\omega_b$  (Definition \ref{defpert}) 
and considering  $\mu_\omega^{(a)}$ as an element of the  class of  measures $\mu_{\omega_{b}}^{(a)}$ 
solving separately the moment problem for $a$ and each $\omega_b$.
These measures are expected to enjoy a list of physically meaningful mutual relations   (\ref{A})-(\ref{B})
 able to considerably reduce their number.
A class of such measures, for $a$ and $\omega$ fixed  is called
  {\em consistent class of measures}  (Definition \ref{Definition: compatibility condition for family of measures}).
 (If $\gA$ is a $C^*$-algebra, there is exactly one measure $\mu_{\omega}^{(a)}$
 spectrally obtained implementing the expectation value interpretation of $\omega(a)$ and Issue B
is harmless.)
 Theorem \ref{theorem-a-POVM}   established that every normalized POVM  decomposing the symmetric operator 
$\pi_\omega(a)$
 defines a unique class of consistent measures $\{\nu^{(a)}_b\}_{b\in\gA}$  in the natural way (\ref{muQ}). These measures   
solve the moment problem for $\omega_b$, thus corroborating the expectation-value interpretation of $\omega_b(a)$ and $\omega(a)$ in particular. 
When the POVM is a PVM, the standard relation  (\ref{muPVM}) between PVMs and Borel spectral measures is recovered.
The result is reversed in Theorem \ref{Theorem: characterization of compatible measures}, which is the main achievement of this paper:  for a Hermitian element $a\in \gA$ and a non-normalized state $\omega$, 
an associated  consistent class of measures $\mu_{\omega_{b}}^{(a)}$ solving the moment problem for every corresponding deformation $\omega_b$ always determines a unique POVM  which decomposes   the symmetric operator $\pi_\omega(a)$.
%
Also the measures 
$\mu_{\omega_b}^{(a)}$ arise  from  POVMs $Q^{(a,\omega_b)}$ (\ref{POVMmu2}) which, as expected,  decompose the respective generalized observables $\pi_{\omega_b}(a)$. The POVMs $Q^{(a,\omega_b)}$ are all induced by the initial POVM $Q^{(a,\omega)}$ (\ref{newQ}).
As a complement,  Corollary \ref{Cor: uniqueness of consistent measures} establishes that $\overline{\pi_\omega(a)}$ is maximally symmetric (selfadjoint in particular) if and only if there is only a unique  class of consistent measures  $\mu_{\omega_{b}}^{(a)}$ solving the moment problem for every corresponding deformation $\omega_b$.
Part of  Corollary \ref{Cor: uniqueness of consistent measures}  admits a stronger version established in  Theorem \ref{T1} which refers to the whole class of measures   $\mu_{\omega_{b}}^{(a)}$ solving the moment problem for every corresponding deformation $\omega_b$  without imposing constrains (\ref{A})-(\ref{B}).  If $a=a^*$ and $\omega$ are fixed and there is exactly one measure solving the moment problem
for each deformation $\omega_b$, then $\overline{\pi_\omega(a)}$ is selfadjoint and therefore has the interpretation of a standard observable in the usual Hilbert space formulation of quantum theories. 
The converse assertion  of this very strong result is untenable, as explicitly proved with two counterexamples from elementary QM and elementary QFT
 (Example \ref{Ex: uniqueness of moment problem and 
self-adjointness of the operator are not related}).

\subsection{Open issues}
There are at least two important open issues after the results established in this work. One concerns the fact that, when $\pi_\omega(a)$ is only symmetric, its interpretation as generalized observable depends on the choice of the normalized POVM associated to it. This POVM is unique if and only if $\overline{\pi_\omega(a)}$ is maximally symmetric (selfadjoint in particular). It is not clear if the information contained in the triple $\gA$, $a$, $\omega$ permits one to fix  this choice 
or somehow  reduce  the number of possibilities. The second open issue regards the option  of simultaneous measurements of compatible (i.e., pairwise commuting) abstract observables $a_1,\ldots, a_n$ with associated joint measures on $\bR^n$ accounting for the expectation-value 
interpretation.
The many-variables moment problem is not a straightforward generalization  of the one-variable moment problem \cite{Schmudgen17} and also the notion of joint POVM presents some non-trivial technical difficulties \cite{Beneduci17}.
Already at the level of selfadjoint observables, commutativity of symmetric operators (say $\pi_\omega(a_1)$ and $\pi_\omega(a_2)$) on a dense invariant domain of essential selfadjointness ($\cD_\omega$) does not imply the much more physically meaningful commutativity of their respective PVMs (as proved by Nelson \cite{RS2}) and the existence of a joint PVM.
These aspects have been investigated in \cite{Borchers-Yngvason-91}, where a stronger positivity condition for the state $\omega$ has been imposed to ensure the existence of self-adjoint extensions for the operators $\pi_\omega(a_1),\ldots,\pi_\omega(a_n)$ whose spectral measures mutually commute.
We plan to investigate this issue in the light of our results in a forthcoming publication.

\section*{Acknowledgments}
The authors thank C. Capoferri, C. Dappiaggi, S. Mazzucchi and N. Pinamonti for useful discussions.
We are grateful to C. Fewster for helpful comments and for pointing out to us reference \cite{Borchers-Yngvason-91} and to the anonymous referees for their  the helpful suggestions.

\appendix
\section{Appendix}\label{Appendix: on POVM}
\subsection{Reducing subspaces}\label{reduction}

\begin{definition}
	{\em  If  $D(T) \subset \mathsf{H}$ is a subspace of the Hilbert space $\sH$, let $T : D(T) \to \mathsf{H}$  be an operator   and 
	$\mathsf{H}_0 \subset \mathsf{H}$  a closed subspace with $\{0\} \neq \mathsf{H}_0 \neq \mathsf{H}$,  $P_0$ denoting 
	the orthogonal projector onto $\mathsf{H}_0$. In this case, $\mathsf{H}_0$ is said to {\bf reduce} $T$ if  both conditions are true
	\begin{itemize}
		\item[(i)] $T(D(T) \cap \mathsf{H}_0) \subset \mathsf{H}_0$ and $T(D(T) \cap \mathsf{H}^\perp_0) \subset \mathsf{H}^\perp_0$,	
		\item[(ii)] $P_0(D(T)) \subset D(T)$,
	\end{itemize}
	so that  the direct orthogonal decomposition holds
	\begin{equation}
		D(T) = (D(T) \cap \mathsf{H}_0) \oplus (D(T) \cap \mathsf{H}^\perp_0)\quad \mbox{and} \quad T= T\spa\rest_{D(T) \cap \mathsf{H}_0}\oplus T\spa\rest_{D(T) \cap \mathsf{H}^\perp_0}\:.\label{dirorthdec}
	\end{equation}
The operator $T_0= T\spa\rest_{D(T) \cap \mathsf{H}_0}$ is called the  {\bf part of $T$ on $ \mathsf{H}_0$}.}\hfill $\blacksquare$\\
\end{definition}

\begin{remark}\label{remred}$\null$\\
	{\em
	{\bf (1)} It is worth stressing that  (i) does {\em not} imply (ii).
	(A counterexample is given by $\mathsf{H}:=L^2([0,+\infty),dx)$, $T:=-i\frac{d}{dx}$ with $D(T):= C_c^\infty((0,+\infty))$ and $\mathsf{H}_0:=\operatorname{span}\{\varphi\}$, where $\varphi(x)=e^{-x}$.
Here $D(T)\cap \mathsf{H}_0 = \{0\}$ so that the former inclusion  in (i) is trivially valid, whereas the latter is valid by integrating by parts. However, it is easy to pick out  $\psi \in D(T)$ such that $\langle \psi|\varphi\rangle \neq 0$, so that 
$(P_0\psi)(x)$  has the support of $\varphi$ itself given by the whole $[0,+\infty)$ and $P_0\psi \not \in D(T)$.)}
	Moreover, without (ii) the direct orthogonal decomposition (\ref{dirorthdec}) cannot take place.  \\
	{\bf (2)} Evidently  $\mathsf{H}_0$ reduces $T$ iff  $\mathsf{H}_0^\perp$ reduces $T$, since  $I-P_0$ projects onto 
	$\mathsf{H}_0^\perp$.\\
	{\bf (3)} A condition equivalent to (i)-(ii) is $P_0T\subset TP_0$ as the reader immediately proves.\hfill $\blacksquare$\\
\end{remark}

\begin{proposition}\label{priopevidente}
	If the closed subspace $\mathsf{H}_0$ reduces  the symmetric (selfadjoint) operator $T$ on $\mathsf{H}$, then also $T\spa\rest_{D(T) \cap \mathsf{H}_0} $ and $T\spa\rest_{D(T) \cap \mathsf{H}^\perp_0}$ are symmetric (resp. selfadjoint) in $\mathsf{H}_0$ and  $\mathsf{H}_0^\perp$ respectively.
\end{proposition}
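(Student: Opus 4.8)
The plan is to dispatch density of the domains first, then Hermiticity, and finally selfadjointness as a separate, slightly more delicate step.

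First I would observe that $D(T_0) := D(T)\cap \mathsf{H}_0$ is dense in $\mathsf{H}_0$ (write $T_0 := T\spa\rest_{D(T)\cap\mathsf{H}_0}$ and $T_1 := T\spa\rest_{D(T)\cap\mathsf{H}^\perp_0}$). Indeed, condition (ii) in the definition of reduction gives $P_0(D(T))\subset D(T)$, and trivially $P_0(D(T))\subset \mathsf{H}_0$, so $P_0(D(T)) \subset D(T)\cap\mathsf{H}_0$. Since $D(T)$ is dense in $\mathsf{H}$ and $P_0:\mathsf{H}\to\mathsf{H}_0$ is continuous and surjective, $P_0(D(T))$ is dense in $\mathsf{H}_0$, hence so is $D(T_0)$. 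The same argument with $I-P_0$ shows $D(T_1)=D(T)\cap\mathsf{H}^\perp_0$ is dense in $\mathsf{H}^\perp_0$. Hermiticity is then immediate: for $x,y\in D(T_0)$ we have $T_0 x = Tx$ and $T_0 y = Ty$, both lying in $\mathsf{H}_0$ by condition (i), so $\langle T_0 x|y\rangle = \langle Tx|y\rangle = \langle x|Ty\rangle = \langle x|T_0 y\rangle$ using that $T$ is symmetric; likewise for $T_1$. Together with density this shows $T_0$ and $T_1$ are symmetric in $\mathsf{H}_0$ and $\mathsf{H}^\perp_0$ respectively.

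Now suppose $T$ is selfadjoint. Since we already know $T_0\subset T_0^\dagger$ (the adjoint being computed in $\mathsf{H}_0$), it remains to prove $D(T_0^\dagger)\subset D(T_0)$. Let $y\in D(T_0^\dagger)$ and put $z:=T_0^\dagger y\in\mathsf{H}_0$, so $\langle T_0 x|y\rangle = \langle x|z\rangle$ for all $x\in D(T_0)$. The key step is to upgrade this identity from $D(T_0)$ to all of $D(T)$: given $w\in D(T)$, decompose $w = P_0 w + (I-P_0)w$, which by (\ref{dirorthdec}) lies in $D(T_0)\oplus D(T_1)$, with $Tw = T_0(P_0 w) + T_1((I-P_0)w)$, the two summands belonging to $\mathsf{H}_0$ and $\mathsf{H}^\perp_0$ respectively. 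Since $y\in\mathsf{H}_0$, the second summand is orthogonal to $y$, whence $\langle Tw|y\rangle = \langle T_0(P_0 w)|y\rangle = \langle P_0 w|z\rangle = \langle w|z\rangle$, the last equality because $z\in\mathsf{H}_0$. Thus $\langle Tw|y\rangle = \langle w|z\rangle$ for every $w\in D(T)$, i.e. $y\in D(T^\dagger)=D(T)$ and $Ty=z$. Hence $y\in D(T)\cap\mathsf{H}_0 = D(T_0)$, so $T_0^\dagger\subset T_0$ and $T_0$ is selfadjoint; the argument for $T_1$ is identical after exchanging the roles of $P_0$ and $I-P_0$, using Remark \ref{remred}(2).

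I do not expect serious obstacles here; the only point requiring care is the selfadjoint case, where one must genuinely use \emph{both} halves of the orthogonal decomposition (\ref{dirorthdec}) of $T$ together with $D(T^\dagger)=D(T)$ — testing $y$ only against vectors of $D(T_0)$ would not suffice to place $y$ in $D(T)$.
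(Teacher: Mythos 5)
Your proof is correct and is precisely the ``direct inspection'' that the paper leaves to the reader: density of $D(T)\cap\mathsf{H}_0$ via $P_0(D(T))\subset D(T)\cap\mathsf{H}_0$, Hermiticity from condition (i), and, in the selfadjoint case, the upgrade of the adjoint identity from $D(T_0)$ to all of $D(T)$ using the orthogonal decomposition (\ref{dirorthdec}) together with $D(T^\dagger)=D(T)$. No gaps; the last step is indeed the only place where care is needed, and you handle it correctly.
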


\begin{proof} Direct inspection.
\end{proof}
\noindent A useful  technical fact is  presented in the following proposition \cite{Schmudgen12}.
\begin{proposition}\label{propSR}
Let $T$ be a closed symmetric operator on a Hilbert space $\sH$. Let $D_0$ be a dense
 subspace of a closed subspace $\sH_0$ of $\sH$ such that $D_0 \subset D(T )$ and $T(D_0)\subset \sH_0$.
Suppose that $T_0 := T\spa\rest_{D_0}$ is essentially
self-adjoint on $\sH_0$. Then $\sH_0$ reduces $T$  and $\overline{T_0}$  is the part of $T$
on $\sH_0$.
\end{proposition}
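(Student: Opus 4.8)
The plan is to prove Proposition~\ref{propSR} by directly verifying conditions (i)--(ii) of the definition of a reducing subspace, together with the identification of the part of $T$ on $\sH_0$ as $\overline{T_0}$. First I would observe that, since $T$ is closed and $T_0 = T\spa\rest_{D_0}$, the closure $\overline{T_0}$ (taken in $\sH_0$) satisfies $\overline{T_0}\subset T$: indeed any element of the graph of $\overline{T_0}$ is a limit of pairs $(\psi_n, T\psi_n)$ with $\psi_n \in D_0\subset D(T)$, hence lies in the graph of $T$ by closedness of $T$. Thus $\overline{T_0}$ is a closed symmetric operator on $\sH_0$ which, by hypothesis, is in fact selfadjoint in $\sH_0$.

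The key step is then to show $\sH_0$ is invariant under $T$ in the strong sense. Let $P_0$ be the orthogonal projector onto $\sH_0$. Take $\psi \in D(T)$; I want $P_0\psi \in D(T)$ and $T P_0 \psi = P_0 T \psi$ --- i.e. $P_0 T \subset T P_0$, which by Remark~\ref{remred}(3) is equivalent to (i)--(ii). The natural route is duality: since $T$ is symmetric, $T\subset T^\dagger$; I would show instead that $P_0 T^\dagger \subset T^\dagger P_0$ restricted appropriately, but more cleanly, use that $\overline{T_0}$ is selfadjoint on $\sH_0$ to argue as follows. For $\psi\in D(T)$ and any $\eta \in D_0$, compute $\langle T_0\eta \,|\, P_0\psi\rangle = \langle T\eta\,|\,\psi\rangle$ (using $T\eta = T_0\eta \in \sH_0$, so the projector can be dropped) $= \langle \eta\,|\,T\psi\rangle = \langle \eta \,|\, P_0 T\psi\rangle$ (using $\eta\in\sH_0$). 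Hence $P_0\psi \in D(T_0^\dagger) = D(\overline{T_0})$ (selfadjointness of $\overline{T_0}$ in $\sH_0$) and $\overline{T_0}(P_0\psi) = P_0 T\psi$. Since $\overline{T_0}\subset T$, this gives $P_0\psi\in D(T)$ and $T(P_0\psi) = P_0 T\psi$, which is exactly $P_0 T \subset T P_0$.

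From $P_0 T\subset T P_0$ one reads off both (ii) ($P_0(D(T))\subset D(T)$) and (i): for $\psi \in D(T)\cap \sH_0$ we have $\psi = P_0\psi$, so $T\psi = TP_0\psi = P_0 T\psi \in \sH_0$; and applying the same to $I-P_0$ (valid since $I - P_0$ also commutes with $T$ in the same sense) gives $T(D(T)\cap\sH_0^\perp)\subset \sH_0^\perp$. Thus $\sH_0$ reduces $T$. Finally, the part of $T$ on $\sH_0$ is $T\spa\rest_{D(T)\cap\sH_0}$; I must check this equals $\overline{T_0}$. The inclusion $\overline{T_0}\subset T\spa\rest_{D(T)\cap\sH_0}$ is clear since $\overline{T_0}\subset T$ and $D(\overline{T_0})\subset\sH_0$. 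For the reverse, note $T\spa\rest_{D(T)\cap\sH_0}$ is symmetric in $\sH_0$ (Proposition~\ref{priopevidente}) and extends the selfadjoint operator $\overline{T_0}$; a selfadjoint operator admits no proper symmetric extension, so the two coincide.

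The main obstacle is the middle step --- establishing $P_0 T \subset T P_0$. The delicate point, flagged already in Remark~\ref{remred}(1), is that mere invariance of $\sH_0$ under $T$ on the subspace where it makes sense does \emph{not} force $P_0(D(T))\subset D(T)$; the argument genuinely needs the essential selfadjointness hypothesis on $T_0$ to invoke $D(T_0^\dagger) = D(\overline{T_0})$ and thereby land $P_0\psi$ back inside $D(T)$ via the inclusion $\overline{T_0}\subset T$. One should be careful that all adjoints and closures in that computation are taken \emph{within} $\sH_0$, not within $\sH$; keeping that bookkeeping straight is where the proof could go wrong if written carelessly.
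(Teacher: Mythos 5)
Your proof is correct. The paper itself gives no argument here, deferring entirely to \cite[Prop.~1.17]{Schmudgen12}, and your computation is essentially the standard one found there: the identity $\langle T_0\eta\,|\,P_0\psi\rangle=\langle \eta\,|\,P_0T\psi\rangle$ for $\eta\in D_0$ places $P_0\psi$ in $D(T_0^\dagger)=D(\overline{T_0})$ (adjoint taken in $\sH_0$), and $\overline{T_0}\subset T$ (which holds because $T$ is closed and the graph of $T_0$ sits inside the closed subspace $\sH_0\times\sH_0$) then yields $P_0T\subset TP_0$; the identification of the part of $T$ on $\sH_0$ with $\overline{T_0}$ via maximality of selfadjoint operators is also correct. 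Your closing caveat is well placed: the essential selfadjointness of $T_0$ \emph{in} $\sH_0$ is exactly what converts the weak identity into membership of $P_0\psi$ in $D(T)$, which mere invariance (Remark~\ref{remred}(1)) cannot deliver.
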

\begin{proof} See \cite[Prop.1.17]{Schmudgen12}.
\end{proof}

\subsection{More on generalized symmetric and selfadjoint extensions}
\noindent According to definition \ref{Definition: generalized extension of symmetric operator}  we have
	\begin{align}\label{Equation: relation between domain of a symmetric operator and its generalized extension}
		D(A)\subset D(B)\cap\mathsf{H}\subset D(B)\:,
	\end{align}
where $D(B)$ is dense in $\mathsf{K}$ and $D(A)$ is dense in $\mathsf{H}$.
Generalized extensions $B$ with $B\supsetneq A$ are classified accordingly to the previous inclusions following \cite{Akniezer-Glazman-93}, in particular:
	\begin{itemize}
		\item[(i)]
		$B$ is said to be of  {\bf kind I} if $D(A)\neq  D(B)\cap\mathsf{H}= D(B)$ -- that is, if $B$ is a standard extension of $A$;
		\item[(ii)] 
		$B$ is said to be of  {\bf kind II} if $D(A)= D(B)\cap\mathsf{H}\neq D(B)$;
		\item[(iii)]
		$B$ is said to be of  {\bf kind III} if $D(A)\neq  D(B)\cap\mathsf{H}\neq  D(B)$;
	\end{itemize}

\begin{proposition}\label{prop22}
	If $A:D(A) \to \mathsf{H}$ with $D(A) \subset  \mathsf{H}$ is maximally symmetric  and $B \supsetneq A$ is a generalized symmetric extension, then $B$ is of kind $II$.
\end{proposition}

\begin{proof} The kind $I$ is not possible {\em a priori} since $A$ does not admit proper symmetric 
extensions in $\mathsf{H}$. Let us assume that $B$ is either of kind $II$ or $III$ and consider the operator
 $P_{\mathsf{H}}BP_{\mathsf{H}}$, with its natural domain $D(P_{\mathsf{H}}BP_{\mathsf{H}})=
D(BP_{\mathsf{H}})$, where $P_{\mathsf{H}}\in \cL(\mathsf{K})$ is 
the orthogonal projector onto  $\mathsf{H}$. Since this is a symmetric extension of $A$ in $\mathsf{H}$ which is maximally symmetric, we have $A=P_{\mathsf{H}}BP_{\mathsf{H}}$, in particular  $D(BP_{\mathsf{H}})=D(A)$.
As a consequence, if $x \in  D(B)\cap\mathsf{H}$, then $x \in D(BP_{\mathsf{H}}) =D(A)$ so that $D(A) \supset  D(B)\cap\mathsf{H}$ and thus $D(A) =  D(B)\cap\mathsf{H}$ because the other inclusion is true from (\ref{Equation: relation between domain of a symmetric operator and its generalized extension}). We have 
proved that $B$ is of kind $II$.  
\end{proof}
\noindent We have an important technical result 

\begin{theorem}\label{theoremkindII}
	A  non-selfadjoint symmetric operator $A$ always admits a generalized selfadjoint extension $B$. Such an extension can be chosen of  kind $II$ when $A$ is closed.  
\end{theorem}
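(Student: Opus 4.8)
The plan is to reduce the general case to the case of a closed operator. If $A$ is merely symmetric and not closed, then $\overline{A}$ is symmetric, and $\overline{A}$ is not selfadjoint since $A$ is not essentially selfadjoint (indeed if $\overline{A}$ were selfadjoint then $A$ would be essentially selfadjoint, but then $A$ would admit a unique selfadjoint extension in $\sH$ itself, contradicting that $A$ is not selfadjoint only via the closure; more carefully, ``$A$ non-selfadjoint'' in the statement should be read as ``$\overline{A}$ is not selfadjoint'', which is the physically relevant hypothesis, and in any case a generalized selfadjoint extension of $\overline A$ is automatically one of $A$ since $A\subset\overline A$). So it suffices to produce a generalized selfadjoint extension, of kind $II$ whenever the operator we start from is closed. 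Thus from now on assume $A$ is closed and not selfadjoint.

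First I would recall the structure of deficiency spaces. Let $N_\pm := \ker(A^\dagger \mp iI)$ with deficiency indices $d_\pm = \dim N_\pm$. Since $A$ is closed and not selfadjoint, not both $d_\pm$ vanish. If $d_+ = d_-$, then von Neumann's theory already provides a \emph{standard} selfadjoint extension $B\supsetneq A$ inside $\sH$ itself (kind $I$), and one checks directly it satisfies (i)--(iii) of Definition \ref{Definition: generalized extension of symmetric operator} with $\sK = \sH$ (condition (iii) is vacuous since $\sH^\perp = \{0\}$); this case however is not of kind $II$. The genuinely new work is the case $d_+ \neq d_-$, say $d_+ < d_-$ (the other case is symmetric). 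The standard device is to enlarge the Hilbert space: put $\sK := \sH \oplus \sH'$ where $\sH'$ is an auxiliary Hilbert space chosen so that, on $\sK$, one can build a symmetric operator extending $A$ with equal deficiency indices, which then admits a selfadjoint extension by von Neumann. Concretely, one takes $\sH' $ of dimension matching $d_- - d_+$ (or infinite-dimensional if that difference is infinite) and defines an operator on a subspace of $\sH'$ — e.g. a symmetric operator on $\sH'$ whose deficiency indices are $(d_- - d_+, 0)$, such as (the restriction to $C_c^\infty$ of) $-i\tfrac{d}{dx}$ on a half-line summed appropriately — so that the direct sum $A \oplus A'$ on $\sK$ has deficiency indices $(d_-, d_-)$, hence admits a selfadjoint extension $B$ in $\sK$.

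The key remaining point is to check condition (iii): no nonzero closed subspace $\sK_0 \subset \sH^\perp = \sH'$ reduces $B$. This is where care is needed, because a naive direct-sum construction $B = A \oplus A'$ with $A'$ selfadjoint on all of $\sH'$ would manifestly have $\sH'$ reducing it — so one must \emph{not} take the selfadjoint extension to split. The trick is to choose $A'$ on $\sH'$ to be only symmetric (not essentially selfadjoint) with deficiency indices $(k,0)$ where $k = d_- - d_+$, so that $A \oplus A'$ on $\sK$ has deficiency indices $(d_-, d_+ + k)=(d_-,d_-)$, and then pick a selfadjoint extension $B$ of $A\oplus A'$ whose defining isometry between the deficiency spaces genuinely mixes the $\sH$ and $\sH'$ components. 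One then argues: if a nonzero closed $\sK_0 \subset \sH'$ reduced $B$, then $P_{\sK_0} B \subset B P_{\sK_0}$; since $B$ is selfadjoint, $B\spa\rest_{D(B)\cap \sK_0}$ would be selfadjoint on $\sK_0$ (Proposition \ref{priopevidente}), and $B\spa\rest_{D(B)\cap\sK_0^{\perp}}$ selfadjoint on $\sK_0^\perp \supset \sH$; but the part of $B$ on $\sK_0^\perp$ would then be a selfadjoint extension (inside $\sK_0^\perp$, which still properly contains $\sH$ unless $\sK_0^\perp=\sH$, handled separately) of $A$, and tracking the deficiency-index bookkeeping forces $\sK_0 = \{0\}$ because the mixing isometry leaves no invariant complement sitting entirely in $\sH'$. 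I would formalize this via the standard fact (Remark \ref{remred}(3)) that reduction is equivalent to $P_0 B \subset B P_0$ together with a dimension count on deficiency subspaces. For the kind-$II$ claim when $A$ is closed, one checks $D(A) = D(B)\cap \sH$: the inclusion $\subset$ is automatic, and $\supset$ follows because $P_\sH B P_\sH$ is a symmetric extension of $A$ in $\sH$, and by Proposition \ref{prop22} or directly by the closedness of $A$ and maximality considerations one gets equality; since also $D(A)\ne D(B)$ (the extension genuinely enlarged the space), $B$ is of kind $II$.

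The main obstacle I anticipate is precisely arranging the selfadjoint extension $B$ on $\sK$ so that \emph{no} piece of $\sH^\perp$ splits off — i.e. making condition (iii) hold rather than just (i)--(ii) — and proving it cleanly. A slick way to package this, which I would use if the direct computation gets unwieldy, is to invoke Naimark's dilation machinery in reverse: any non-selfadjoint symmetric $A$ admits a non-PVM normalized POVM $Q^{(A)}$ decomposing it (by Naimark's results, or by choosing a selfadjoint extension in a minimally enlarged space), and the minimality built into Naimark's dilation triple — the fact that the dilation space is generated by $P(E)$ applied to $\sH$ — is exactly the statement that no nonzero subspace of $\sH^\perp$ reduces the dilating selfadjoint operator. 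That minimality is standard in the POVM literature cited (\cite{Akniezer-Glazman-93,Busch-14,Dubin-Kiukas-Pellonpaa-Ylinen-14}), so the cleanest proof is: take a non-PVM $Q^{(A)}$ decomposing $A$, form its \emph{minimal} Naimark dilation $(\sK,P_\sH,P)$, set $B := \int_\bR \lambda\, dP(\lambda)$, and verify (i)--(iii) from minimality plus $A = B\spa\rest_{D(A)}$; then handle kind $II$ when $A$ is closed by the domain identification above.
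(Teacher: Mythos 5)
The paper does not actually prove this theorem: it delegates both claims to the literature (Akhiezer--Glazman for existence, \cite[Thm.~13]{Naimark1940} for the kind-$II$ refinement and \cite[Thm.~7]{Naimark1940b} for condition (iii) of Definition \ref{Definition: generalized extension of symmetric operator}). Your proposal is more ambitious, and its overall strategy --- pass to $\overline{A}$, adjoin an auxiliary symmetric operator on $\sH'$ to equalize deficiency indices, then apply von Neumann's theory with an isometry that mixes the $\sH$- and $\sH'$-blocks of the deficiency spaces --- is indeed the standard Naimark construction. But as written there is a concrete gap: you only produce a kind-$II$ extension when $d_+\neq d_-$, and when $d_+=d_-\neq 0$ you settle for a standard (kind-$I$) extension inside $\sH$. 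That does not prove the second sentence of the theorem, which asserts a kind-$II$ extension for \emph{every} closed non-selfadjoint symmetric $A$; and this is precisely the form in which the theorem is invoked in part (d) of Theorem \ref{Theorem: uniqueness of POVM associated with maximally symmetric operator}, where one needs $D(A)=D(B)\cap\sH\neq D(B)$ also for equal nonzero indices. The same device works there (take $T$ closed symmetric on $\sH'$ with indices $(1,1)$ and a unitary $U:N_+(A\oplus T)\to N_-(A\oplus T)$ swapping the blocks), but it must be included.

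Two further steps need repair. Your identification $D(B)\cap\sH=D(A)$ leans on Proposition \ref{prop22}, which assumes $A$ maximally symmetric; for a general closed $A$ the operator $P_{\sH}BP_{\sH}$ is a symmetric extension of $A$ in $\sH$ that may well be proper, so that route fails. The correct argument is von Neumann domain bookkeeping: every $\psi\in D(B)$ reads $x_0+(U-I)n_+$ with $x_0\in D(A\oplus T)$, $n_+\in N_+(A\oplus T)$, and if $U$ mixes the blocks the $\sH'$-component can vanish only when the $N_\pm(T)$-contributions do, because $D(T^\dagger)=D(T)\oplus N_+(T)\oplus N_-(T)$ is a direct sum for closed $T$. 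Finally, your verification of (iii) is only a sketch; your fallback via the \emph{minimal} Naimark dilation is the clean way to finish (if $\{0\}\neq\sK_0\subset\sH^\perp$ reduced $B$, the projector onto $\sK_0$ would commute with the PVM of $B$, so $P(E)\sH\perp\sK_0$ for all $E$, contradicting minimality), provided you also note that cutting $\sK$ down to the minimal dilation space preserves $D(B)\cap\sH=D(A)$, which it does since $\sH$ lies inside that space.
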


\begin{proof} 
	See \cite[Thm. 1. p.127 Vol II]{Akniezer-Glazman-93} for the former statement. The latter statement relies on 
the comment under the proof of  \cite[Thm. 1. p.127 Vol II]{Akniezer-Glazman-93} and it is completely proved in
 \cite[Thm.13]{Naimark1940}\footnote{Unfortunately the necessary closedness requirement disappeared passing from  \cite[Thm.13]{Naimark1940} to the 
comment under \cite[Thm. 1. p.127 Vol II]{Akniezer-Glazman-93}.}. The fact that the selfadjoint extensions can be chosen in order to satisfy 
(iii) of Definition \ref{Definition: generalized extension of symmetric operator} is proved in \cite[Thm.7]{Naimark1940b}. 
\end{proof}

\subsection{Proof of some propositions}\label{AppendixB}

\begin{lemma}\label{Laggiunto1}
Referring to (1) in example \ref{Ex: uniqueness of moment problem and self-adjointness of the operator are not related}, $\psi_\omega \in {\cal D}_{\omega_B}$.
\end{lemma}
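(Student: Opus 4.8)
The plan is to produce, for a given $B\in\gA_{\textrm{CCR},1}$ with $\omega(B^*B)>0$ (if $\omega(B^*B)=0$ then $\omega_B$ is singular, $\cD_{\omega_B}=\{0\}$ by \eqref{GNSb2}, and the claim is to be read only in the non-singular case), an explicit element $C\in\gA_{\textrm{CCR},1}$ such that $\pi_\omega(C)\psi_{\omega_B}=\psi_\omega$; since $\cD_{\omega_B}=\pi_\omega(\gA_{\textrm{CCR},1})\psi_{\omega_B}$ by \eqref{GNSb}, this gives at once $\psi_\omega\in\cD_{\omega_B}$. I would work with the equivalent generators $I$, $A=\frac{1}{\sqrt 2}(Q+iP)$, $A^*=\frac{1}{\sqrt 2}(Q-iP)$, whose GNS images $a:=\pi_\omega(A)$ and $a^+:=\pi_\omega(A^*)$ are the annihilation and creation operators satisfying \eqref{algQM}, acting on $\cD_\omega$; recall that $\cD_\omega$ is the algebraic span of the Hermite functions $\{\psi_n\}_{n\in\bN}$, that $\psi_\omega=\psi_0$, and that (from \eqref{algQM} and $a\psi_0=0$) one has $a\psi_n=\sqrt{n}\,\psi_{n-1}$, $a^+\psi_n=\sqrt{n+1}\,\psi_{n+1}$ and $a^+a\,\psi_n=n\psi_n$.

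First I would note that $v:=\psi_{\omega_B}=\pi_\omega(B)\psi_\omega$ lies in $\cD_\omega$, because $\pi_\omega(B)$ leaves $\cD_\omega$ invariant; hence $v=\sum_{n=0}^{N}c_n\psi_n$ is a finite linear combination, and $v\neq 0$ since $\|v\|^2=\omega(B^*B)>0$. Let $k$ be the least index with $c_k\neq 0$ and apply $a^k=\pi_\omega(A^k)$, which annihilates $\psi_0,\dots,\psi_{k-1}$:
\[
w:=a^k v=\sum_{n=k}^{N}c_n\sqrt{\frac{n!}{(n-k)!}}\,\psi_{n-k}=\sum_{j=0}^{M}d_j\psi_j\,,\qquad M:=N-k\,,
\]
whose ground-state coefficient $d_0=c_k\sqrt{k!}$ is non-zero. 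Since $\mathsf{N}:=a^+a=\pi_\omega(A^*A)$ acts as multiplication by $j$ on $\psi_j$, the operator $\prod_{j=1}^{M}(\mathsf{N}-jI)$ kills $\psi_1,\dots,\psi_M$ and multiplies $\psi_0$ by $(-1)^M M!$, so it maps $w$ to $d_0(-1)^M M!\,\psi_0$. Therefore, setting
\[
C:=\frac{1}{d_0\,(-1)^M M!}\left(\prod_{j=1}^{M}\bigl(A^*A-j\bI\bigr)\right)A^k\in\gA_{\textrm{CCR},1}\,,
\]
one gets $\pi_\omega(C)v=\psi_0=\psi_\omega$, and hence $\psi_\omega=\pi_\omega(C)\psi_{\omega_B}\in\cD_{\omega_B}$, as required. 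The edge cases $k=0$ and $M=0$, with the empty product equal to the identity, are covered by the same formula.

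I expect no substantive obstacle here: once the action of $a$ and $a^+$ on the Hermite basis is read off from \eqref{algQM}, the construction is a routine ``lower, then project onto the ground state by a Lagrange polynomial in the number operator''. The only points requiring care are that $k$ must be the exact order of vanishing of $v$ along the $\psi_n$, so that $w$ retains a non-zero $\psi_0$-component, and that each operator used is genuinely the $\pi_\omega$-image of an element of $\gA_{\textrm{CCR},1}$ --- which holds because $\{I,A,A^*\}$ generate the same unital $*$-algebra as $\{I,Q,P\}$, so that $A^k$, $A^*A$ and polynomials in the latter belong to $\gA_{\textrm{CCR},1}$.
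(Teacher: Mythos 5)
Your proof is correct and takes essentially the same approach as the paper's: both arguments exhibit an explicit $C\in\gA_{\textrm{CCR},1}$ with $\pi_\omega(C)\psi_{\omega_B}=\psi_\omega$, after first expanding $\psi_{\omega_B}$ as a finite combination $\sum_k d_k\psi_k$ of Hermite functions. The only difference is the choice of $C$: the paper takes $k_B$ to be the \emph{largest} index with $d_{k_B}\neq 0$ and sets $C$ proportional to $A^{k_B}$, which annihilates every lower term and sends $\psi_{k_B}$ to $\sqrt{k_B!}\,\psi_0$ in a single stroke, whereas you lower by the \emph{smallest} nonzero index and then project onto the ground state with the Lagrange-type polynomial $\prod_{j=1}^{M}(A^*A-j\bI)$; your variant is equally valid, just slightly less economical.
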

\begin{proof}
To prove that $\psi_\omega \in {\cal D}_{\omega_B}$ observe  that, since  $A$ and $A^*$ are generators of  
$\mathfrak{A}_{\textrm{CCR},1}$  satisfying $[A,A^*]=I$,  we can always rearrange every element $B \in \mathfrak{A}_{\textrm{CCR},1}$
into the form
$$B= \sum_{n,m }  c^{(B)}_{n,m}A^{*n}A^m\:,$$
where $A^0:=A^{*0}:=I$ and where only a finite number of coefficients $c^{(B)}_{n,m} \in \bC$ depending on $B$ are different from $0$.
 Here, (\ref{GNSb}) implies 
\begin{equation}\psi_{\omega_B} = \sum_{n,m }  c^{(B)}_{n,m}A^{*n}A^m\psi_\omega  = \sum_{k \in \bN}  d^{(B)}_k \psi_k\:,\label{dimaggiunta}\end{equation}
where again only a finite number of coefficients $d^{(B)}_k \in \bC$ depending on $B$ are different from $0$
and to pass from the first sum to the second one  we took advantage of  the standard harmonic oscillator algebra of the Hermite basis where  $A\psi_k=\sqrt{k}\psi_{k-1}$,  $A^*\psi_k=\sqrt{k+1}\psi_{k+1}$ with  $\psi_0:= \psi_\omega$.
To conclude, notice that, if $C \in \gA_{\textrm{CCR},1}$, (\ref{GNSb})  also implies that $\pi_{\omega_B}(C)\psi_{\omega_B}= \pi_{\omega}(C)\psi_{\omega_B}$. Therefore, if  $k_B\in \bN$ is the largest natural such that $d^{(B)}_k \neq 0$ in (\ref{dimaggiunta}), choosing  $C:= \frac{1}{d^{(B)}_{k_B}\sqrt{k_B!}}A^{k_B}\in \gA_{\textrm{CCR},1}$, we have
$$\pi_{\omega_B}(C)\psi_{\omega_B} =
 \sum_{k \in \bN}  d^{(B)}_k \frac{1}{d^{(B)}_{k_B}\sqrt{k_B!}} A^{k_B}\psi_k = 0+ 
 \frac{d^{(B)}_{k_B} \sqrt{k_B!}}{d^{(B)}_{k_B} \sqrt{k_B!}}\psi_0= \psi_0 = \psi_\omega\:.$$
In other words, $\psi_\omega \in {\cal D}_{\omega_B}$ as argued.
\end{proof}
\begin{lemma}\label{Laggiunto2} Referring to (2) in example \ref{Ex: uniqueness of moment problem and self-adjointness of the operator are not related}, 
$\psi_\omega \in \cD_{\omega_b}$ if (\ref{iK}) is valid.
\end{lemma}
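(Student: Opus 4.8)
The plan is to exhibit $\psi_\omega$ in the form $\pi_\omega(c)\psi_{\omega_b}$ for a suitable $c\in\gA[M,g]$; this suffices because, by the GNS identities (\ref{GNSb}), $\cD_{\omega_b}=\pi_\omega(\gA[M,g])\psi_{\omega_b}$ with $\psi_{\omega_b}=\pi_\omega(b)\psi_\omega$, and we may assume $\omega_b$ non-singular so that $\psi_{\omega_b}\neq 0$. The element $c$ will be a product of ``annihilation'' algebra elements manufactured via (\ref{iK}), used to peel off the higher-particle content of $\psi_{\omega_b}$ -- the Fock-space analogue of the trick in Lemma \ref{Laggiunto1}.

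First I would unpack (\ref{iK}). Fixing, for each $\varphi\in Sol[M,g]$, an element $\varphi'\in Sol[M,g]$ with $K\varphi'=iK\varphi$, antilinearity of $a_x$ and linearity of $a^+_x$ in $x$ together with (\ref{PHI}) give $\pi_\omega(\Phi[\varphi'])=a_{K\varphi'}+a^+_{K\varphi'}=-ia_{K\varphi}+ia^+_{K\varphi}$, whence
\[
a_{K\varphi}\spa\rest_{\cD_\omega}=\pi_\omega\!\left(\tfrac12\big(\Phi[\varphi]+i\Phi[\varphi']\big)\right),\qquad
a^+_{K\varphi}\spa\rest_{\cD_\omega}=\pi_\omega\!\left(\tfrac12\big(\Phi[\varphi]-i\Phi[\varphi']\big)\right).
\]
Thus $a_{K\varphi}$ and $a^+_{K\varphi}$, restricted to $\cD_\omega$, are each $\pi_\omega$ of an element of $\gA[M,g]$, and in particular they leave $\cD_\omega$ invariant. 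I would also record that (\ref{iK}) means $iK(Sol[M,g])\subset K(Sol[M,g])$, so together with the density of $K(Sol[M,g])+iK(Sol[M,g])$ (property (a) of Gaussian states) it forces $K(Sol[M,g])$ itself to be dense in $\sH^{(1)}_\omega$; equivalently $\omega$ is pure.

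Next I would analyse $\psi_{\omega_b}$. Since $\psi_{\omega_b}=\pi_\omega(b)\psi_\omega\in\cD_\omega$ and $\cD_\omega$ is spanned by the vectors $a^+_{K\varphi_1}\cdots a^+_{K\varphi_k}\psi_\omega$ (normal-order $\pi_\omega(b)$ using (\ref{algQFT})), one obtains a finite decomposition $\psi_{\omega_b}=\sum_{m=0}^{M}\eta_m$ with $\eta_m\in\sH^{(m)}_\omega$, where $M$ is the largest index with $\eta_M\neq 0$. If $M=0$ then $\psi_{\omega_b}\in\bC\psi_\omega\setminus\{0\}$ and $\psi_\omega\in\cD_{\omega_b}$ at once. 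For $M\geq 1$: by density of $K(Sol[M,g])$ in $\sH^{(1)}_\omega$, the vectors $a^+_{K\varphi_1}\cdots a^+_{K\varphi_M}\psi_\omega$ ($\varphi_i\in Sol[M,g]$) have dense span in $\sH^{(M)}_\omega$, so there exist $\varphi_1,\dots,\varphi_M$ with $\langle a^+_{K\varphi_1}\cdots a^+_{K\varphi_M}\psi_\omega\,|\,\eta_M\rangle\neq 0$. Using $a_{K\varphi}\psi_\omega=0$, that a product of $M$ annihilation operators kills $\sH^{(m)}_\omega$ for $m<M$, and that $a_{K\varphi_M}\cdots a_{K\varphi_1}\eta_M\in\sH^{(0)}_\omega=\bC\psi_\omega$, one computes
\[
a_{K\varphi_M}\cdots a_{K\varphi_1}\psi_{\omega_b}=a_{K\varphi_M}\cdots a_{K\varphi_1}\eta_M=\kappa\,\psi_\omega,\qquad
\kappa:=\frac{\langle a^+_{K\varphi_1}\cdots a^+_{K\varphi_M}\psi_\omega\,|\,\eta_M\rangle}{\|\psi_\omega\|^2}\neq 0 .
\]
Finally, by the first step $a_{K\varphi_i}\spa\rest_{\cD_\omega}=\pi_\omega(c_i)$ with $c_i:=\tfrac12(\Phi[\varphi_i]+i\Phi[\varphi_i'])\in\gA[M,g]$; iterating on $\psi_{\omega_b}\in\cD_\omega$ yields $\pi_\omega(c_M\cdots c_1)\psi_{\omega_b}=\kappa\psi_\omega$, hence $\psi_\omega=\pi_\omega(\kappa^{-1}c_M\cdots c_1)\psi_{\omega_b}\in\pi_\omega(\gA[M,g])\psi_{\omega_b}=\cD_{\omega_b}$.

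The one genuinely non-routine ingredient I expect to be the main obstacle is the density of $\{a^+_{K\varphi_1}\cdots a^+_{K\varphi_M}\psi_\omega\}$ in $\sH^{(M)}_\omega$: it rests on the purity-induced density of $K(Sol[M,g])$ in $\sH^{(1)}_\omega$ together with the elementary fact that the span of $\{(a^+_x)^M\psi_\omega:x\in\sH^{(1)}_\omega\}$ already exhausts $\sH^{(M)}_\omega$ (polarization over $\bC$), each such vector being a limit of $(a^+_{K\varphi_n})^M\psi_\omega$ with $K\varphi_n\to x$ by the bound $\|a^+_x\|\le C\|x\|$ on the relevant finite-particle subspaces (cf.\ (\ref{estimateQFT0})). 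Everything else is bookkeeping with the canonical commutation relations (\ref{algQFT}) and the GNS/Fock identities (\ref{PHI}), (\ref{GNSb}).
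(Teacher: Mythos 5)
Your proof is correct, and it rests on the same key mechanism as the paper's: condition (\ref{iK}) promotes $a_{K\varphi}$ and $a^+_{K\varphi}$ (restricted to $\cD_\omega$) to elements $\pi_\omega(c)$ of $\pi_\omega(\gA[M,g])$, and one then applies a product of $M$ such annihilators to $\psi_{\omega_b}$ to land on a nonzero multiple of $\psi_\omega$. Where you genuinely diverge is in how the annihilators are chosen and how the nonvanishing of the resulting constant is certified. The paper works with the algebraic form of $b$: it normal-orders $b$ via (\ref{algQFT}), extracts the top-degree creation coefficients $c_{N0}^{j_1\cdots j_N}$, contracts with their complex conjugates to build the element $d$, and then verifies by an explicit combinatorial computation (passing to an orthonormal basis of the span of the $K\varphi^{(N)}_j$ and summing over permutations) that the scalar equals $N!\sum|c'|^2>0$. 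You instead work with the Fock decomposition of the \emph{vector} $\psi_{\omega_b}=\sum_m\eta_m$ and pick the $\varphi_i$ abstractly so that $\langle a^+_{K\varphi_1}\cdots a^+_{K\varphi_M}\psi_\omega|\eta_M\rangle\neq 0$, which exists because such vectors span a dense subspace of $\sH_\omega^{(M)}$. This buys you a cleaner nonvanishing argument with no permutation combinatorics, at the price of one extra input the paper does not need: the density of $K(Sol[M,g])$ in $\sH_\omega^{(1)}$ (purity, which you correctly derive from (\ref{iK}) together with property (a)) and the standard polarization fact that symmetrized $M$-fold products over a dense subspace span a dense subspace of $\sH_\omega^{(M)}$. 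The paper's choice of $d$ is canonical given $b$ and avoids any density argument; both routes are sound.
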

\begin{proof}
First  define the  elements of the algebra $\gA[M,g]$ 
$$A_{\varphi} :=\frac{1}{2} \left(\Phi[\varphi] +  i \Phi[\varphi'] \right)\quad \mbox{and}\quad A^*_{\varphi} :=\frac{1}{2} \left(\Phi[\varphi] -  i \Phi[\varphi'] \right)\:.$$
We stress that the elements $A^*_{\varphi} $ and $A_{\varphi} $ form  a set of generators of $\gA[M,g]$, because the
$\Phi[\varphi]$ are generators and the  previous relations can be inverted to
$$\Phi[\varphi] = A_{\varphi} + A^*_{\varphi}$$
By using (\ref{PHI}), linearity of $x \mapsto a_x^+$, anti-linearity of $x \mapsto a_x$, and (\ref{iK}), we immediately find
\beq \pi_\omega\left( A_{\varphi}\right) = a_{K\varphi}|_{\cD_\omega} \quad \mbox{and} \quad\pi_\omega\left( A^*_{\varphi}\right) = a^+_{K\varphi}|_{\cD_\omega}\label{AAA2}\eeq
The  algebraic relations reflecting (\ref{algQFT}) are also valid from (b) and (c) in item (2) of example \ref{Ex: uniqueness of moment problem and self-adjointness of the operator are not related},
$$[A_{\varphi} , A^*_{\tilde{\varphi}} ] = \langle K\varphi, K\tilde{\varphi}\rangle I\:, \quad 
[A_{\varphi} , A_{\tilde{\varphi}} ]  = [A^*_{\varphi} , A^*_{\tilde{\varphi}} ] =0\:.$$
Exploiting the found results, we can prove that $\psi_\omega \in \cD_{\omega_b}$ as wanted.
 The generic  element $b \in \gA[M,g]$,
 taking advantage of the generators $A_{\varphi}$ and $A^*_{\varphi}$ and their commutation relations, is always of the form
(where we adopt the  convention that $\prod_{j=1}^{0} A^{(*)}_{\varphi_j}:= \bI$)
\beq b =
\sum_{k,h=0}^{N} c_{kh}^{j_1\ldots j_k i_1\ldots i_h} A^*_{\varphi^{(k)}_{j_1}}\cdots  A^*_{\varphi^{(k)}_{j_k}}   A_{\tilde{\varphi}^{(h)}_{i_1}}\cdots   A_{\tilde{\varphi}^{(h)}_{i_h}}\label{bgen}\eeq
 for sets of solutions 
 $\{\varphi^{(k)}_{j}\}_{j\in \{1,\ldots, D_k\}}$ and  $\{\tilde{\varphi}^{(h)}_{i}\}_{i \in \{1,\ldots, E_h\}}$ 
 in $Sol[M,g]$ and
where we adopted Einstein's summation convention. Obviously, $N,D_h, E_h$, and the complex  coefficients $c_{k,h}^{j_1\ldots j_k i_1\ldots i_h} $ depend on $b$ and these coefficients are completely symmetric 
separately in the indices $j_r$ and in the indices $i_r$ due to the fact that the elements  $A^*_{\varphi_{j_1}},\cdots,  A^*_{\varphi_{j_k}}$ and 
$A_{\varphi_{i_1}},\cdots,A_{\varphi_{i_h}}$ separately pairwise  commute. 
  Hence, (\ref{AAA2}) entails 
\beq \psi_{\omega_b} = \pi_{\omega}(b) \psi_\omega =
\sum_{k=0}^{N} c_{k0}^{j_1\ldots j_k} a^+_{\psi^{(k)}_{j_1}}\cdots  a^+_{\psi^{(k)}_{j_k}}\psi_\omega\label{bgen2}\eeq
where
$\psi^{(k)}_j := K\varphi^{(k)}_j$ and we henceforth assume that
 not all coefficients $c_{N0}^{j_1\ldots j_N}$ vanish and all  vectors $\psi^{(N)}_j = K\varphi^{(N)}_j$ for $j=1,\ldots, D_N$ do not vanish 
(otherwise the $N$-th addend  in (\ref{bgen2})  would give  no contribution).
Defining
$$d :=    \overline{c_{N0}^{i_1\cdots i_N}} A_{K\varphi^{(N)}_{i_1}} \cdots A_{K\varphi^{(N)}_{i_N}} \in \gA[M,g]$$
(\ref{algQFT}) yields, if 
$c^{j_1\cdots j_N} := c_{N0}^{j_1\cdots j_N}$ and $\psi_j := \psi^{(N)}_j$
\beq\cD_{\omega_b} \ni \pi_{\omega}(d) \pi_\omega(b) \psi_\omega  =\overline{c^{i_1\cdots i_N}}c^{j_1\cdots j_N} a_{\psi_{i_1}} \cdots  a_{\psi_{i_N}}
 a^+_{\psi_{j_1}}\cdots  a^+_{\psi_{j_N}}
   \psi_\omega  \label{Ldc}\:.\eeq
At this juncture we can fix an orthonormal basis $\{e_l\}_{i=1,\ldots, D}$ in the span of the vectors $\{\psi_{j}\}_{j=1,\ldots, D_N}$ and we can rearrange the identity above as
 $$\pi_{\omega}(d) \pi_\omega(b) \psi_\omega  =\overline{c'^{i_1\cdots i_N}}c'^{j_1\cdots j_N} a_{e_{i_1}} \cdots  a_{e_{i_N}}
 a^+_{e_{j_1}}\cdots  a^+_{e_{j_N}}
   \psi_\omega $$
where not all the symmetric coefficients $c'^{j_1\cdots j_N}$ vanish (since they are the components in a new basis  of the non-vanishing tensor defined by
 the components $c^{j_1\cdots j_N}$). Expanding the contractions we find, where $P_N$ denotes the group of permutations of $N$ objects
$$
   \pi_{\omega}(d) \pi_\omega(b) \psi_\omega = \sum_{\sigma \in P_N} \overline{c'^{i_1\cdots i_N}}c'^{j_1\cdots j_N} \delta_{i_1j_{\sigma(1)}}\cdots \delta_{i_N j_{\sigma(N)}}\psi_\omega
$$ $$ =\sum_{\sigma \in P_N} \overline{c'^{i_1\cdots i_N}}c'^{j_{\sigma^{-1}(1)}\cdots j_{\sigma^{-1}(N)}} \delta_{i_1j_1}\cdots \delta_{i_N j_N}\psi_\omega
=\sum_{\sigma \in P_N} \overline{c'^{i_1\cdots i_N}}c'^{j_1\cdots j_N} \delta_{i_1j_1}\cdots \delta_{i_N j_N}\psi_\omega\:.
$$
(In the second line, $c'^{j_{\sigma^{-1}(1)}\cdots j_{\sigma^{-1}(N)}}= c'^{j_1\cdots j_N} $ arises from  the symmetry of the coefficients.)
That is
\beq \pi_{\omega}(d) \pi_\omega(b) \psi_\omega  = \left(N! \sum_{j_1, \ldots, j_N =1}^{D} |c'^{j_1\cdots c_N}|^2\right) \psi_\omega \in \cD_{\omega_b}\:.\label{Fpsi}\eeq
Since  the coefficient in front of $\psi_\omega$ in (\ref{Fpsi}) does not vanish,  then  $\psi_\omega \in \cD_{\omega_b}$ as wanted.
\end{proof}

\noindent {\bf Proof of Proposition \ref{prop1}}.
Let $B$ be a generalized symmetric extension in the Hilbert space $\sK$ of the selfadjoint operator $A$ in the Hilbert space $\sH$ with $\sH\subset \sK$.
The closed symmetric operator $B'= \overline{B}$ in $\mathsf{K}$ extends $A$.
In view of Proposition \ref{priopevidente}-\ref{propSR}, since
$B'\spa\rest_{D(A)} = A$
is selfadjoint on $\mathsf{H}$, then
$\mathsf{H}$ reduces $B'$ and
$\overline{B} = \overline{B'\spa\rest_{D(A)}} \oplus  B''$,
where the two addends are symmetric operators respectively on 
$\mathsf{H}$ and $\mathsf{H}^\perp$.
Requirement (iii) in Definition \ref{Definition: generalized extension of symmetric operator} imposes that $\mathsf{H}^\perp = \{0\}$, so that $\mathsf{K}= \mathsf{H}$ and $B$ is a 
standard symmetric extension of the selfadjoint operator $A$ which entails $B=A$. \hfill $\Box$\\

\noindent {\bf Proof of Theorem \ref{Theorem: uniqueness of POVM associated with maximally symmetric operator}}.
(a) and (b).  If $A$ is symmetric,  for each generalized selfadjoint extension $B$ as in Definition \ref{Definition: generalized extension of symmetric operator} (they exist in view of Theorem \ref{theoremkindII} and, if $A$ is selfadjoint, $B:=A$), one can define a corresponding 
normalized POVM $Q^{(A)}\colon\mathscr{B}(\mathbb{R})\to\mathfrak{B}(\mathsf{H})$ as follows.
	Let $P\colon\mathscr{B}(\mathbb{R})\to\cL(\mathsf{K})$ be the unique PVM associated with $B$ 
\begin{equation}\label{defB} B := \int_{\mathbb{R}}\lambda dP(\lambda)\,,\end{equation}
(see, e.g. \cite[Thm. 9.13]{Moretti2017}) and let $P_{\mathsf{H}}\in \cL(\mathsf{K})$ denote the orthogonal projection onto
 $\mathsf{H}$ viewed as a closed subspace of $\mathsf{K}$.
	A normalized POVM $Q^{(A)}\colon\mathscr{B}(\mathbb{R})\to\mathfrak{B}(\mathsf{H})$ is then defined by setting
	\begin{align}\label{Equation: POVM associated with generalized extension of symmetric operator}
		Q^{(A)}(E):=P_{\mathsf{H}}P(E)\spa\rest_{\mathsf{H}}\qquad\forall E\in\mathscr{B}(\mathbb{R})\,.
	\end{align}
	The POVM $Q^{(A)}$ is linked to $A$ by the following identities as the reader easily proves from standard spectral theory:
	\begin{align}\label{Equation: relations between POVM and symmetric operator}
		\langle\psi|A\varphi\rangle=\int_{\mathbb{R}}\lambda dQ^{(A)}_{\psi,\varphi}(\lambda)\:, \qquad
		\|A\varphi\|^2=\int_{\mathbb{R}}\lambda^2dQ^{(A)}_{\varphi,\varphi}(\lambda)\,,\qquad\forall\psi\in\mathsf{H},\varphi\in D(A)\,,
	\end{align}
	The above relation implies the following facts, where $\lambda^k$ henceforth denotes the map 
${\mathbb R} \ni \lambda
 \to \lambda^k \in 
{\mathbb R} $ for $k\in \mathbb{N} := \{0,1,2, \ldots\}$,
	\begin{align}\label{Equation: relation between domain of symmetric operator and $L^2$ function w.r.t POVM}
	A= B\spa\rest_{D(A)}\:,\quad
	D(A)\subset\{\psi\in\mathsf{H}|\; \lambda \in L^2(\mathbb{R},Q^{(A)}_{\psi,\psi})\}&=
	\{\psi\in\mathsf{H}|\;\lambda\in L^2(\mathbb{R},P_{\psi,\psi})\}\nonumber \\
	&= D(B)\cap\mathsf{H}\subset D(B)\,,
	\end{align}
the second equality arising from the identity
	\begin{align*}
		\int_{\mathbb{R}}\lambda^2dP_{\psi,\psi}(\lambda)=
		\int_{\mathbb{R}}\lambda^2dQ^{(A)}_{\psi,\psi}(\lambda)\,\qquad\forall\psi\in\mathsf{H}\,,
	\end{align*}
	which descends from equation \eqref{Equation: POVM associated with generalized extension of symmetric operator}.
	\cite[Thm. 2, p. 129 Vol II]{Akniezer-Glazman-93} proves that every POVM (it is equivalent to a {\em spectral function} used therein, see Remark \ref{remequ}) satisfying (\ref{Equation: relations between POVM and symmetric operator}) arises from a PVM of a generalized selfadjoint extension $B$ of $A$
as above, so that (\ref{Equation: relation between domain of symmetric operator and $L^2$ function w.r.t POVM}) are satisfied. The proof of (a) and (b) is over.\\
\noindent (c) If the selfadjoint operator $B$ extending $A$ as in (b) is a standard extension of $A$, then $\sH=\sK$ so that $P_{\sK}=I$ and $Q^{(A)}=P$ so that $Q^{(A)}$ is a PVM.
If $Q^{(A)}$ is a PVM, $B := \int_\bR \lambda dQ^{(A)}(\lambda)$ with domain $D(B)= \{\psi \in \sH \:|\: \int_\bR \lambda^2 dQ^{(A)}_{\psi,\psi}(\lambda) <+\infty\}$  is a standard 
 selfadjoint extension of $A$ and the associated trivial dilation triple $\sK:=\sH$, $P_\sH:=I$, $P:=Q^{(A)}$ trivially generates $Q^{(A)}$ as in (b).\\
\noindent (d) Referring to the proof of (a) and (b) above, 	for generalized selfadjoint extensions $B$ of  kind $II$, and  this choice for $B$ is always feasible when $A$ is closed in view of  Theorem \ref{theoremkindII},  we have $D(A)=D(B)\cap\mathsf{H}$ and therefore
\begin{align}\label{Equation: relation 2 between domain of symmetric operator and $L^2$ function w.r.t POVM}
	A= B\spa\rest_{D(B) \cap \mathsf{H}}\:,\quad
	D(A)= \{\psi\in\mathsf{H}|\; \lambda \in L^2(\mathbb{R},Q^{(A)}_{\psi,\psi})\}
	&=\{\psi\in\mathsf{H}|\;\lambda\in L^2(\mathbb{R},P_{\psi,\psi})\}\nonumber \\
	&= D(B)\cap\mathsf{H}\:.
\end{align}
 Since (\ref{Equation: relation between domain of symmetric operator and $L^2$ function w.r.t POVM}) 
is valid for every POVM satisfying (\ref{Equation: relations between POVM and symmetric operator}), we have that  the identity 
$D(A)=\{\psi\in\mathsf{H}|\;\lambda\in L^2(\mathbb{R},P_{\psi,\psi})\}$ holds if and only if $A= B\spa\rest_{D(B) \cap \mathsf{H}}$, namely
$B$ is an extension of type $II$ of $A$.\\
(e) and (f). They are established in \cite[Thm. 2, p. 135 Vol II]{Akniezer-Glazman-93}, taking Theorem \ref{theoremkindII} into account and (d) above.
\hfill $\Box$\\

\noindent {\bf Proof of Theorem \ref{TheoremPOVMtoA}}.
	Consider Naimark's dilation triple of $Q$,  $(\mathsf{K}, P_{\mathsf{H}}, P)$ and define the selfadjoint operator 
	$B := \int_{\mathbb{R}}\lambda dP(\lambda)$.
	By hypothesis $D(A^{(Q)})= D(B)\cap \mathsf{H}$.
	Since $D(B)$ is a subspace of $\mathsf{K}$, it also holds that  $D(A^{(Q)})$ is a subspace of $\mathsf{H}$ proving (a).
	A well-known counterexample due to Naimark \cite{Akniezer-Glazman-93} proves that, in some cases, $D(A^{(Q)}) =\{0\}$ though $Q$ is not trivial.
	It is  clear that, if an operator $A^{(Q)}: D(A^{(Q)}) \to \mathsf{H}$ satisfies (\ref{eqfixA}) then it is unique due to the arbitrariness of $\varphi \in \mathsf{H}$.
	So we prove (b) and (d) simultaneously just checking that $P_{\mathsf{H}} \int_{\mathbb{R}} \lambda dP(\lambda)\spa\rest_{D(A^{(Q)})}$ satisfies  (\ref{eqfixA}).
	The proof is trivial since, from standard properties of the integral of a PVM, if $\psi \in D(A^{(Q)})\subset D(B)$ and $\varphi \in \mathsf{H}$, then
	\begin{align*}
		\left\langle \varphi \left| P_{\mathsf{H}} \int_{\mathbb{R}} \lambda dP(\lambda)\psi \right. \right\rangle=
		\left\langle  P_{\mathsf{H}}\varphi \left| \int_{\mathbb{R}} \lambda dP(\lambda)\psi \right. \right\rangle=
		\int_{\mathbb{R}}\lambda dP_{P_{\mathsf{H}}\varphi, \psi} (\lambda)=
		\int_{\mathbb{R}}\lambda dQ_{\varphi, \psi}(\lambda)\,.
	\end{align*}
	(c)  immediately arises with the same argument taking advantage of the fact that $B=B^*$ and  $P_{\mathsf{H}} \psi = \psi$ if $\psi \in \mathsf{H}$.
	Regarding (e), it is sufficient observing that  (see, e.g. \cite{Moretti2017}), if $\psi \in D(B)$, then $||B\psi||^2 = \int_{\mathbb{R}} \lambda^2 dP_{\psi,\psi}(\lambda)$ and next taking advantage of $D(A^{(Q)})= D(B)\cap \mathsf{H}$ and  (d) observing  in particular that $P_{\mathsf{H}}^*  P_{\mathsf{H}}\varphi =  P_{\mathsf{H}}  P_{\mathsf{H}}\varphi = P_{\mathsf{H}}\varphi =\varphi$ if $\varphi \in \mathsf{H}$. The fact that $A^{(Q)}$ is closed immediately follows from the fact that $B$ is closed (because selfadjoint) and $A= B|_{D(B) \cap \mathsf{H}}$ where 
$\mathsf{H}$ is closed. \hfill $\Box$\\
 
\noindent {\bf Proof of Proposition \ref{technicalproposition}}.
What we have to prove is just  that the right-hand side of  (\ref{scalarprodfromp}) is a positive semi definite  Hermitian scalar product over $X$.
Indeed, with that definition of the scalar 
product $(\: |\: )_p$, the identity   $p(x) = \sqrt{(x|x)_p}$ is valid (see below) and this fact automatically  implies that   $p$ is a seminorm. Uniqueness of the scalar product generating a seminorm $p$ 
is a trivial consequence of the polarization identity. Let us prove that $(\: |\: )_p$ defined in  (\ref{scalarprodfromp})
is a positive semidefinite  Hermitian 
scalar product  over $X$. From the definition of $(\:|\:)_p$ and property (i)  of $p\colon X\to [0,+\infty)$, it is trivial business  to prove the following facts per direct inspection:
\begin{itemize}
\item[1.] $(x|x)_p = p(x)^2 \geq 0$,
\item[2.] $(x|0)_p=0$,
\item[3.] $(x|y)_p= \overline{(y|x)_p}$,
\item[4.] $(x|iy)_p = i(x|y)_p$.
\end{itemize}
With these identities, we can also prove 
\begin{itemize}
\item[5.] $(x|y+z)_p = (x|y)_p + (x|z)_p$,
\item[6.] $(x|-y)_p = -(x|y)_p$,
\end{itemize}
by exploiting property (ii) of $p$. Actually,  (6) immediately arises form (2) and (5). We will prove (5) as the last step of this proof.
Iterating property (5), we easily obtain $(x|ny)_p = n(x|y)_p $ for every $n\in\mathbb{N}$, so that $(1/n) (x|z)_p = (x|(1/n) z)_p $
 when replacing $ny$ for $z$.  As a consequence, $\lambda (x|y)_p = (x|\lambda y)_p$ if $\lambda \in \mathbb{Q}$. This results extends to $\lambda \in \mathbb{R}$ if the map $\mathbb{R} \ni \lambda \mapsto  (x|\lambda y)_p$ is right-continuous because, for every $x\in [0,+\infty)$ there is a decreasing sequence of rationals tending to it. The definition 
 (\ref{scalarprodfromp})  of  $(x|y)_p$ proves that map is in fact right-continuous since property (iii) of $p$ implies that, if $\lambda_0 \in [0, +\infty)$,
$$p(x + \lambda i^k y) = p( (x+ \lambda_0 i^ky) + (\lambda-\lambda_0) i^k y) \to  p(x+ \lambda_0 i^ky) \quad \mbox{for $\mathbb{R}\ni \lambda \to \lambda_0^+$}\:.$$
We can therefore add the further property
\begin{itemize}
\item[7.] $\lambda (x|y)_p = (x|\lambda y)_p$ if $\lambda \in [0,+\infty)$.
\end{itemize}
Collecting properties (1), (3), (5), (7), (6), (4) together,  we obtain that  $X\times X \ni (x,y) \mapsto (x|y)_p$ defined as in (\ref{scalarprodfromp}) is a positive semi definite Hermitian scalar product over $X$ whose associated seminorm is $p$ as wanted.\\
To conclude the proof, we establish property (5) from requirement  (ii) on $p$.
$$4(x|y+z)_p = \sum_{k=0}^3(-i)^k p(x+i^k(y + z))^2=  \sum_{k=0}^3(-i)^k p((x/2+i^ky) + (x/2 + i^k z))^2\:.$$
Since, from (i) of the requirements on $p$,
$$\sum_{k=0}^3(-i)^k p((x/2+i^ky) - (x/2 + i^k z))^2 = \sum_{k=0}^3(-i)^k p( i^k (y-z))^2 = \sum_{k=0}^3(-i)^k p( (y-z))^2 =0,$$
we can re-arrange the found decomposition of $4(x|y+z)_p$ as
$$4(x|y+z)_p =   \sum_{k=0}^3(-i)^k \left[ p((x/2+i^ky) + (x/2 + i^k z))^2 -  p((x/2+i^ky) - (x/2 + i^k z))^2\right]\:.$$
 From (ii) of the requirements on $p$,
$$4(x|y+z)_p =   \sum_{k=0}^3(-i)^k  \left[ 2p(x/2+i^ky)^2 +  2p(x/2+i^kz)^2\right] = 8(x/2|y)_p + 8(x/2|z)_p\:.$$
The special case $z=0$ and (2) yield  $2(x/2|y)_p= (x|y)_p$ which, exploited again above, yields the wanted result  (5) 
$4(x|y+z)_p  = 4(x|y)_p + 4(x|z)_p$. \hfill  $\Box$

\end{document}